\documentclass[12pt]{article}

\usepackage{float}
\usepackage{authblk}
\usepackage{natbib}
\usepackage{amsmath,amssymb,amstext,amsthm,amsfonts}
\usepackage{mathtools}
\usepackage{color}
\usepackage{xcolor}
\usepackage{multirow}
\usepackage{mathrsfs}
\usepackage{listings}
\usepackage{graphicx}
\usepackage[hyphens]{url}
\usepackage{enumitem}
\usepackage{subcaption}
\mathtoolsset{showonlyrefs}

\usepackage[hidelinks]{hyperref}
\hypersetup{
	colorlinks,
	linkcolor={blue},
	citecolor={blue},
	urlcolor={blue}
}

\newtheorem{thm}{Theorem}
\newtheorem{prp}{Proposition}
\newtheorem{lem}{Lemma}

\newtheorem{ass}{Assumption}

\theoremstyle{definition}

\newtheorem{exm}{Example}
\newtheorem{rmk}{Remark}

\newcommand{\single}{\spacingset{1}}

\DeclareMathOperator*{\argmin}{arg\,min}

\def\spacingset#1{%
  \renewcommand{\baselinestretch}{#1}%
  \small\normalsize
}

\title{\bf Fr\'echet Kriging for Metric Space-Valued Spatial Data}
\author[1,2,*]{Ryo Okano}
\author[3,*]{Daisuke Kurisu}
\author[4]{Hans-Georg M\"uller}
\affil[1]{Graduate School of Economics, Hitotsubashi University}
\affil[2]{RIKEN Center for Advanced Intelligence Project}
\affil[3]{Faculty of Economics, The University of Tokyo}
\affil[4]{Department of Statistics and Data Science, University of California, Davis}

\footnotetext[1]{The first two authors contributed equally to this work.}
\footnotetext[2]{R.O. was partially supported by JSPS Grant-in-Aid for JSPS Fellows (25KJ0140). D.K. was partially supported by JSPS KAKENHI Grant Numbers 25K00624 and 26K00323. H.G.M. was partially supported by NSF grant DMS2310450. }

\begin{document}

\spacingset{1}

\maketitle

\begin{abstract}
Spatial prediction of non-Euclidean data, such as probability distributions, 
symmetric positive definite matrices, and networks, arises in many modern 
applications. Such data are naturally modeled as elements of general metric 
spaces, where classical kriging methods that rely on vector space structure 
are no longer directly applicable. We introduce Fr\'echet kriging, a 
general framework for spatial prediction of metric space-valued data. 
Fr\'echet kriging replaces the weighted average in classical ordinary 
kriging with a weighted Fr\'echet mean, while the weights are determined 
by characterizing the spatial dependence structure through a generalized 
variogram. This construction allows the proposed method to preserve the 
intrinsic geometry of the response space without requiring an embedding 
into a linear space. We establish theoretical properties of the proposed predictor, including
a surrogate optimality property and consistency under variogram
misspecification for a fixed working variogram under suitable regularity
conditions.
Simulation studies 
for univariate distributions and symmetric positive definite matrices 
demonstrate favorable predictive performance compared with existing 
approaches. We further illustrate the practical utility of the proposed 
framework through applications to exceedance probabilities of daily average 
temperatures, joint probability densities of daily minimum and maximum 
temperatures, and inter-industry monetary flow networks.
\end{abstract}

\noindent
{\it Keywords:} Fr\'echet mean, kriging, metric statistics, non-Euclidean data, spatial data

\newpage

\newcommand{\double}{\spacingset{1.9}}
\double

\section{Introduction}
Kriging is one of the most widely used methods for spatial prediction and has become a fundamental tool in spatial statistics \citep{cressie2015statistics, chiles1999geostatistics, stein1999interpolation}. Given observations at a set of spatial locations, kriging predicts the response at an unobserved location by exploiting the spatial dependence among the observations. 
Far beyond the mining context in which it was originally developed, kriging is now used in many areas of application, including public health and environmental science \citep{waller2004applied,webster2007geostatistics, wackernagel2013multivariate}. 
Among various forms of kriging, ordinary kriging is one of the most commonly used. It assumes an unknown constant mean and a stationary spatial dependence structure and constructs the best linear unbiased predictor (BLUP) by minimizing the prediction variance subject to an unbiasedness constraint \citep{matheron1963principles}.

Although classical kriging has traditionally been developed for Euclidean responses, some applications involve functions as responses. This has motivated the development of extensions of the classical kriging framework to accommodate functional data \citep{goulard1993geostatistical, nerini2010cokriging, giraldo2010continuous, giraldo2011ordinary, menafoglio2013universal, mateu2022introduction}. 
In addition, 
\cite{menafoglio2014kriging} studied spatial prediction for distributional responses by using the Bayes Hilbert space framework.
These methods are formulated in a Hilbert space, where linear operations such as addition and scalar multiplication are well defined.
See \cite{menafoglio2017statistical} for a review of kriging methods for Hilbert space-valued responses.

In modern applications, one increasingly encounters complex-structured data taking values in a metric space, such as distributions \citep{petersen2022modeling}, symmetric positive definite (SPD) matrices \citep{dryden2009non}, networks \citep{zhou2022network}, compositional data \citep{scea:11}, and set-valued data \citep{kuri:25}. Such data are inherently non-Euclidean and therefore
require methods that respect their geometry.
To accommodate these settings, \cite{pigoli2016kriging} proposed a kriging method for manifold-valued data, with particular emphasis on SPD matrices. 
Their method adopts an extrinsic approach, in which the data are mapped to a tangent space via the logarithmic map and kriging is subsequently performed on the resulting representations. Although this provides a convenient linear representation, tangent space coordinates generally do not preserve all pairwise geodesic distances on a curved manifold, so the resulting prediction may depend on the choice of reference point and the curvature of the underlying space.
\citet{gouet2015geodesic} developed geodesic kriging for responses in
Hadamard spaces, with particular attention to the univariate Wasserstein
space. \citet{balzanella2020spatial}, \citet{gossard2025kriging}, and
\citet{qiu2026spatial} also developed kriging methods for univariate
distributions equipped with the Wasserstein metric. These approaches
exploit geometric or linear structures available in Hadamard or
univariate Wasserstein spaces and therefore do not provide a
unified formulation for arbitrary metric space-valued responses.
Network-valued kriging was also briefly discussed by
\citet{severn2021nonparametric}, who mentioned a linear kriging predictor
for graph Laplacians with weights chosen by minimizing the mean squared
prediction error as a future research project without providing further details. 

In this paper, we propose \emph{Fr\'echet kriging}, a generalization of ordinary kriging for spatial prediction with metric space-valued data. The key idea is to replace the weighted linear combination of responses used in ordinary kriging with a weighted Fr\'echet mean \citep{Frechet1948}. The weights are determined by a generalized variogram that characterizes spatial dependence through pairwise distances between responses. Since both the variogram-based characterization of the spatial dependence and the construction of the predictor rely only on the distances between responses, Fr\'echet kriging provides a unified framework for spatial prediction across a wide range of complex response spaces without relying on structures specific to the underlying metric space. Moreover, the proposed method operates directly on the metric space without embedding the responses into a linear space, thereby preserving the intrinsic geometry of the response space while accounting for spatial dependence. The proposed framework can also accommodate complex responses for which kriging methodology remains relatively underdeveloped, including multivariate probability distributions and networks. We establish theoretical properties of the proposed predictor, including a surrogate optimality property and consistency under variogram misspecification. Specifically, we show that the prediction risk converges to zero even when the working variogram is misspecified. 
The use of weighted Fr\'echet means for spatial interpolation
has been considered in earlier work.
In particular, \citet[Section~6.1]{dryden2009non}
considered weighted Fr\'echet interpolation of covariance
matrices and mentioned the use of kriging-based weights.
Furthermore, \citet[Section~6]{halvorsen2012kriging}
suggested extending linear kriging by replacing the weighted
linear combination with a weighted Fr\'echet mean.
However, these works did not develop a general metric-space
kriging framework with an explicit variogram-based
weight construction and theoretical guarantees for
spatial prediction.
Our contribution is to develop such a framework,
together with a concrete weight-selection procedure
and theoretical results establishing its prediction properties.

Several regression methods have been developed for responses taking values in general metric spaces, including global and local Fr\'echet regression \citep{petersen2019frechet} and subsequent developments involving sufficient dimension reduction \citep{ying2022frechet, zhang2024dimension}, single-index models \citep{bhattacharjee2023single, ghosal2023frechet}, and deep learning \citep{iao2025deep, zhou2026end}.
 These methods provide a general framework for predicting metric space-valued responses from Euclidean covariates. However, they are primarily developed for independent observations and do not explicitly account for dependence among responses. In spatial settings, such dependence is an essential source of information for prediction and should therefore be incorporated into the prediction procedure. In contrast, Fr\'echet kriging explicitly exploits the dependence structure among metric space-valued responses to construct spatial predictions. 

The remainder of the paper is organized as follows. In Section \ref{sec:method}, we introduce a stationarity assumption for metric space-valued spatial processes and present the proposed Fr\'echet kriging method under this assumption. Section \ref{sec:theory} establishes the theoretical properties of the proposed method. Simulation studies for SPD matrices and probability distributions are presented in Section \ref{sec:simulation}. Section \ref{sec:real_data} demonstrates the practical utility of Fr\'echet kriging through three real data applications involving exceedance probabilities of daily average temperatures in Iowa, joint probability densities of daily minimum and maximum temperatures in Iowa, and inter-industry monetary flow networks in Germany and the Benelux countries. 
Concluding remarks are provided in Section \ref{sec:conc_rmk}. 
Auxiliary results and proofs of the theoretical results are deferred to the Supplementary Material.

\section{Methodology}\label{sec:method}
\subsection{Setting}
Let $(\Omega,d)$ be a metric space. 
Suppose we observe pairs $(x_i,Y_i)$, $i=1,\ldots,n$, where
$x_i\in\mathbb{R}^p$ is a deterministic spatial location and
$Y_i\in\Omega$ is the corresponding response.
Throughout, we assume that the sampling locations
$x_1,\ldots,x_n$ are pairwise distinct.
We further assume that the observations are generated according to the model
\begin{equation}
    Y_i=Y(x_i),\quad Y(x) = \mathcal{P}_x(S(x)), \quad x \in \mathbb{R}^p.
    \label{eq:basic_model}
\end{equation}
Here, $\{S(x):x\in\mathbb R^p\}$ is a latent $\Omega$-valued stochastic
process, and, for each $x\in\mathbb R^p$,
$\mathcal{P}_x:\operatorname{supp}\{S(x)\}\to\Omega$ is a random
perturbation map, where $\operatorname{supp}\{S(x)\}$ denotes the support
of the distribution of $S(x)$, such that
\begin{equation}
    \omega
    =
    \arg\min_{z\in\Omega}
    E\left[
        d^2\{\mathcal{P}_x(\omega),z\}
    \right] \quad \text{for every $\omega \in \operatorname{supp}\{S(x)\}$},
    \label{eq:pertmat_mean}
\end{equation}
and
\begin{equation}
    E\left[
        d^2\{\mathcal{P}_x(\omega),\omega\}
    \right]
    =
    \tau^2 \quad \text{for every $\omega \in \operatorname{supp}\{S(x)\}$}.
    \label{eq:pertmat_var}
\end{equation}
This perturbation map generalizes an additive error $\varepsilon(x)$ in a Euclidean setting satisfying $E[\varepsilon(x)]=0$ and
$\operatorname{Var}\{\varepsilon(x)\}=\tau^2$ \citep{chen2022uniform}.
We assume that the perturbation maps $\{\mathcal{P}_x:x\in \mathbb{R}^p\}$ are mutually independent and independent of the latent process $S$.
The parameter $\tau^2$ is referred to as the \emph{nugget variance} \citep{cressie2015statistics}.

Analogous to classical ordinary kriging, we impose a constant mean assumption on the latent process $S$ and the observation process $Y$. For each $x \in \mathbb{R}^p$, the Fr\'echet means \citep{Frechet1948} of $S(x)$ and $Y(x)$ are defined as  
\begin{equation}
    \mu_{S}(x) = \argmin_{\omega \in \Omega} E[d^2(S(x), \omega)], 
    \quad 
    \mu_{Y}(x) = \argmin_{\omega \in \Omega} E[d^2(Y(x), \omega)].
\end{equation}

\begin{ass}\label{ass:const_mean}
    The processes $S$ and $Y$ have constant Fr\'echet means $\mu_S(x) = \mu_S$ and $\mu_Y(x) = \mu_Y$, respectively.
\end{ass}

Furthermore, analogous to classical ordinary kriging, we also assume that the processes have constant variances and stationary and isotropic spatial dependence structures. Under  Assumption \ref{ass:const_mean}, the Fr\'echet variances \citep{Frechet1948} of $S(x)$ and $Y(x)$ are defined as 
\begin{equation}
    \sigma^2_{S}(x) = E[d^2(S(x), \mu_{S})], \quad 
    \sigma^2_{Y}(x) = E[d^2(Y(x), \mu_{Y})].
\end{equation}
To characterize the spatial dependence structure, we consider the variogram \citep{cressie2015statistics}.
 When $\Omega = \mathbb{R}$ and Assumption \ref{ass:const_mean} holds, the variogram of $S$ is the function $V_S(x, x') = E[d_E^2(S(x), S(x'))]/2$, where $d_E$ is the Euclidean distance. Motivated by this representation, we define the variograms of $S$ and $Y$ in the general metric-space setting as
\begin{equation}
    V_S(x, x') = \frac{1}{2}E[d^2(S(x), S(x'))], \quad 
    V_Y(x, x') = \frac{1}{2}E[d^2(Y(x), Y(x'))].
    \label{eq:vario_sy}
\end{equation}

\begin{ass}\label{ass:second_stationary}
    The processes $S$ and $Y$ have constant Fr\'echet variances $\sigma_S^2(x) = \sigma_S^2$ and $\sigma_Y^2(x) = \sigma_Y^2$, respectively. In addition, their variograms are of the form  $V_S(x, x') = \varphi_S(\|x-x'\|)$ and  $V_Y(x, x') = \varphi_Y(\|x-x'\|)$, where $\varphi_S$ and $\varphi_Y$ are functions from $[0, \infty)$ to $[0, \infty)$. 
\end{ass}

We impose Assumptions~\ref{ass:const_mean} and \ref{ass:second_stationary} throughout the paper. In Section~\ref{sec:simulation}, we provide several constructions of $\Omega$-valued stochastic processes $\{S(x): x \in \mathbb{R}^p\}$ and perturbation maps $\{\mathcal{P}_x: x \in \mathbb{R}^p\}$ for which the resulting latent and observation processes satisfy these assumptions.
Note that $\varphi_S(0) = \varphi_Y(0) = 0$ by the definition of the variogram. 

Before introducing the proposed method, we present several examples of metric spaces $(\Omega, d)$ that fit within the proposed framework. In addition to standard Euclidean spaces, $\Omega$ may also be a Hilbert space. In particular, functional data defined on a compact domain $\mathcal T$ can be regarded as elements of the Hilbert space $L^2(\mathcal T)$ of square-integrable functions with the $L^2$-metric $d_{L^2}(f,g)=(\int_\mathcal{T} \{f(t)-g(t)\}^2dt)^{1/2}$ and therefore fall within our framework.
Additional examples are provided below.

\begin{exm}[Univariate distributions]\label{exm:unidist}
Let $I$ be a closed interval, and let $\mathcal{P}_2(I)$ denote the space of probability distributions on $I$ with finite second moments. A common metric on $\mathcal{P}_2(I)$ is the 2-Wasserstein metric \citep{panaretos2020invitation}, defined by
\begin{equation}
d_W(\nu_1, \nu_2)
=
\left(
\int_0^1
\{F_1^{-1}(u)-F_2^{-1}(u)\}^2du
\right)^{1/2},
\end{equation}
where $F_1^{-1}$ and $F_2^{-1}$ denote the quantile functions of $\nu_1$ and $\nu_2$.
The resulting metric space is called 2-Wasserstein space of univariate distributions. 
\end{exm}

\begin{exm}[Positive (semi)definite matrices]\label{exm:spd}
    Consider the space $\mathrm{Sym}_{q}^{++}$ of $q \times q$ symmetric positive definite (SPD) matrices. 
    This space can be equipped with the power metric \citep{dryden2009non},  the Log-Euclidean metric \citep{arsigny2007geometric} and the Log-Cholesky metric \citep{lin2019riemannian}. 
    Let $\mathrm{Sym}_{q}^{+}$ denote the space of $q \times q$ symmetric positive semidefinite (SPSD) matrices. 
    This space can be equipped with the power metric. The power metric with exponent $1/2$ is also known as the square-root metric. 
\end{exm}

\begin{exm}[Multivariate distributions]\label{exm:multdist}
Let $D \subset \mathbb{R}^k$, and let $\mathcal{P}(D)$ denote the space of probability distributions on $D$ that have density functions. A common metric on $\mathcal{P}(D)$ is the Fisher--Rao metric \citep{zhu2024spherical}, defined as the geodesic distance between the square-root density functions:
\begin{equation}
d_{\mathrm{FR}}(\nu_1,\nu_2)
=
\arccos \left\{
\int_D \sqrt{f_1(x)f_2(x)}dx
\right\},
\end{equation}
where $f_1$ and $f_2$ denote the density functions of $\nu_1$ and $\nu_2$.
\end{exm}

\begin{exm}[Networks]\label{exm:network}
Consider the space of simple, undirected, weighted networks with a fixed number of nodes and bounded edge weights. Such a network can be uniquely represented by its graph Laplacian matrix. Accordingly, the network space can be identified with the corresponding space of graph Laplacians, which can be equipped with the Frobenius metric \citep{kola:14, severn2022manifold, zhou2022network}.
\end{exm}

\subsection{Fr\'echet Kriging}
Let $x_0 \in \mathbb{R}^p$ be a new location. Our goal is to predict the value of the latent process $S(x_0)$ based on the observations $(x_i, Y_i), i=1, \ldots, n$ generated from the model \eqref{eq:basic_model}. 

For now, assume that the  variogram function $\varphi_Y$ of the observation process $Y$ is known.
Before introducing the proposed method for general metric space-valued data, we first revisit the classical Euclidean case, which serves as the starting point for our generalization.
Suppose $\Omega = \mathbb{R}$ and the noise is additive, that is, $Y(x) = S(x) + \varepsilon(x)$, where $E[\varepsilon(x)] = 0$ and $\text{Var}(\varepsilon(x)) = \tau^2$. In this case, the ordinary kriging predictor of $S(x_0)$ is defined as a weighted average of the responses
\begin{equation}
    \tilde{S}(x_0)
    =
    \sum_{i=1}^n \lambda_i(x_0)Y_i, 
\end{equation}
where the weights $(\lambda_i(x_0))_{i=1}^n$ are the solution to 
\begin{equation}
    \min_{\lambda \in \mathbb{R}^n} E\left[\left(\sum_{i=1}^n\lambda_i Y_i - S(x_0)\right)^2\right] \quad \text{subject to} 
    \quad 
    \sum_{i=1}^n \lambda_i = 1.
\end{equation}
 In other words, the predictor $\tilde{S}(x_0)$ is the best linear unbiased predictor (BLUP) of $S(x_0)$; the data-adaptive choice of the weights $\lambda_i$ is discussed below.
 
It can be shown that the vector of weights $\lambda(x_0) = (\lambda_i(x_0))_{i=1}^n$ can be expressed solely in terms of the variogram function $\varphi_Y$ as
\begin{equation}
\lambda(x_0)
=
\begin{pmatrix}
I_n & \mathbf{0}_n
\end{pmatrix}
\begin{pmatrix}
\Gamma & \mathbf{1}_n \\
\mathbf{1}_n^\top & 0
\end{pmatrix}^{-1}
\begin{pmatrix}
\gamma(x_0) \\
1
\end{pmatrix},
\label{eq:weights_vario}
\end{equation}
where $\Gamma$ is an $n \times n$ matrix whose $(i, j)$ element is given by $\varphi_Y(\|x_i - x_j\|)$, 
$\gamma(x_0)$ is an $n$-dimensional vector whose $i$-th element is given by $\varphi_Y(\|x_i - x_0\|)$, and $\mathbf{1}_n$ and $\mathbf{0}_n$ are 
the $n$-dimensional vectors of ones and zeros, respectively, and 
$I_n$ is the $n \times n$ identity matrix  \citep[Chapter~3]{cressie2015statistics}. 
Throughout the paper, we assume that the bordered ordinary-kriging matrix
appearing in \eqref{eq:weights_vario} is nonsingular, so that the inverse in
\eqref{eq:weights_vario} exists.

Since the weights satisfy $\sum_{i=1}^n \lambda_i(x_0)=1$, 
the ordinary kriging predictor can equivalently be characterized as
\begin{equation}
    \tilde{S}(x_0) = \argmin_{y \in \mathbb{R}}\sum_{i=1}^n \lambda_i(x_0) d_E^2(Y_i, y),
\end{equation}
where $d_E$ is the Euclidean metric.
This motivates us to define a predictor for a general metric space $\Omega$ as 
\begin{equation}
    \tilde{S}(x_0) = \argmin_{\omega \in \Omega} \sum_{i=1}^n \lambda_i(x_0) d^2(Y_i, \omega),
    \label{eq:fk}
\end{equation}
where the vector of weights $\lambda(x_0)$ is given by \eqref{eq:weights_vario}. 
Note that both the variogram used to 
determine the kriging weights and the weighted least-squares 
characterization of the predictor are generalized using the metric $d$.
We refer to this predictor as the \emph{Fr\'echet kriging} predictor.

In practice, the variogram function $\varphi_Y$ is unknown and must be estimated from the data. Let $\hat{\varphi}_Y$ denote an estimator of $\varphi_Y$. The specific estimation procedure is described in Section~\ref{sec:implement}. Let $\hat{\lambda}(x_0) = (\hat{\lambda}_i(x_0))_{i=1}^n$ denote the weight vector obtained from \eqref{eq:weights_vario} by replacing $\varphi_Y$ with the estimated variogram $\hat{\varphi}_Y$. The Fr\'echet kriging predictor based on the estimated variogram is then defined as
\begin{equation}
    \hat{S}(x_0)
    =
    \argmin_{\omega \in \Omega}
    \sum_{i=1}^n
    \hat{\lambda}_i(x_0)
    d^2(Y_i, \omega).
    \label{eq:fk_vario_est}
\end{equation}

Since the kriging weights $\lambda(x_0)$ and $\hat{\lambda}(x_0)$ are not necessarily nonnegative, the existence and uniqueness of the minimizers in \eqref{eq:fk} and \eqref{eq:fk_vario_est} are not guaranteed in general. Throughout, we assume that the minimizers in \eqref{eq:fk} and \eqref{eq:fk_vario_est} exist and are unique. Under Assumptions \ref{ass:isometric} and \ref{ass:convex} introduced in Section \ref{sec:theory}, these conditions are automatically satisfied, as shown in Lemma \ref{lem:alt_rep} in the Supplementary Material.

\begin{rmk}[Connection to kriging for manifold-valued responses]
\label{rmk:manifold}
\citet{pigoli2016kriging} proposed a kriging approach for manifold-valued 
responses, in which the responses are mapped to a tangent space via the 
logarithmic map and kriging is then performed in the tangent space. 
\citet{qiu2026spatial} adopted a similar approach for distributional
responses in the Wasserstein space and additionally performed dimension
reduction in the tangent space.
Although the proposed Fr\'echet kriging generally differs from the
approach of \citet{pigoli2016kriging}, the two approaches coincide when
the logarithmic map preserves distances and the predictor in the tangent space
is modified by metric projection onto the image of the response space
(see Section~\ref{sec:kriging_manifold} of the Supplementary Material
for details).
The distance-preserving property of the logarithmic map holds, for
example, when $\Omega$ is the space of SPD matrices equipped with the
Log-Euclidean metric, the space of SPSD matrices equipped with the
square-root metric \citep[Section~2]{pigoli2016kriging}, or the
Wasserstein space of univariate distributions
\citep{panaretos2020invitation}.
\end{rmk}

\begin{rmk}[Comparison with distribution-valued kriging by \citet{gossard2025kriging}]
\label{rmk:dist}
Let $(\Omega,d)$ be the $2$-Wasserstein space of univariate distributions
(Example~\ref{exm:unidist}), and let $Q_\mu$ denote the quantile function
of a distribution $\mu\in\Omega$.
\citet{gossard2025kriging} proposed a kriging predictor for
$Q_{S(x_0)}$ of the form $\tilde{Q}_{S(x_0)}
    =
    \sum_{i=1}^n
    \overline{\lambda}_i(x_0) Q_{Y_i}$. 
Here, the weights $(\overline{\lambda}_i(x_0))_{i=1}^n$ are determined by
the criterion
\begin{equation}
    \min_{\lambda \in \mathbb{R}^n}
    E\left[
        \int_0^1
        \left\{
            \sum_{i=1}^n \lambda_i Q_{Y_i}(u)
            -
            Q_{S(x_0)}(u)
        \right\}^2
        du
    \right]
    \quad \text{subject to} 
    \quad 
    \sum_{i=1}^n \lambda_i = 1.
\end{equation}
Since $\tilde{Q}_{S(x_0)}$ is not necessarily nondecreasing,
they apply a sorting procedure to enforce monotonicity and use the
resulting function as the predicted quantile function.
As shown in Section~\ref{sec:rmk_dist}, the Fr\'echet kriging predictor
and the predictor of \citet{gossard2025kriging} coincide in the first
step: they use the same kriging weights to construct the same weighted
average of the quantile functions.
However, they differ in how this initial predictor is subsequently
modified. Our method uses a metric projection onto the closed convex
set of quantile functions, whereas \citet{gossard2025kriging} use a
sorting procedure.
\end{rmk}

\section{Theory}\label{sec:theory}
\subsection{Optimality}
We first establish an optimality property of the Fr\'echet kriging predictor \eqref{eq:fk} constructed using the true variogram. This result characterizes the oracle kriging weights as those minimizing a surrogate upper bound on the prediction risk.
 Define the set of weights by
\[
\Lambda_n
=
\left\{\lambda \in \mathbb{R}^n: \sum_{i=1}^n \lambda_i = 1\right\}.
\]
For each $\lambda \in \Lambda_n$, define the corresponding weighted 
Fr\'echet mean predictor of $S(x_0)$ by
\begin{equation}
    \tilde{S}(x_0; \lambda)
=
\argmin_{\omega \in \Omega} \sum_{i=1}^n \lambda_i d^2(Y_i, \omega),
\label{eq:general_predictor}
\end{equation}
and define its prediction risk by 
\begin{equation}
    R(x_0;\lambda) = E[d^2\{{S}(x_0),\tilde{S}(x_0;\lambda)\}].
\end{equation}

To show the optimality property, we impose the following assumption on the metric space $(\Omega, d)$. 

\begin{ass}\label{ass:isometric}
    The metric space $(\Omega, d)$ is such that $d^2$ is of negative type. That is, for any $m \ge1$, $z_1, \ldots, z_m \in \Omega$ and $a_1, \ldots, a_m \in \mathbb{R}$ with $\sum_{i=1}^ma_i = 0$, we have 
    $
        \sum_{i=1}^m\sum_{j=1}^m a_i a_j d^2(z_i, z_j) \le 0.
    $
\end{ass}

Under Assumption \ref{ass:isometric}, by a classical result of \cite{scho:38}, there exists a Hilbert space $\mathcal{H}$ equipped with an inner product
$\langle \cdot, \cdot \rangle_{\mathcal{H}}$, the induced norm
$\|\cdot\|_{\mathcal{H}}$, and an isometric embedding
$\Psi:\Omega\to\mathcal{H}$ such that
$
d(x,y)=\|\Psi(x)-\Psi(y)\|_{\mathcal{H}}
$
for all $x,y\in\Omega$. We further assume the following.

\begin{ass}\label{ass:convex}
    The set $\Psi(\Omega)$ is nonempty, closed, and convex in $\mathcal{H}$.
\end{ass}

Assumption \ref{ass:convex} ensures the existence and uniqueness of projections onto ${\Psi}(\Omega)$ \citep[Chapter 3]{deutsch2001best}. This property is used in the optimality analysis below.

Assumptions \ref{ass:isometric} and \ref{ass:convex} are satisfied by many metric spaces arising in applications. Examples include the 2-Wasserstein space of univariate probability distributions (Example \ref{exm:unidist}), the space of SPD matrices equipped with the Log-Euclidean or Log-Cholesky metric, the space of SPSD matrices equipped with the power metric (Example \ref{exm:spd}), and the space of graph Laplacians equipped with the Frobenius metric (Example \ref{exm:network}).
In contrast, the Fisher--Rao metric for multivariate probability densities (Example~\ref{exm:multdist}) does not satisfy Assumptions~\ref{ass:isometric} and \ref{ass:convex} and is therefore not covered by Proposition~\ref{prp:opt}. It is, however, covered by the consistency result in Theorem~\ref{thm:miss}, introduced in the next subsection, whenever Assumption~\ref{ass:miss} stated there holds.

Under these assumptions, we obtain the following results.

\begin{prp}\label{prp:opt}
Suppose that Assumptions \ref{ass:const_mean}--\ref{ass:convex} hold, and $x_0 \notin\{x_1, \ldots, x_n\}$.
\begin{itemize}
\item[(i)] For any $\lambda \in \Lambda_n$,
\begin{equation}
R(x_0;\lambda)
\le
E\left[
\left\|
\sum_{i=1}^n
\lambda_i \Psi(Y_i)
-
\Psi(S(x_0))
\right\|_{\mathcal H}^2
\right].
\label{eq:risk_bound_surro}
\end{equation}

\item[(ii)] The upper bound in \eqref{eq:risk_bound_surro} is minimized when $\lambda$ is given by \eqref{eq:weights_vario}.
\end{itemize}
\end{prp}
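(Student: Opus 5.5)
For part (i), I would first identify the weighted Fr\'echet mean with a metric projection in $\mathcal H$. Write $\bar\psi_\lambda=\sum_{i=1}^n\lambda_i\Psi(Y_i)\in\mathcal H$. Since $\sum_i\lambda_i=1$, expanding the squares gives, for every $\omega\in\Omega$,
\[
\sum_{i=1}^n\lambda_i d^2(Y_i,\omega)=\sum_{i=1}^n\lambda_i\|\Psi(Y_i)\|_{\mathcal H}^2-\|\bar\psi_\lambda\|_{\mathcal H}^2+\|\Psi(\omega)-\bar\psi_\lambda\|_{\mathcal H}^2 .
\]
This holds even when some weights are negative. The first two terms do not depend on $\omega$. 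Hence $\Psi(\tilde S(x_0;\lambda))=\Pi(\bar\psi_\lambda)$, where $\Pi$ is the metric projection onto $\Psi(\Omega)$. This projection exists and is unique by Assumption \ref{ass:convex}; this is the content of Lemma \ref{lem:alt_rep}, which I would take as given.

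The projection onto a closed convex set is nonexpansive, and $\Psi(S(x_0))\in\Psi(\Omega)$ is a fixed point of $\Pi$. Therefore
\[
d\{S(x_0),\tilde S(x_0;\lambda)\}=\|\Pi(\Psi(S(x_0)))-\Pi(\bar\psi_\lambda)\|_{\mathcal H}\le\|\Psi(S(x_0))-\bar\psi_\lambda\|_{\mathcal H}.
\]
Squaring and taking expectations gives \eqref{eq:risk_bound_surro}.

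For part (ii), I would express the surrogate risk purely through the variograms. Set $z_0=S(x_0)$, $z_i=Y_i$, $a_0=-1$ and $a_i=\lambda_i$, so that $\sum_{j=0}^n a_j=0$. The polarization identity for zero-sum coefficients in a Hilbert space gives
\[
\Big\|\sum_{j=0}^n a_j\Psi(z_j)\Big\|^2=-\frac12\sum_{j,k}a_ja_k d^2(z_j,z_k).
\]
Taking expectations, the surrogate equals
\[
-\sum_{i,j}\lambda_i\lambda_j\tfrac12E[d^2(Y_i,Y_j)]+2\sum_i\lambda_i\tfrac12E[d^2(Y_i,S(x_0))].
\]
For $i\ne j$, $\tfrac12E[d^2(Y_i,Y_j)]=\varphi_Y(\|x_i-x_j\|)$, and the diagonal terms vanish, consistent with $\varphi_Y(0)=0$. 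This matches $\Gamma$.

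The main obstacle is the cross term $\tfrac12E[d^2(Y_i,S(x_0))]$, which must be identified as $\varphi_Y(\|x_i-x_0\|)$ up to an additive constant independent of $i$. Conditional on $S$, the perturbation $\mathcal P_{x_i}$ is independent of $S(x_0)$ and of $S(x_i)$. In the Hilbert embedding, the first-order condition \eqref{eq:pertmat_mean} shows that $\Psi(S(x_i))$ minimizes $E\|\Psi(\mathcal P_{x_i}(\omega))-h\|^2$ over the convex set $\Psi(\Omega)$. The cleanest route is to obtain the variance decomposition
\[
E[d^2(\mathcal P_x(\omega),z)]=\tau^2+d^2(\omega,z).
\]
If $E\Psi(\mathcal P_x(\omega))$ lies in $\Psi(\Omega)$, this follows by Pythagoras. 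Otherwise I would try to argue that the mean is the projection, which gives at least the needed identity along $\Psi(\Omega)$. With this decomposition, $\tfrac12E[d^2(Y_i,S(x_0))]=\tau^2/2+\varphi_S(\|x_i-x_0\|)$ and $\varphi_Y(h)=\tau^2+\varphi_S(h)$ for $h>0$. This is where $x_0\notin\{x_1,\dots,x_n\}$ is used, since it avoids the diagonal case $h=0$ where $\varphi_Y(0)=0$. Hence $\tfrac12E[d^2(Y_i,S(x_0))]=\varphi_Y(\|x_i-x_0\|)-\tau^2/2$.

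The surrogate then equals
\[
-\lambda^\top\Gamma\lambda+2\lambda^\top\gamma(x_0)-\tau^2,
\]
using $\sum_i\lambda_i=1$ to collect the constant. This is exactly the classical ordinary-kriging objective. On $\Lambda_n$, $-\lambda^\top\Gamma\lambda\ge0$ because the surrogate is a nonnegative quadratic form; it is conditionally convex. Minimizing under $\mathbf 1^\top\lambda=1$ with a Lagrange multiplier yields the bordered system whose solution is \eqref{eq:weights_vario}. The nonsingularity of that matrix, assumed in the paper, gives uniqueness of the stationary point. Convexity on the affine set makes this stationary point the minimizer.
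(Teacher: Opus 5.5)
Your part (i), and your reduction of part (ii) to the objective $-\lambda^\top\Gamma\lambda+2\lambda^\top c$ with $c_i=\tfrac12E[d^2\{Y_i,S(x_0)\}]$, coincide with the paper's proof. The paper then identifies $c=\gamma(x_0)-(\tau^2/2)\mathbf 1_n$ in one step, by showing that $\Psi\{Y(x_0)\}-\Psi\{S(x_0)\}$ is conditionally centered given $S(x_0)$. You instead pass through $\varphi_S$ using the perturbations at $x_i$ (and at $x_j$ for $\varphi_Y$), which is equivalent.

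The gap is at the step you yourself flag as the main obstacle. You prove the decomposition $E[d^2\{\mathcal P_x(\omega),z\}]=\tau^2+d^2(\omega,z)$ only under the proviso that $m:=E[\Psi\{\mathcal P_x(\omega)\}]$ lies in $\Psi(\Omega)$, and the fallback you propose for the other case does not work. Expanding around $\Psi(\omega)$ always gives
\[
E[d^2\{\mathcal P_x(\omega),z\}]=\tau^2+d^2(\omega,z)-2\langle m-\Psi(\omega),\Psi(z)-\Psi(\omega)\rangle_{\mathcal H}.
\]
If $\Psi(\omega)$ were merely the projection of an exterior point $m$, the projection inequality would only make the last term nonnegative. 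You would get ``$\ge$'', generally strictly, not the needed identity. Because the kriging weights can have either sign, an inequality on the $c_i$ does not determine the objective, and the identification with $\gamma(x_0)$ is lost.

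The missing idea is that the proviso always holds, and this is exactly what the paper uses (for the Bochner conditional expectation given $S(x_0)$, citing Diestel--Uhl): the mean of an integrable random element taking values in a closed convex set lies in that set. Integrability follows from $E\|\Psi\{\mathcal P_x(\omega)\}-\Psi(\omega)\|_{\mathcal H}^2=\tau^2<\infty$. With $p=\Pi(m)$, the variational inequality for the projection gives $\|m-p\|_{\mathcal H}^2=E\langle m-p,\Psi\{\mathcal P_x(\omega)\}-p\rangle_{\mathcal H}\le0$, so $m=p\in\Psi(\Omega)$. Hence $m$, the unconstrained minimizer of $h\mapsto E\|\Psi\{\mathcal P_x(\omega)\}-h\|_{\mathcal H}^2$, is feasible. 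It therefore equals the constrained minimizer $\Psi(\omega)$ from \eqref{eq:pertmat_mean}, and your Pythagoras step, and with it the rest of your argument, goes through.

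A minor correction as well: ``$-\lambda^\top\Gamma\lambda\ge0$ on $\Lambda_n$'' is false as written; for $n=2$ and $\lambda=(1/2,1/2)$ it equals $-\varphi_Y(\|x_1-x_2\|)/2$. What gives convexity on $\Lambda_n$ is $-\delta^\top\Gamma\delta=E\|\sum_{i}\delta_i\Psi(Y_i)\|_{\mathcal H}^2\ge0$ for all $\delta$ with $\sum_i\delta_i=0$.
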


The first part of Proposition~\ref{prp:opt} states that the prediction risk is bounded above by the surrogate risk
$
E[
\left\|
\sum_{i=1}^n
\lambda_i \Psi(Y_i)
-
\Psi(S(x_0))
\|_{\mathcal H}^2
\right],
$
whereas the second part states that the surrogate risk is minimized when $\lambda$ is chosen to be the Fr\'echet kriging weights \eqref{eq:weights_vario}.
Consequently, the Fr\'echet kriging weights are optimal with respect to a Hilbert-space-based surrogate upper bound on the prediction risk.
Since the inequality in \eqref{eq:risk_bound_surro} becomes an equality when $\Omega$ is the Euclidean space $\mathbb{R}$, Proposition~\ref{prp:opt} recovers classical BLUP optimality for the Euclidean case and otherwise establishes surrogate-risk optimality under  Assumptions \ref{ass:const_mean}--\ref{ass:convex}.

\subsection{Convergence under Variogram Misspecification}
We next analyze the convergence of the prediction risk of the Fr\'echet kriging predictor.
To reflect practical settings, we allow the variogram used to construct the predictor
to be misspecified.
Specifically, let $W(x,y)=\psi(\|x-y\|)$ be a fixed working variogram,
where $\psi:[0,\infty)\to[0,\infty)$.
The working variogram function $\psi$ is not required to coincide with the true variogram function
$\varphi_Y$ of the observation process.
Furthermore, let $\lambda_\psi(x_0)=(\lambda_{\psi,i}(x_0))_{i=1}^n$ denote the weight vector obtained from
\eqref{eq:weights_vario} by replacing $\varphi_Y$ with $\psi$.
The corresponding Fr\'echet kriging predictor is then defined as
\begin{equation}
    \hat{S}_\psi(x_0)
    =
    \arg\min_{\omega\in\Omega}
    \sum_{i=1}^n
    \lambda_{\psi,i}(x_0)d^2(Y_i,\omega).
    \label{eq:fk_miss}
\end{equation}
In the following, we show that, under suitable regularity conditions, the prediction risk
$
    E[
        d^2\{\hat{S}_\psi(x_0),S(x_0)\}
    ]
$
converges to zero as $n\to\infty$, even when the working variogram is misspecified. We refer to this property as robustness. Robustness of kriging prediction to covariance or variogram misspecification has been studied extensively in the classical Euclidean setting \citep{yakowitz1985comparison, stein1988asymptotically, stein1990uniform, wang2020prediction}.
Our result provides a metric-space analogue of this robustness phenomenon.

For this purpose, we impose the following assumptions. 
Let $A\subset\mathbb{R}^p$ be a compact spatial domain containing all
sampling locations $x_1,x_2,\ldots$ and the prediction location $x_0$.
In the asymptotic analysis, the perturbation maps are allowed to depend
on $n$. More precisely, conditions~\eqref{eq:pertmat_mean} and
\eqref{eq:pertmat_var} are assumed with $\mathcal P_x$ replaced by
$\mathcal P_{n,x}$ and $\tau^2$ replaced by $\tau_n^2$.
For notational simplicity, we suppress the index $n$ below.
Additionally, throughout the theoretical analysis of risk consistency, we assume that, for every $x_0\in A$, the objective function in \eqref{eq:fk_miss} admits a measurable global minimizer. If the minimizer is not unique, $\hat S_\psi(x_0)$ denotes any measurable selection from the set of global minimizers.

\begin{ass}\label{ass:miss}\quad 
\begin{enumerate}
\item [(i)] The metric space $(\Omega, d)$ is bounded.
\item [(ii)]
$
    \min_{1 \le i \le n}\|x_0-x_i\| \to 0.
$
\item [(iii)]The nugget variance in \eqref{eq:pertmat_var} satisfies
    $
        \tau^2 = \tau_n^2 \to 0
    $
    as $n \to \infty$.
    \item[(iv)] The fixed working variogram admits the spectral representation
\begin{equation}
    W(x,y)
    =
    \int_{\mathbb R^p}
    \left[
        1-\cos\{\xi^\top(x-y)\}
    \right]
    \nu(d\xi),
\end{equation}
where $\nu$ is a nonnegative, radial Borel measure satisfying
\begin{equation}
    \int_{\mathbb R^p}
    (1\wedge\|\xi\|^2)\,\nu(d\xi)
    <\infty.
\end{equation}
In addition, the support of $\nu$ contains a nonempty open subset of
$\mathbb R^p$.

    \item[(v)]
    The working kriging weights are uniformly stable in $\ell_1$:
    \[
        \sup_{n\ge1}
        \sum_{i=1}^n
        |\lambda_{\psi,i}(x_0)|
        <\infty.
    \]

    \item[(vi)]
    With probability one, the sample path
    $
        x\mapsto S(x)
    $
    is continuous on $A$ with respect to the metric $d$.
\end{enumerate}
\end{ass}

Condition (i) is a mild regularity condition that is often imposed in the analysis of metric space-valued data; see, for example, \citet{petersen2019frechet, iao2025deep, dubey2024metric}.
Condition (ii) requires that at least one sampling location approaches the fixed prediction location \citep{yakowitz1985comparison}.
Condition (iii)
is a small error condition requiring the 
errors induced by the perturbation maps to be asymptotically negligible. 
Classical robustness results for kriging under covariance
misspecification typically consider settings without a nugget effect, that is, $\tau_n=0$ \citep{yakowitz1985comparison,stein1988asymptotically,stein1990uniform}.
In contrast, our framework allows for a nonzero nugget effect in finite samples, provided that it vanishes asymptotically. 
Similar small error assumptions are commonly used in nonlinear models with 
measurement errors \citep{carroll1995measurement,schennach2016recent} and 
have also been employed in mixed-effects modeling for metric space-valued 
data \citep{bhattacharjee2025geodesic}.
Condition (iv)
imposes a spectral regularity condition on the working
variogram. In particular, this condition is satisfied by a nugget-free
Mat\'ern variogram, whose spectral density is proportional to
$
    (\phi^{-2}+\|\xi\|^2)^{-(\kappa+p/2)}
$
with fixed $\phi>0$ and $\kappa>0$
(see Section~\ref{sec:implement} for the definition of the Mat\'ern
function).
In addition, when $p=2$, the condition is also
satisfied by the nugget-free cardinal sine variogram,
whose spectral density is proportional to
$
    {
        \mathbf{1}\{\|\xi\|<\phi^{-1}\}
    }/{
        \sqrt{\phi^{-2}-\|\xi\|^2}
    }
$
with fixed $\phi > 0$
(see Section~\ref{sec:app_nwk} for the definition of the cardinal sine variogram).
The Mat\'ern spectral density is strictly positive on all of
$\mathbb R^p$, whereas, for $p=2$, the cardinal sine spectral density
is strictly positive on the open ball
$\{\xi\in\mathbb R^2:\|\xi\|<\phi^{-1}\}$ and vanishes outside its
closure.
Other examples satisfying Condition~(iv) include the power variogram
with exponent $\alpha\in(0,2)$ and the powered-exponential variogram
with exponent $\alpha\in(0,2]$; see \citet{cressie2015statistics} for
general references on variogram models.
Condition (v) requires the resulting kriging weights to remain uniformly
bounded in the $\ell_1$ norm. This condition is analogous to the bounded Lebesgue constant condition
used by \citet{vazquez2009pointwise} to establish pointwise consistency
of kriging predictors. Condition (vi) assumes continuity of
the sample paths of the latent process.

Under these assumptions, we obtain the following convergence result.

\begin{thm}\label{thm:miss}
Suppose Assumptions
\ref{ass:const_mean}, \ref{ass:second_stationary} and 
\ref{ass:miss} hold.
Then for the estimator \eqref{eq:fk_miss} with the working variogram it holds that 
\begin{equation}
    E\left[
    d^2\{\hat{S}_\psi(x_0),S(x_0)\}
    \right]
    \to 0,
    \qquad\text{as } n\to\infty.
\end{equation}
\end{thm}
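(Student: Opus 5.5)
Because Assumptions~\ref{ass:isometric} and \ref{ass:convex} are not imposed, the proof has to work directly with the minimizing property of $\hat S_\psi(x_0)$ in the metric $d$, without any Hilbert embedding. Write $\lambda=\lambda_\psi(x_0)$, $L=\sup_n\sum_i|\lambda_i|<\infty$ (Assumption~\ref{ass:miss}(v)), and $D=\operatorname{diam}(\Omega)<\infty$ (Assumption~\ref{ass:miss}(i)). The aim is a deterministic inequality for $d^2\{\hat S_\psi(x_0),S(x_0)\}$ in terms of $\sum_i|\lambda_i|\,d(Y_i,S(x_0))$, followed by a proof that this sum tends to zero in $L^1$.

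\textbf{Step 1 (basic inequality).} Set $F(\omega)=\sum_i\lambda_i d^2(Y_i,\omega)$. Since $\hat S=\hat S_\psi(x_0)$ is a global minimizer, $F(\hat S)\le F(S(x_0))$. Because $\sum_i\lambda_i=1$, we can write
\[
d^2(\hat S,S(x_0))=\sum_i\lambda_i\{d^2(\hat S,S(x_0))-d^2(Y_i,\hat S)+d^2(Y_i,S(x_0))\}+F(\hat S)-F(S(x_0)).
\]
The last difference is $\le 0$. For each bracket, the triangle inequality gives $|d^2(\hat S,S(x_0))-d^2(Y_i,\hat S)|\le 2D\,d(Y_i,S(x_0))$, and also $d^2(Y_i,S(x_0))\le D\,d(Y_i,S(x_0))$. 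Hence
\[
d^2\{\hat S_\psi(x_0),S(x_0)\}\le 3D\sum_{i=1}^n|\lambda_i|\,d(Y_i,S(x_0)).
\]

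\textbf{Step 2 (split into noise and latent parts).} Use $d(Y_i,S(x_0))\le d(Y_i,S(x_i))+d(S(x_i),S(x_0))$. The perturbation maps are independent of $S$, so conditioning on $S$ and applying \eqref{eq:pertmat_var} together with Jensen's inequality gives $E[d(Y_i,S(x_i))]\le\tau_n$. The noise part is therefore bounded by $3DL\tau_n\to0$ by Assumption~\ref{ass:miss}(iii). It remains to show
\[
E\Big[\sum_i|\lambda_i|\,d(S(x_i),S(x_0))\Big]\to0 .
\]

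\textbf{Step 3 (main obstacle: localization of the weights).} Split the sum into indices with $\|x_i-x_0\|\le\delta$ and those with $\|x_i-x_0\|>\delta$.

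For the near indices, bound $d(S(x_i),S(x_0))$ by the modulus of continuity $\omega_S(\delta)$ of the path on the compact set $A$. By Assumption~\ref{ass:miss}(vi), $\omega_S(\delta)\to0$ almost surely as $\delta\to0$, and $\omega_S(\delta)\le D$. Dominated convergence then gives $L\,E[\omega_S(\delta)]\to0$.

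For the far indices, the contribution is at most $D\sum_{\|x_i-x_0\|>\delta}|\lambda_i|$. So I need the following lemma: for a fixed working variogram satisfying Assumption~\ref{ass:miss}(iv), together with Assumption~\ref{ass:miss}(ii) and the $\ell_1$ bound (v), the weight mass outside any fixed ball vanishes, i.e. $\sum_{\|x_i-x_0\|>\delta}|\lambda_{\psi,i}(x_0)|\to0$ for each $\delta>0$.

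The plan for the lemma is to use the fact that ordinary kriging under $W$ is the minimum-norm interpolant in the native (RKHS-type) space determined by $\nu$. The spectral representation makes $-W$ conditionally positive definite, and support of $\nu$ on an open set ensures this is strict. With respect to the spectral norm, the kriging error functional at $x_0$ is the minimal-norm element of $\{\sum_i\lambda_i\delta_{x_i}-\delta_{x_0}:\sum_i\lambda_i=1\}$. Its squared norm is
\[
\int|\textstyle\sum_i\lambda_i e^{i\xi^\top x_i}-e^{i\xi^\top x_0}|^2\nu(d\xi),
\]
and this is bounded by $2W(x_{i^*},x_0)\to0$, where $x_{i^*}$ is the nearest sampling site (Assumption~\ref{ass:miss}(ii)). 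Consequently the signed measures $\mu_n=\sum_i\lambda_i\delta_{x_i}$ have bounded total variation, by (v). Along any subsequence, they converge weakly$^*$ on $A$ to some $\mu$ whose Fourier transform equals $e^{i\xi^\top x_0}$ $\nu$-a.e., with $\mu(A)=1$.

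Since $\hat\mu$ is real-analytic and $\nu$ charges an open set, $\hat\mu\equiv e^{i\xi^\top x_0}$, so $\mu=\delta_{x_0}$. Weak$^*$ convergence alone does not control $|\lambda|$-mass outside the ball. I would upgrade it using the positive/negative parts: the limit of $|\mu_n|$ is some $\rho\ge|\mu|$ with total mass $\le L$. I expect this to be the delicate point. If the upgrade fails, I would instead prove directly that $\sum_i|\lambda_i|\,d(S(x_i),S(x_0))\to0$ by showing that $\int g\,d\mu_n\to g(x_0)$ holds for the continuous random function $g=d(S(\cdot),S(x_0))$. Signed cancellation might then need to be handled by applying the argument to $|\mu_n|$ restricted to the far region.

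Combining Steps 1--3 and letting $n\to\infty$ and then $\delta\to0$ gives the theorem.
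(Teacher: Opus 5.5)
\textbf{There is a genuine gap: Step 1 discards the sign of the weights too early, and Step 3 needs a localization lemma that is never proved.}

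Your Step 1 inequality and the noise bound in Step 2 are correct. The trouble starts once you have bounded $d^2\{\hat S_\psi(x_0),S(x_0)\}$ by $3D\sum_i|\lambda_i|\,d\{Y_i,S(x_0)\}$. From there you must show $E[\sum_i|\lambda_{\psi,i}(x_0)|\,d\{S(x_i),S(x_0)\}]\to0$, which in general forces the localization $\sum_{\|x_i-x_0\|>\delta}|\lambda_{\psi,i}(x_0)|\to0$.

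Nothing in Assumption~\ref{ass:miss}(ii), (iv), (v) supplies this localization. The information your spectral argument extracts is $\int|Q_n|^2\,d\nu\to0$ and weak$^*$ convergence of $\mu_n=\sum_i\lambda_{\psi,i}(x_0)\delta_{x_i}$ to $\delta_{x_0}$. Both are blind to far ``dipoles'' $c(\delta_{y_n}-\delta_{z_n})$ with $\|y_n-z_n\|\to0$:
\begin{itemize}
\item their Fourier transforms have squared $L^2(\nu)$-norm $2c^2W(y_n,z_n)\to0$, and they vanish weak$^*$;
\item yet they carry total-variation mass $2|c|$ and contribute about $2|c|\,d\{S(y),S(x_0)\}$ to your bound.
\end{itemize}
You flag this as the delicate point, but the fallback does not resolve it. For $g=d\{S(\cdot),S(x_0)\}\ge0$, the statement $\int g\,d\mu_n\to g(x_0)=0$ concerns the signed measure, whereas your bound involves $\int g\,d|\mu_n|$. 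Restricting $|\mu_n|$ to the far region just restates the localization claim.

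\textbf{The fix is to take absolute values only on the noise part and keep the latent part signed.}

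In your identity, replacing $Y_i$ by $S(x_i)$ costs at most $4D\sum_i|\lambda_i|\,d\{Y_i,S(x_i)\}$. Its expectation is at most $4DL\tau_n$ by your Step 2. The remaining latent term equals $\{g_{\hat S}(x_0)-T_ng_{\hat S}\}+\{T_ng_{S(x_0)}-g_{S(x_0)}(x_0)\}$, where $g_\omega(x)=d^2\{S(x),\omega\}$ and $T_ng=\sum_i\lambda_{\psi,i}(x_0)g(x_i)$. Hence it is at most $2\Delta_n$, with $\Delta_n=\sup_{\omega\in\Omega}|T_ng_\omega-g_\omega(x_0)|$.

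Your Step 3 compactness argument already proves $T_ng\to g(x_0)$ for every $g\in C(A)$. It runs through Banach--Alaoglu, then Fatou to get $\hat\mu=e^{\mathrm{i}\xi^\top x_0}$ $\nu$-a.e., hence on $\operatorname{supp}\nu$ by continuity, hence everywhere by analyticity, then Fourier uniqueness and the subsequence principle. This is Lemma~\ref{lem:average}; the paper obtains it instead from equi-Lipschitz continuity of $Q_n$ and density of $\{e^{\mathrm{i}\xi^\top x}:\xi\in U\}$ in $C(A;\mathbb C)$.

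Uniformity over $\omega$ then follows from Arzel\`a--Ascoli. Almost surely, $\{g_\omega:\omega\in\Omega\}$ is bounded by $D^2$ and equicontinuous, since $|g_\omega(x)-g_\omega(y)|\le2D\,d\{S(x),S(y)\}$. So it is relatively compact in $C(A)$. Operators with $\|T_n\|\le L$ that converge pointwise converge uniformly on relatively compact sets. Therefore $\Delta_n\to0$ almost surely with $\Delta_n\le(L+1)D^2$, and bounded convergence finishes the proof. This is precisely the paper's argument, via $d^2\{\hat S_\psi(x_0),S(x_0)\}\le2(\Delta_n+\Delta_n')$.
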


Theorem~\ref{thm:miss} shows that under the regularity conditions  the Fr\'echet kriging predictor remains
consistent even when the working variogram is misspecified. In particular,
the working variogram need not coincide with the true variogram, and no convergence of the working variogram to the true variogram is assumed. Thus, exact specification
of the spatial dependence structure is not necessary for consistency,
provided that the working variogram and the resulting kriging weights
satisfy the regularity conditions in Assumption~\ref{ass:miss}.
The theorem also allows for a vanishing nugget effect, thereby extending
classical robustness results that typically consider noiseless observations.
This result may be viewed as a metric-space analogue of the robustness
results for classical kriging under covariance or variogram misspecification
\citep{yakowitz1985comparison,stein1988asymptotically,stein1990uniform,
wang2020prediction}.

\begin{rmk}[Consistency without boundedness]\label{rmk:without_bounded}
To prove Theorem~\ref{thm:miss}, we assumed that the metric space $(\Omega,d)$ is bounded. However, when $(\Omega,d)$ admits an isometric embedding into a Hilbert space, the boundedness assumption can be relaxed. Specifically, suppose that Assumptions~\ref{ass:const_mean}--\ref{ass:convex} and Assumption~\ref{ass:miss}(ii)--(vi) hold. If
$
E\left[\sup_{x\in A}d^2\{S(x),S(x_0)\}\right]<\infty,
$
then
\begin{equation}
E\left[
d^2\{\hat{S}_\psi(x_0),S(x_0)\}
\right]
\to0,
\qquad\text{as } n\to\infty.
\end{equation}
The proof is provided in the Supplementary Material.
\end{rmk}

\begin{rmk}[Uniform convergence of the prediction risk]\label{rmk:unif}
The pointwise convergence in Theorem~\ref{thm:miss} can be strengthened
to uniform convergence over the spatial domain $A$ under corresponding uniform conditions.
Specifically, suppose that Assumption~\ref{ass:miss}(ii) is strengthened to
\begin{equation}
    \sup_{x_0\in A}
    \min_{1\le i\le n}
    \|x_0-x_i\|
    \to0,
    \label{eq:dens_sample}
\end{equation}
and that the $\ell_1$ stability condition holds uniformly over $x_0\in A$:
\begin{equation}
    \sup_{n\ge1}\sup_{x_0\in A}
    \sum_{i=1}^n
    |\lambda_{\psi,i}(x_0)|
    <\infty.
\end{equation}
Then the argument used in the proof of Theorem~\ref{thm:miss} yields
\begin{equation}
    \sup_{x_0\in A}
    E\left[
        d^2\{\hat{S}_\psi(x_0),S(x_0)\}
    \right]
    \to0.
\end{equation}
See Section \ref{sec:rmk_unif} of the Supplementary Material for details.
Uniform convergence is particularly relevant for spatial prediction,
where the goal is to construct a predicted spatial field over an entire
spatial domain rather than to predict at a single location.
Condition~\eqref{eq:dens_sample} is satisfied, for example, when the
sampling locations form an increasingly dense regular design over the
compact domain $A$.
\end{rmk}

We emphasize that Theorem~\ref{thm:miss} assumes that the response space is bounded, whereas Remark~\ref{rmk:without_bounded} removes this requirement for metric spaces that admit an isometric embedding into a Hilbert space, subject to an additional moment condition. Both consistency results require a vanishing nugget effect, a fixed working variogram satisfying the regularity condition in Assumption~\ref{ass:miss}(iv), and stable kriging weights.
In contrast, our simulations and real-data applications include settings in which these theoretical conditions are not necessarily satisfied. In particular, we consider nonvanishing nugget effects and variograms estimated from the data. Thus, the simulation studies should be viewed as empirical evaluations of the proposed method beyond the sufficient conditions imposed in the theoretical analysis, rather than as direct illustrations of Theorem~\ref{thm:miss}. Despite this gap between theory and practice, the empirical results indicate that Fr\'echet kriging can perform favorably in settings beyond the scope of our theoretical guarantees.

\section{Implementation and Simulation}
\label{sec:simulation}
\subsection{Implementation Details}\label{sec:implement}
To implement the proposed kriging method, we need to estimate the variogram function
$\varphi_Y$ defined in Assumption~\ref{ass:second_stationary}.
Suppose that $\Omega=\mathbb{R}$ and the noise is additive, i.e.,
$Y(x)=S(x)+\varepsilon(x)$, where
$E[\varepsilon(x)]=0$ and
$\operatorname{Var}(\varepsilon(x))=\tau^2$.
In this case, for $x \neq x'$, the variogram function is expressed as
\begin{equation}
    \varphi_Y(\|x-x'\|)
    =
    V_Y(x, x')
    =
    \frac{1}{2}E[\{S(x)-S(x')+\varepsilon(x)-\varepsilon(x')\}^2]
    =
    \tau^2 + V_S(x, x'),
    \label{eq:expression_vario_Y}
\end{equation}
where $V_S$ is the variogram of the process $S$ defined in \eqref{eq:vario_sy}. Furthermore, $V_S$ is expressed as
\begin{equation}
    V_S(x, x') = \sigma^2_S - \text{Cov}(S(x), S(x'))
    =
    \sigma_S^2\{1-\text{Corr}(S(x), S(x'))\}.
    \label{eq:expression_vario_S}
\end{equation}
It follows from the above expression that, under Assumption~\ref{ass:second_stationary}, there exists a function
$\rho:[0,\infty)\to[-1,1]$ such that
\begin{equation}
    \text{Corr}(S(x), S(x')) = \rho(\|x-x'\|).
    \label{eq:expression_corr_S}
\end{equation}
Combining
\eqref{eq:expression_vario_Y},
\eqref{eq:expression_vario_S}, and
\eqref{eq:expression_corr_S}, we obtain
\begin{equation}
    \varphi_Y(u) = \tau^2 + \sigma_S^2\{1-\rho(u)\},
    \label{eq:expression_vario_S_2}
\end{equation}
for $u > 0$.
A classical approach to estimating $\varphi_Y$ is to first construct a raw variogram estimate and then fit a parametric model to it using the form on the right-hand side of \eqref{eq:expression_vario_S_2}. In this representation, the intercept $\tau^2$ corresponds to the nugget variance, and the sum $\tau^2 + \sigma_S^2$ corresponds to the variance of the observation process $Y$, sometimes called \emph{sill} \citep{cressie2015statistics}. Motivated by this representation, we adopt the same parametric form as a working variogram in general metric spaces. Outside settings satisfying Assumptions \ref{ass:isometric} and \ref{ass:convex}, however, the fitted parameters need not admit the same variance-component interpretation.

\paragraph{Construction of a raw variogram estimate.}
For each pair $j>i$, compute the distance $u_{ij}=\|x_i-x_j\|$ and its corresponding variogram ordinate $v_{ij} = d^2(Y_i, Y_j)/2$. The pairs $\{(u_{ij}, v_{ij}): j > i\}$ are called the empirical variogram. 
We construct a raw estimate of $\varphi_Y$ based on the pairs in the following way.

Let $K$ denote the number of bins and let $q_{0.8}$ denote the
0.8 quantile of the pairwise distances $\{u_{ij}:j>i\}$.
Given $K$, we set the bin width to $h=q_{0.8}/K$ and define
$
I_{h,k}=((k-1)h,kh],  k=1,\ldots,K.
$
For each $k$, we compute $\hat{\varphi}_k$ as the average of all $v_{ij}$
whose corresponding distances $u_{ij}$ belong to $I_{h,k}$, and regard
$\hat{\varphi}_k$ as an estimate of $\varphi_Y(u_k)$, where
$u_k=(k-0.5)h$ is the midpoint of $I_{h,k}$.
Throughout the simulation and real data analyses, we 
restrict variogram estimation to pairwise distances up to $q_{0.8}$.
This restriction avoids sparsely populated bins at large distances.

\paragraph{Fitting a parametric model.}
We obtain the final estimate of the variogram function $\varphi_Y$ by fitting a
parametric model based on the right-hand side of
\eqref{eq:expression_vario_S_2} to the raw estimates
$\{(u_k,\hat{\varphi}_k)\}_{k=1}^K$.

Let $\rho(u;\theta)$ be a parametric class of correlation functions. A
commonly used two-parameter family due to \cite{matern1960spatial} is
\begin{equation}
    \rho(u;\phi,\kappa) 
= \{2^{\kappa-1} \Gamma(\kappa)\}^{-1} 
(u/\phi)^{\kappa} K_{\kappa}(u/\phi),
\label{eq:matern}
\end{equation}
where $\Gamma(\cdot)$ denotes the gamma function, and $K_\kappa(\cdot)$ denotes the modified Bessel function of the second kind of order $\kappa$. 
This yields the corresponding parametric variogram model
\begin{equation}
\varphi(u;\tau,\sigma,\theta)
=
\tau^2+\sigma^2\{1-\rho(u;\theta)\},
\end{equation}
for $u > 0$,
where $\tau$ and $\sigma$ are nonnegative parameters.
Since the Mat\'ern correlation function is positive definite on the 
Euclidean spatial domain \citep{stein1999interpolation}, the resulting model defines a valid variogram (i.e., a conditionally negative definite function).

 The parameters are estimated by weighted least squares:
\begin{equation}
(\hat{\tau},\hat{\sigma},\hat{\theta})
=
\argmin_{\tau,\sigma,\theta}
G(\tau,\sigma,\theta),
\end{equation}
where $G(\tau,\sigma,\theta)
=
\sum_{k=1}^K
n_k
\{\hat{\varphi}_k-\varphi(u_k;\tau,\sigma,\theta)\}^2$ and $n_k$ denotes the number of distances $u_{ij}$ that belong to the interval
$I_{h,k}$.
The resulting estimator of $\varphi_Y(u)$ is
$\varphi(u;\hat{\tau}, \hat{\sigma}, \hat{\theta})$.

\subsection{Simulation Setup}
We evaluate the predictive performance of Fr\'echet kriging (FK) 
through simulation studies with two types of non-Euclidean 
responses: univariate distributions equipped with the Wasserstein metric 
(Example \ref{exm:unidist}) and SPD matrices, 
where prediction is performed over the SPSD
space equipped with the power metric with exponent $1/2$ (Example \ref{exm:spd}).
For each setting, we consider sample sizes $n=50,100,200,400$ and 
different nugget effect levels, with 200 Monte Carlo replications for each scenario.  
We compare FK with 
local Fr\'echet regression (LFR) and global Fr\'echet regression (GFR) 
\citep{petersen2019frechet}. For univariate distributions, we additionally 
compare FK with a kriging approach based on Wasserstein principal component 
analysis (WPCA), as used by \citet{qiu2026spatial}.

In each Monte Carlo run, $n$ sampling locations are independently generated 
from the uniform distribution on the square $[-5,5]\times[-5,5] \subset \mathbb{R}^2$. 
Predictive performance is evaluated using the mean squared prediction error 
(MSPE) over 50 independently generated test locations in the same square, which are held fixed across Monte Carlo runs. 
Specifically, for the $q$th Monte Carlo run, let $\widehat{S}_q$ denote the 
predictor and $S_q$ the latent process. The MSPE is defined as
\begin{equation}
    \mathrm{MSPE}_q
    =
    \frac{1}{50}
    \sum_{i=1}^{50}
    d^2\left\{
        \widehat{S}_q(x_i^{\mathrm{test}}),
        S_q(x_i^{\mathrm{test}})
    \right\},
\end{equation}
where $d$ denotes the metric on the corresponding metric space.

In the simulations, we implemented FK using the variogram model based on 
the Mat\'ern correlation function \eqref{eq:matern} with $\kappa=1$, and 
the number of intervals was set to $K=10$. 
LFR and GFR were implemented using the corresponding functions in the 
\texttt{frechet} package \citep{chen2023frechet}. The bandwidths for LFR 
were selected by cross-validation over $[0.5,4]\times[0.5,4]$.
For WPCA, we retained the minimum number of principal components required 
to explain more than 80\% of the total variance and applied ordinary 
kriging separately to each retained principal component score.

\paragraph{Data-Generating Process for Univariate Distributions.}
We define a stochastic process $\{S(x):x\in \mathbb{R}^2\}$ taking values in the 
2-Wasserstein space $\mathcal{P}_2(\mathbb{R})$, where, for each 
$x\in \mathbb{R}^2$, $S(x)$ is the normal distribution $S(x)=N(\mu(x),\sigma^2(x))$. 
Here, $\{\mu(x): x \in \mathbb{R}^2\}$ is a real-valued zero-mean Gaussian process with a constant variance $\text{Var}(\mu(x)) = 1$ and correlation function $\text{Corr}(\mu(x), \mu(y)) = \exp(-\|x-y\|^2/25)$. $\{\sigma(x): x \in \mathbb{R}^2\}$ is a positive-valued stochastic process defined by $\sigma(x) = e^{Z(x)}+1$, where $\{Z(x): x \in \mathbb{R}^2\}$ is a real-valued zero-mean Gaussian process independent of $\mu(\cdot)$, with a constant variance $\text{Var}(Z(x)) = 0.5$ and correlation function $\text{Corr}(Z(x), Z(y)) = \exp(-\|x-y\|^2/25)$. 

We then define the observation process $\{Y(x): x \in \mathbb{R}^2\}$ by 
\begin{equation}
    Y(x) = N(\mu(x)+\varepsilon_1(x), (\sigma(x)+\varepsilon_2(x))^2),
\end{equation}
where for each $k=1, 2$, $\{\varepsilon_k(x): x \in \mathbb{R}^2\}$ is an error process whose marginal distribution is 
the uniform distribution on $[-a,a]$, and the random variables $\varepsilon_k(x), x \in \mathbb{R}^2$ are mutually independent. 
The constant $a \in [0, 1)$ controls the nugget effect level. 
Equivalently, for each $x\in \mathbb{R}^2$, the perturbation map is given by
$
\mathcal P_x(\omega)
=
\Psi^{-1}\left\{
    \Psi(\omega)
    +\varepsilon_1(x)
    +\varepsilon_2(x)\Phi^{-1}(\cdot)
\right\}
$
for
$
\omega\in\operatorname{supp}\{S(x)\}
$,
where $\Psi:\mathcal{P}_2(\mathbb{R})\to L^2([0,1])$ is the quantile 
embedding defined by $\Psi(\nu)=F_\nu^{-1}$, $L^2([0,1])$ denotes the 
Hilbert space of square-integrable functions on $[0,1]$, and $\Phi^{-1}$ denotes the quantile 
function of the standard normal distribution.
Under this data-generating process, Assumptions 
\ref{ass:const_mean} and \ref{ass:second_stationary} are satisfied (see Section \ref{sec:verify_univdist} in the Supplementary Material for details).
In the simulation, we consider three levels of the nugget effect: $a=0, 0.4, 0.8$. 

To better reflect real-world settings in which the underlying probability 
distributions are not directly observable, we generate independent samples 
from each distribution. Specifically, for each $Y_i$, we generate 300 
independent observations $\{y_{ij}\}_{j=1}^{300}$ from $Y_i$ and use their 
empirical distribution as the observed response; see also 
\citet{zhou2024wasserstein}. Thus, all prediction methods are applied to 
the empirical distributions rather than the underlying distributions 
$Y_i$.
This sampling step introduces additional estimation error and
conditional bias, even when $a=0$, which are not covered by
the perturbation model \eqref{eq:pertmat_mean}.
The simulations therefore also assess prediction
performance in the presence of finite-sample distribution estimation
error.

\paragraph{Data-Generating Process for SPD Matrices.}
The power metric with exponent $1/2$ between two SPD matrices $P, Q \in \text{Sym}_3^{++}$ is given by $d(P,Q)=\|P^{1/2}-Q^{1/2}\|_F$ where $\|\cdot\|_F$ denotes the Frobenius norm.
Here, the matrix power is defined as $P^{1/2} = VD^{1/2} V'$, where $VD V'$ is the usual spectral decomposition of $P \in \text{Sym}_3^{++}$. 
Although the latent and observed matrices generated below are positive
definite, we regard them as elements of the SPSD space
$\mathrm{Sym}_3^{+}$ when constructing the predictor.

We define a $\mathrm{Sym}_3^{++}$-valued stochastic process $\{S(x): x \in \mathbb{R}^2\}$ by $S(x)=U\Lambda(x)U'$ where $U$ is the orthogonal matrix
\[
U=
\begin{pmatrix}
\sqrt{3}/2 & -1/2 & 0 \\
1/2 & \sqrt{3}/2 & 0 \\
0 & 0 & 1
\end{pmatrix},
\]
and $\Lambda(x)=\mathrm{diag}(\lambda_1(x),\lambda_2(x),\lambda_3(x))$ is a diagonal matrix.
For each $k=1,2,3$, we generate the positive-valued stochastic process $\lambda_k(x)=\exp\{Z_k(x)\}+1$ where $\{Z_k(x):x\in \mathbb{R}^2\}$ is a zero-mean Gaussian process with constant variance
$
\mathrm{Var}\{Z_k(x)\}=1
$
and correlation function
$
\mathrm{Corr}\{Z_k(x),Z_k(y)\}=\exp(-\|x-y\|^2/25).
$

We then define the observation process $\{Y(x): x \in \mathbb{R}^2\}$ by
\begin{equation}
    Y(x) = U\{\Lambda(x)^{1/2}+E(x)\}^2U',
\end{equation}
where $E(x) = \mathrm{diag}(\varepsilon_1(x),\varepsilon_2(x),\varepsilon_3(x))$ is a diagonal matrix. 
Here, for each $k$, $\{\varepsilon_k(x): x \in \mathbb{R}^2\}$ is an error process whose marginal distribution is 
the uniform distribution on $[-a,a]$, and the random variables ${\varepsilon_k(x), x \in \mathbb{R}^2}$ are mutually independent. The constant $a \in [0, 1)$ controls the nugget effect level.
Equivalently, for each $x\in \mathbb{R}^2$, the perturbation map is given by 
$
\mathcal P_x(\omega)
=
\Psi^{-1}\{\Psi(\omega)+UE(x)U^\top\}
$
for
$
\omega\in\operatorname{supp}\{S(x)\},
$
where $\Psi: \text{Sym}_3^{++} \to \text{Sym}_3^{++}$ is defined by $\Psi(P) = P^{1/2}$.
Under this data-generating process, Assumptions 
\ref{ass:const_mean} and \ref{ass:second_stationary} are satisfied (see Section \ref{sec:verify_SPD} in the Supplementary Material for details).
In the simulation, we consider three levels of the nugget effect: $a=0, 0.4, 0.8$.

\subsection{Simulation Results}
Figure~\ref{fig:sim_dist_results} shows the results for univariate distributions, 
while Figure~\ref{fig:sim_cov_results} shows those for SPD matrices.
For univariate distributions, FK generally exhibits favorable predictive 
performance across the sample sizes and nugget effect levels considered. 
When $a=0$, the MSPE of FK decreases as the sample size increases and 
is smaller than those of LFR and GFR across all sample sizes. FK also 
performs comparably to WPCA. Similar patterns are observed when $a=0.4$ 
and $0.8$: FK remains competitive with WPCA and generally outperforms LFR 
and GFR. 

For SPD matrices, FK exhibits particularly strong performance when there 
is no nugget effect. When $a=0$, its MSPE decreases substantially as 
the sample size increases and becomes markedly smaller than those of LFR 
and GFR for larger sample sizes. As the nugget effect increases, the 
difference between FK and LFR becomes smaller. Nevertheless, FK slightly 
outperforms LFR for $a=0.4$ and $0.8$, while both methods consistently 
outperform GFR.

Overall, the simulation results demonstrate that FK provides competitive 
predictive performance for both distribution- and SPD matrix-valued 
spatial data across a range of sample sizes and nugget effect levels. 
In particular, FK performs strongly when the nugget effect is absent or small and remains competitive as the nugget effect increases. These results suggest that FK can effectively exploit spatial dependence for prediction across varying levels of the nugget effect.

\begin{figure}[t]
\single
    \centering
    \includegraphics[width=0.98\linewidth]{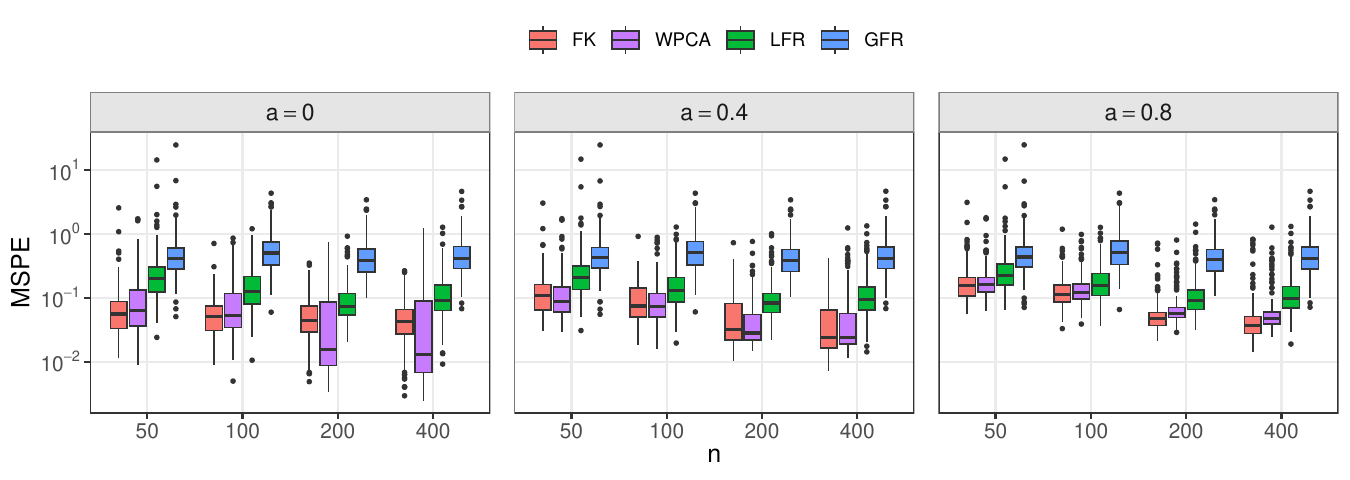}
    \caption{Boxplots of mean squared prediction errors (MSPEs) for univariate distributions,  based on 200 Monte Carlo replications. FK denotes the proposed Fr\'echet kriging, WPCA denotes the method proposed by \cite{qiu2026spatial}, and LFR and GFR denote the local and global Fr\'echet regressions of \cite{petersen2019frechet}.}
    \label{fig:sim_dist_results}
\end{figure}

\begin{figure}[t]
\single
    \centering
    \includegraphics[width=0.98\linewidth]{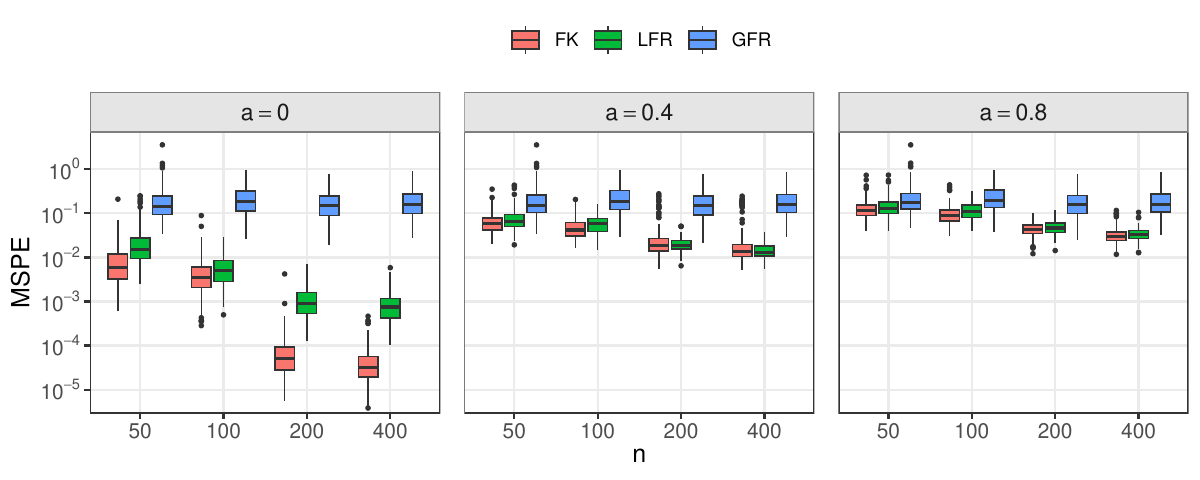}
    \caption{Boxplots of mean squared prediction errors (MSPEs) for SPD matrices,  based on 200 Monte Carlo replications. FK denotes the proposed Fr\'echet kriging, and LFR and GFR denote the local and global Fr\'echet regressions of \cite{petersen2019frechet}.}
    \label{fig:sim_cov_results}
\end{figure}

\section{Real Data Analysis}\label{sec:real_data}
In this section, we illustrate the practical utility of the proposed framework through applications to three types of data: univariate distributions (Example \ref{exm:unidist}), multivariate distributions (Example \ref{exm:multdist}), and networks (Example \ref{exm:network}). The latter two applications illustrate the applicability of the proposed
framework beyond the response spaces for which kriging methodology has
been extensively developed.

\subsection{Exceedance Probabilities of Daily Average Temperatures}\label{sec:app_univariate_prob}

We first consider daily minimum and maximum temperatures recorded in 2020
at meteorological stations in Iowa and the surrounding region of the
United States. To mitigate potential spatial nonstationarity, we restrict
the analysis to a relatively homogeneous geographic region characterized
by low-relief topography and the absence of major topographic barriers.
The data were obtained from the Global Historical Climatology Network
Daily (GHCNd) dataset
(\url{https://www.ncei.noaa.gov/products/land-based-station/global-historical-climatology-network-daily}).
We selected stations with sufficiently complete records (at least 350 daily observations) and low elevations (elevation $\leq 500$ m).
The resulting 236 stations are displayed in Figure \ref{fig:app_maps_1}. 

\begin{figure}[tb]
 \single
  \begin{minipage}[b]{0.49\linewidth}
    \centering
\includegraphics[width=0.9\columnwidth]{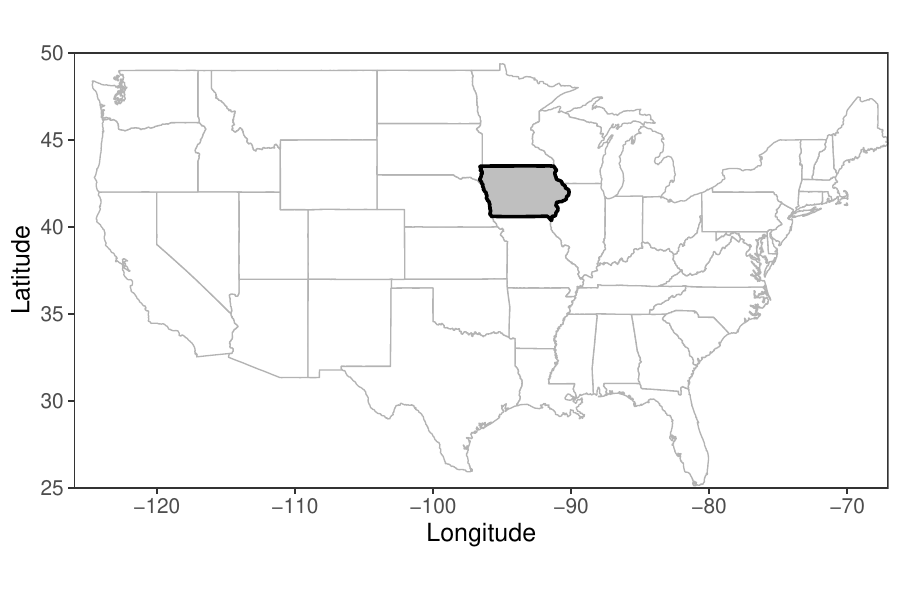}
    \subcaption{Iowa}
  \end{minipage}
  \begin{minipage}[b]{0.49\linewidth}
    \centering
\includegraphics[width=0.9\columnwidth]{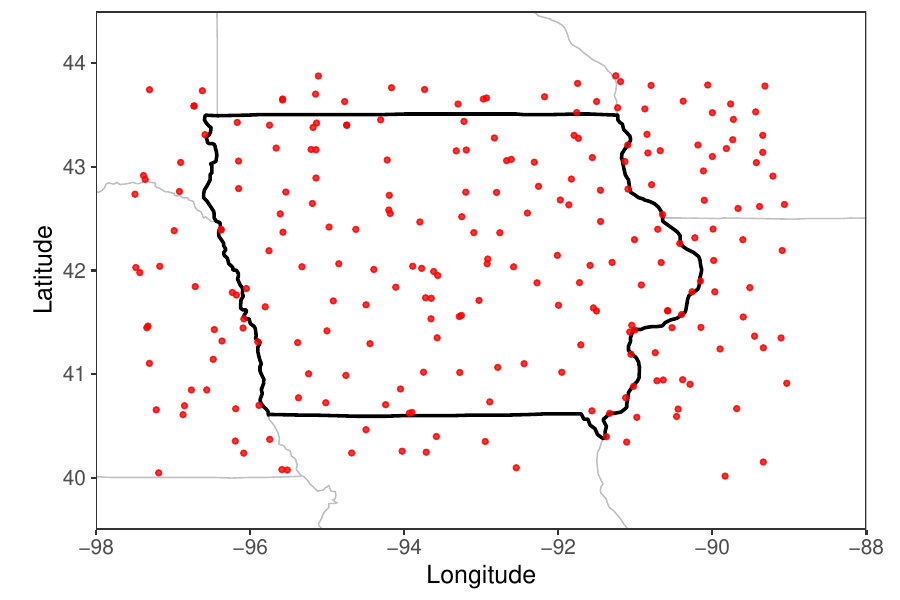}
    \subcaption{Station locations}
  \end{minipage}
  \caption{Locations of the meteorological stations in the Iowa temperature dataset. Panel (a) shows the location of Iowa within the United States, and panel (b) shows the locations of the 236 meteorological stations.}
  \label{fig:app_maps_1}
\end{figure}

For each station $i=1,\ldots,236$, we observe pairs of daily minimum and maximum temperatures,
$
(Z_{i1},W_{i1}),\ldots,(Z_{im_i},W_{im_i}),
$
where $m_i$ denotes the number of pairs at station $i$.
Based on these observations, we construct proxies for the daily mean temperatures by
$
Y_{ij}
=
(Z_{ij}+W_{ij})/{2},
j=1,\ldots,m_i,
$
and define the exceedance probability function
\[
G_i(v)
=
\frac{1}{m_i}
\sum_{j=1}^{m_i}
I(Y_{ij}>v),
\qquad
v\in\mathbb{R}.
\]
Since $1-G_i$ is a cumulative distribution function, we regard each exceedance probability function as representing a univariate distribution and treat it as an element of the Wasserstein space (Example \ref{exm:unidist}).

We estimated the variogram following the procedure described in Section \ref{sec:implement}, setting $K=10$ and using the Mat\'ern correlation function \eqref{eq:matern}.
Figure~\ref{fig:app_dist_vario} shows the results. The fitted variogram model provides a good fit to the raw estimates. The estimated nugget variance and sill are $\hat{\tau}^2=0.216$ and $\hat{\tau}^2+\hat{\sigma}^2=2.382$, respectively. This indicates that the fitted working model assigns approximately 90.9\% of its sill to the spatial component.

\begin{figure}[t]
 \single
  \begin{minipage}[b]{0.49\linewidth}
    \centering
\includegraphics[width=0.9\columnwidth]{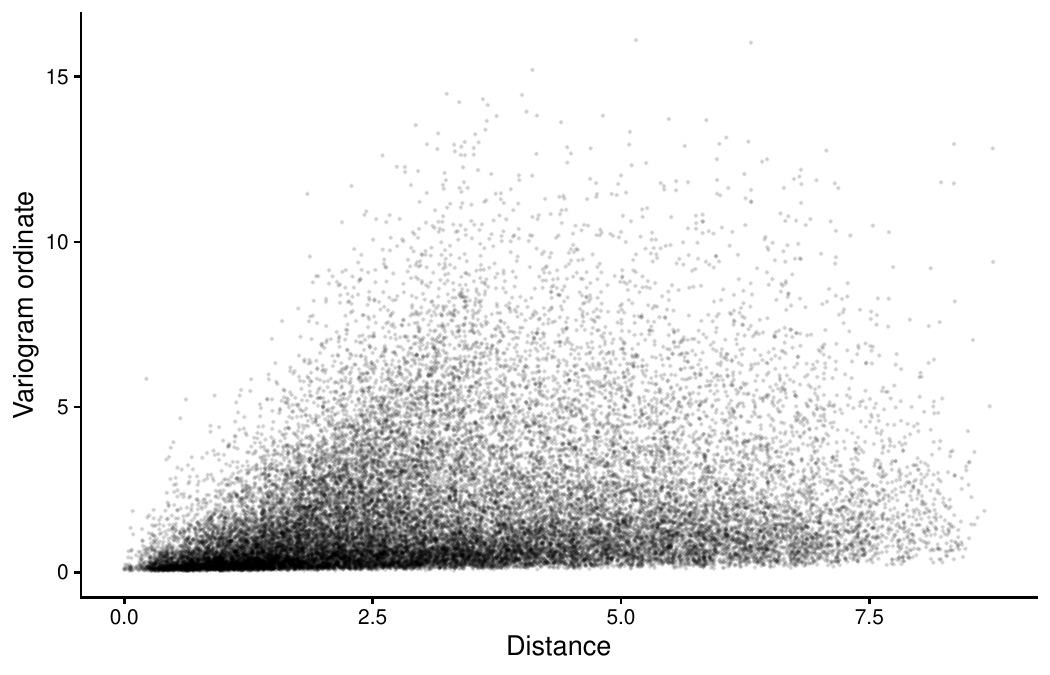}
    \subcaption{Empirical variogram}
  \end{minipage}
  \begin{minipage}[b]{0.49\linewidth}
    \centering
\includegraphics[width=0.9\columnwidth]{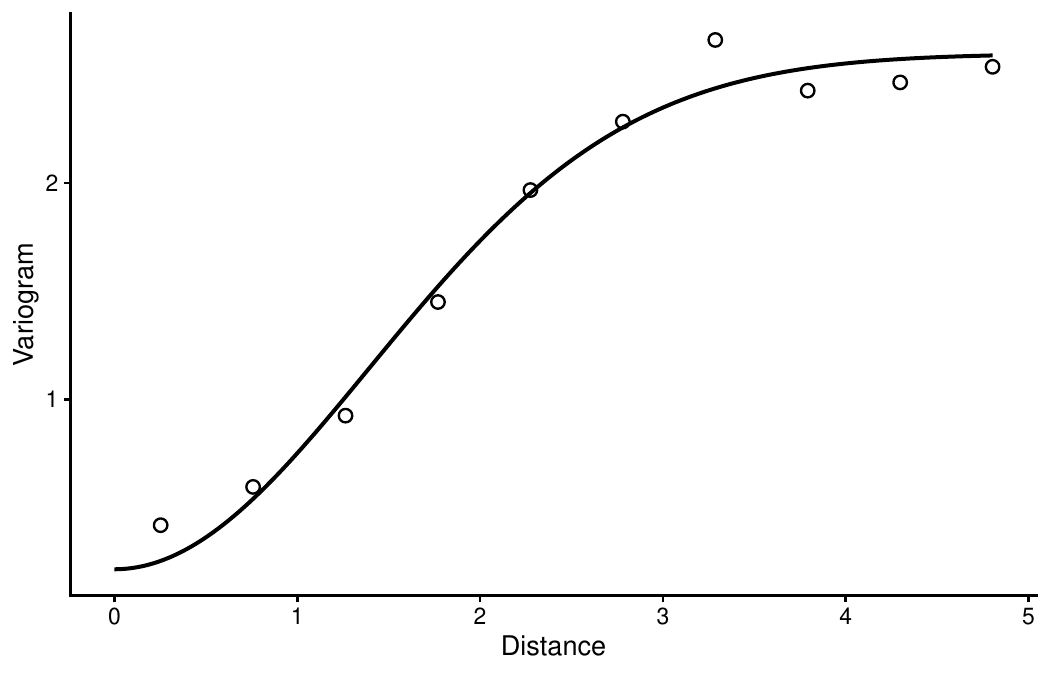}
    \subcaption{Mat\'ern variogram fit}
  \end{minipage}
  \caption{Results of variogram estimation for the exceedance probability distributions in Iowa. Panel (a) shows the empirical variogram based on Euclidean distances between longitude--latitude coordinates (in degrees) and the corresponding variogram ordinates. Panel (b) shows the raw variogram estimates (circles) together with the fitted Mat\'ern variogram model (solid line).}
  \label{fig:app_dist_vario}
\end{figure}

Next, we compare the predictive performance of the proposed method with
that of existing methods using leave-one-out cross-validation.
Specifically, we consider three competing approaches: a kriging approach
based on Wasserstein principal component analysis (WPCA), as used by
\citet{qiu2026spatial}, local Fr\'echet regression (LFR), and global
Fr\'echet regression (GFR) \citep{petersen2019frechet}.
For each cross-validation fold, all model parameters and tuning quantities,
including variogram parameters and WPCA components, are estimated using
only the training data. The fitted models are then used to predict the
held-out response, with the same folds used across all methods.
For LFR, the bandwidth parameters are selected by cross-validation
within each training fold. For WPCA, the minimum number of principal
components explaining more than 80\% of the total variance is retained.
The same cross-validation procedure is adopted in the other real-data
applications, except
for the density preprocessing in Section~\ref{sec:app_twodim_prob}.
The results are reported in Table~\ref{tab:app_dist_1}.
The proposed Fr\'echet kriging method achieves the smallest mean
cross-validation error among all competing methods.
Compared with WPCA, LFR, and GFR, the mean prediction error is reduced
by approximately 13.0\%, 13.2\%, and 35.0\%, respectively.
Note that the cross-validation errors assess prediction of the observed
station-level objects rather than direct recovery of an unobserved
latent process. The same interpretation applies to the other real data
applications.

\begin{table}[t]
 \single
    \centering
    \begin{tabular}{cccc}
       FK & WPCA & LFR&GFR \\ \hline  
     \textbf{0.449}(0.601) &0.516(0.610) &0.517(0.616)&0.691(0.669) 
    \end{tabular}
   \caption{Summary of the leave-one-out cross-validation errors for the
exceedance probability distributions in Iowa, based on squared
Wasserstein distance over 236 held-out stations.
Values are mean site-level prediction errors, with standard deviations
of the site-level errors in parentheses.
FK denotes the proposed Fr\'echet kriging method, WPCA denotes kriging
based on Wasserstein principal component analysis \citep{qiu2026spatial},
and LFR and GFR denote local and global Fr\'echet regression,
respectively \citep{petersen2019frechet}.}
    \label{tab:app_dist_1}
\end{table}

Furthermore, to provide representative examples of the prediction performance of the proposed method, we selected two stations whose cross-validation errors are closest to the first and third quartiles of the errors across all stations. Figure \ref{fig:one_shot_unidist} shows the observed quantile functions and those predicted by the proposed method at these two stations. At both stations, the predicted quantile functions closely follow the observed ones over almost the entire probability range, indicating that the proposed method successfully captures the overall shapes of the underlying temperature distributions.

\begin{figure}[t]
\single
    \centering
    \includegraphics[width=0.8\linewidth]{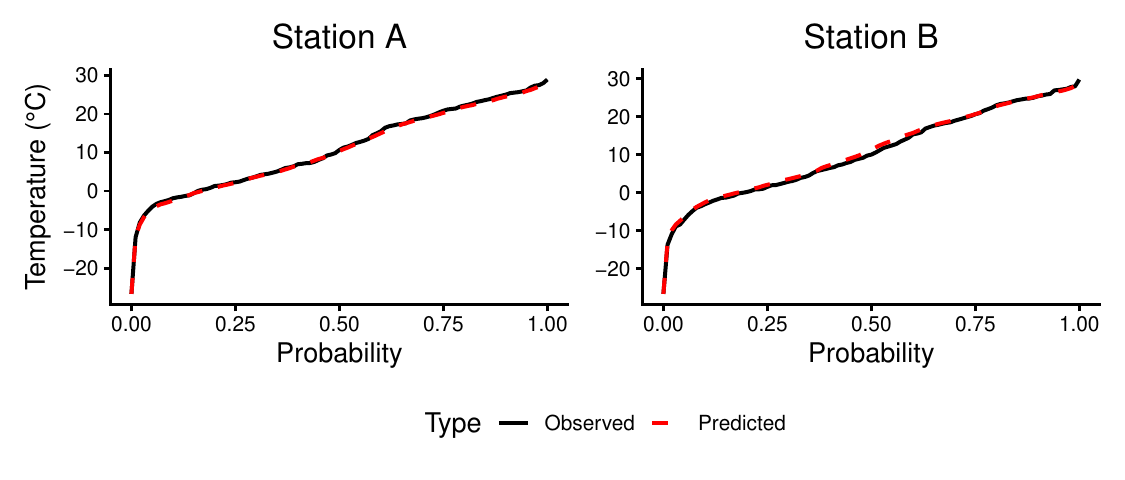}
    \caption{Observed (black solid lines) and predicted (red dotted lines) quantile functions at two stations in the Iowa temperature dataset. Stations A and B are located at (longitude, latitude) $=(-89.03, 40.91)$ and $=(-96.19, 40.67)$, respectively. The squared Wasserstein distances between the observed and predicted distributions are 0.135 and 0.520 at Stations A and B, respectively.}
    \label{fig:one_shot_unidist}
\end{figure}

Finally, we partition the spatial domain defined by longitudes [-97.5, -89.0] and latitudes [40.0, 44.0] into a $60\times30$ grid and predict the exceedance probability distribution at the center of each grid cell using the proposed method. Figure \ref{fig:mesh_1} displays the predicted exceedance probabilities $P(Y>v)$ for $v=0$, $10$, and $20$. 
The predicted maps vary smoothly over the spatial domain, reflecting the spatial dependence captured by the proposed model. As the temperature thresholds increase, the predicted exceedance probabilities decrease uniformly across the region.

\begin{figure}
\single
    \centering
    \includegraphics[width=0.99\linewidth]{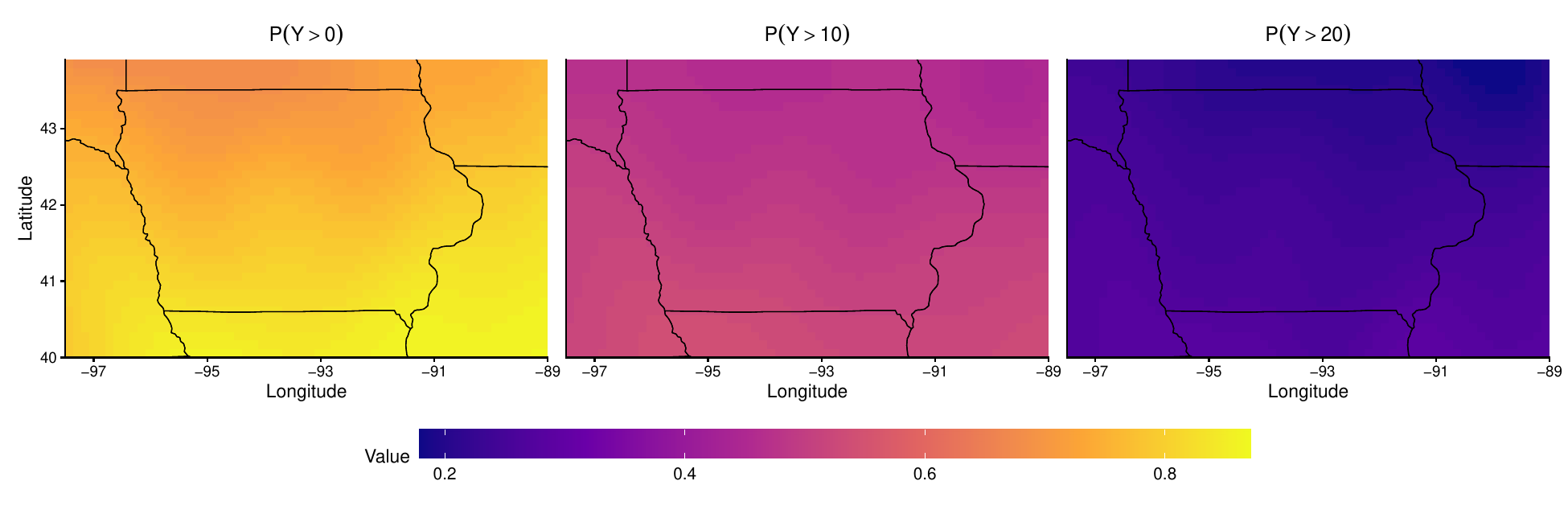}
    \caption{Predicted exceedance probability maps by the proposed method, $P(Y>v)$, for $v=0$ (left), $10$ (center), and $20$ (right).}
    \label{fig:mesh_1}
\end{figure}

\subsection{Joint Probability Densities of Daily Minimum and Maximum Temperatures}\label{sec:app_twodim_prob}
Using the same dataset as in Section~\ref{sec:app_univariate_prob}, we predicted the joint probability densities of daily minimum and maximum temperatures in Iowa. To reduce temporal heterogeneity and potential spatial nonstationarity,
we restrict the analysis to the June--September period.

For each station $i=1,\ldots,236$, let 
$
(Z_{i1}^S,W_{i1}^S),\ldots,(Z_{im_i^S}^S,W_{im_i^S}^S),
$
denote the pairs of daily minimum and maximum temperatures observed during the June--September period, where $m_i^S$ denotes the number of observations at station $i$. 
Based on these observations, we construct a two-dimensional probability density 
function $f_i$ for the joint distribution of daily minimum and maximum 
temperatures, following the procedure of \citet{zhu2024spherical}. 
Specifically, we first perform histogram smoothing on a $50\times50$ grid 
and then apply local linear kernel smoothing. For the local linear smoothing 
step, we use the \texttt{Lwls2D} function in the \texttt{fdapace} package \citep{fdapace}
with bandwidths
$
({\max_{i,t} Z_{it}^S-\min_{i,t} Z_{it}^S})/5
$
and
$
({\max_{i,t} W_{it}^S-\min_{i,t} W_{it}^S})/{5},
$
respectively. 
We then truncate negative fitted values at zero and normalize the resulting estimate so that its numerical integral over
the grid equals one.
We equip the space of two-dimensional probability density functions with the Fisher--Rao metric and treat each density function $f_i$ as an element of the resulting metric space (Example \ref{exm:multdist}).
The density preprocessing, including the construction of the common grid and the selection of the smoothing bandwidths, is performed once using all 236 stations before the subsequent evaluation of predictive performance using cross-validation, and the resulting density functions are treated as fixed throughout the evaluation.

We estimated the variogram using the procedure described in Section~\ref{sec:implement}, with $K=10$ and the Mat\'ern correlation function \eqref{eq:matern}. 
Figure~\ref{fig:app_dist_vario_2} shows the resulting empirical variogram together with the fitted model. 
The fitted variogram closely follows the empirical estimates, especially at short distances, which are particularly important for kriging prediction. 
The estimated parameters are $\hat{\tau}^2=0.024$ and
$\hat{\tau}^2+\hat{\sigma}^2=0.079$.

\begin{figure}[t]
\single
  \begin{minipage}[b]{0.49\linewidth}
    \centering
\includegraphics[width=0.9\columnwidth]{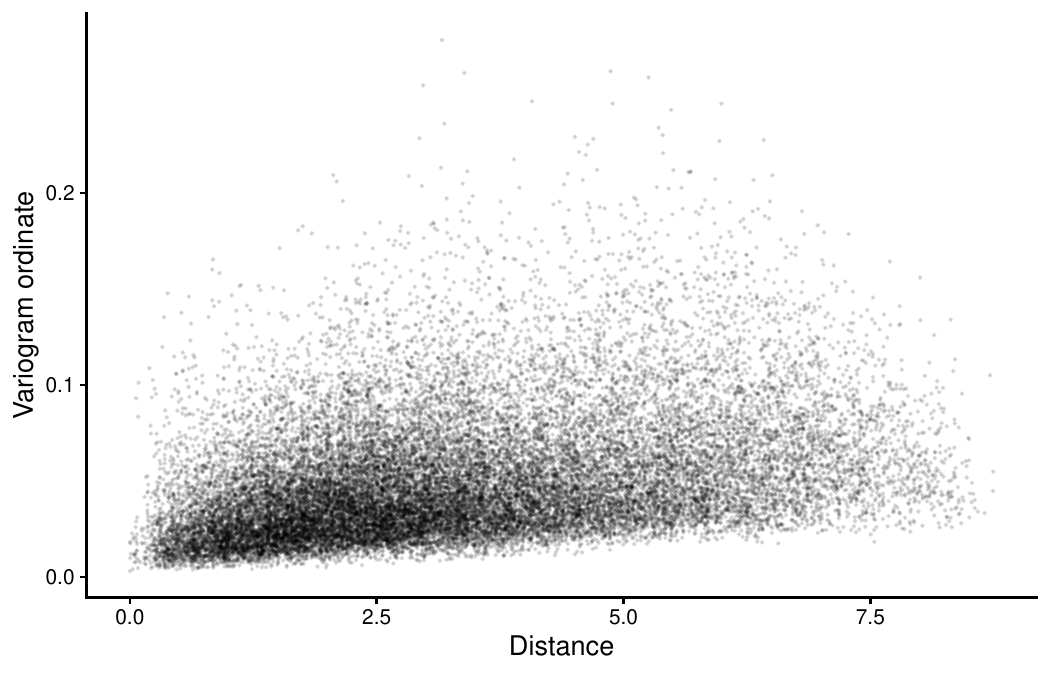}
    \subcaption{Empirical variogram}
  \end{minipage}
  \begin{minipage}[b]{0.49\linewidth}
    \centering
\includegraphics[width=0.9\columnwidth]{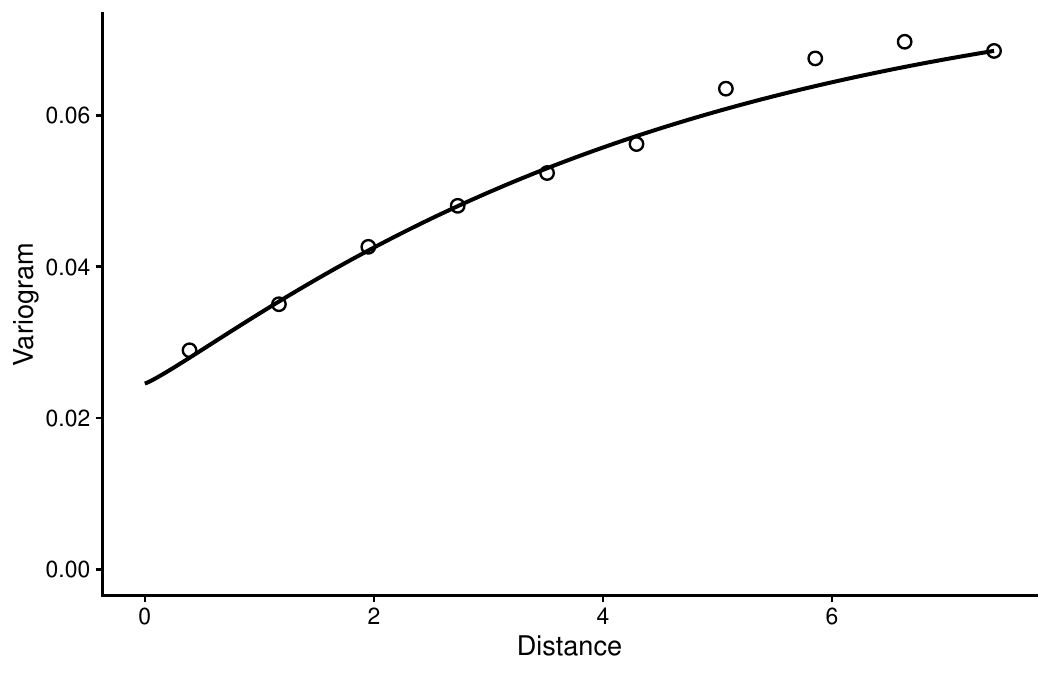}
    \subcaption{Mat\'ern variogram fit}
  \end{minipage}
  \caption{Results of variogram estimation for the joint densities of daily minimum and maximum temperatures in Iowa. Panel (a) shows the empirical variogram based on Euclidean distances between longitude--latitude coordinates (in degrees) and the corresponding variogram ordinates. Panel (b) shows the raw variogram estimates (circles) together with the fitted Mat\'ern variogram model (solid line).}
  \label{fig:app_dist_vario_2}
\end{figure}

We next evaluate the predictive performance of the proposed method
against LFR and GFR using leave-one-out cross-validation.
Under the Fisher--Rao metric, we numerically compute an approximate
minimizer of the weighted Fr\'echet objective function using an iterative
algorithm on the unit sphere of nonnegative square-root densities.
Because the kriging weights may be negative, global optimality and
uniqueness of the minimizer are not guaranteed in general.
The algorithm is initialized with a normalized weighted extrinsic mean
based on the absolute values of the weights.
At each iteration, the estimate is updated using the spherical
logarithmic and exponential maps, followed by truncation to the positive
orthant and normalization.
The algorithm terminates when the norm of the update direction falls
below $10^{-8}$ or after 200 iterations.
Table~\ref{tab:app_dist_2} summarizes the results.
Among the methods considered, Fr\'echet kriging achieves the
lowest mean cross-validation error, with reductions of approximately
4.42\% and 19.02\% relative to LFR and GFR, respectively, in this dataset.

\begin{table}[t]
\single
    \centering
    \begin{tabular}{ccc}
       FK & LFR&GFR \\ \hline  
     \textbf{0.0281}(0.0219) &0.0294(0.0225) &0.0347(0.0215) 
    \end{tabular}
   \caption{Summary of the leave-one-out cross-validation errors for the
joint densities of daily minimum and maximum temperatures in Iowa,
based on squared Fisher--Rao distance over 236 held-out stations.
Values are mean site-level prediction errors, with standard deviations
of the site-level errors in parentheses.
FK denotes the proposed Fr\'echet kriging method, and LFR and GFR denote
local and global Fr\'echet regression, respectively
\citep{petersen2019frechet}.}
    \label{tab:app_dist_2}
\end{table}

To illustrate the predictive performance of the proposed method, we consider two representative stations whose cross-validation errors are closest to the first and third quartiles of the errors across all stations. Figure~\ref{fig:one_shot_twodim} displays heatmaps of the observed density functions alongside their Fr\'echet kriging predictions at these two stations. In both cases, the predicted densities are similar to the observed densities in terms of their overall location, spread, and shape.

\begin{figure}[t]
\single
    \centering
    \includegraphics[width=0.99\linewidth]{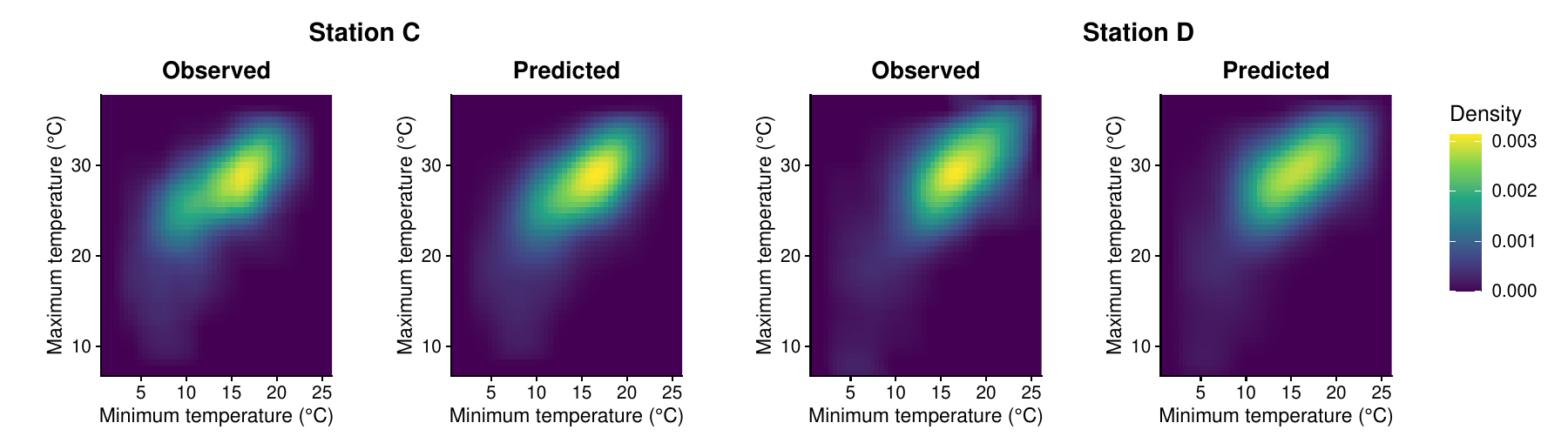}
    \caption{Heatmaps of the observed density functions and those predicted by the proposed method at two stations. Stations C and D are located at (longitude, latitude) $=(-91.45, 42.47)$ and $=(-96.73, 43.59)$, respectively. The squared Fisher--Rao distances between the observed and predicted distributions are 0.0140 and 0.0344 at Stations C and D, respectively.}
    \label{fig:one_shot_twodim}
\end{figure}

Finally, we divide the spatial domain into the $60 \times 30$ grid and predict the joint probability density function at the center of each grid cell. Using the predicted densities, we then calculate several joint exceedance probabilities. Figure~\ref{fig:mesh_2} presents the predicted probabilities
$P(T_{\min}>17.5,\,T_{\max}>27.5)$,
$P(T_{\min}>17.5,\,T_{\max}>32.5)$, and
$P(T_{\min}>22.5,\,T_{\max}>32.5)$.
The resulting probability maps exhibit smooth spatial variation, reflecting the spatial dependence captured by the proposed method. As the thresholds for $T_{\min}$ and $T_{\max}$ increase, the corresponding joint exceedance probabilities decrease, as expected.

\begin{figure}
\single
    \centering
    \includegraphics[width=0.99\linewidth]{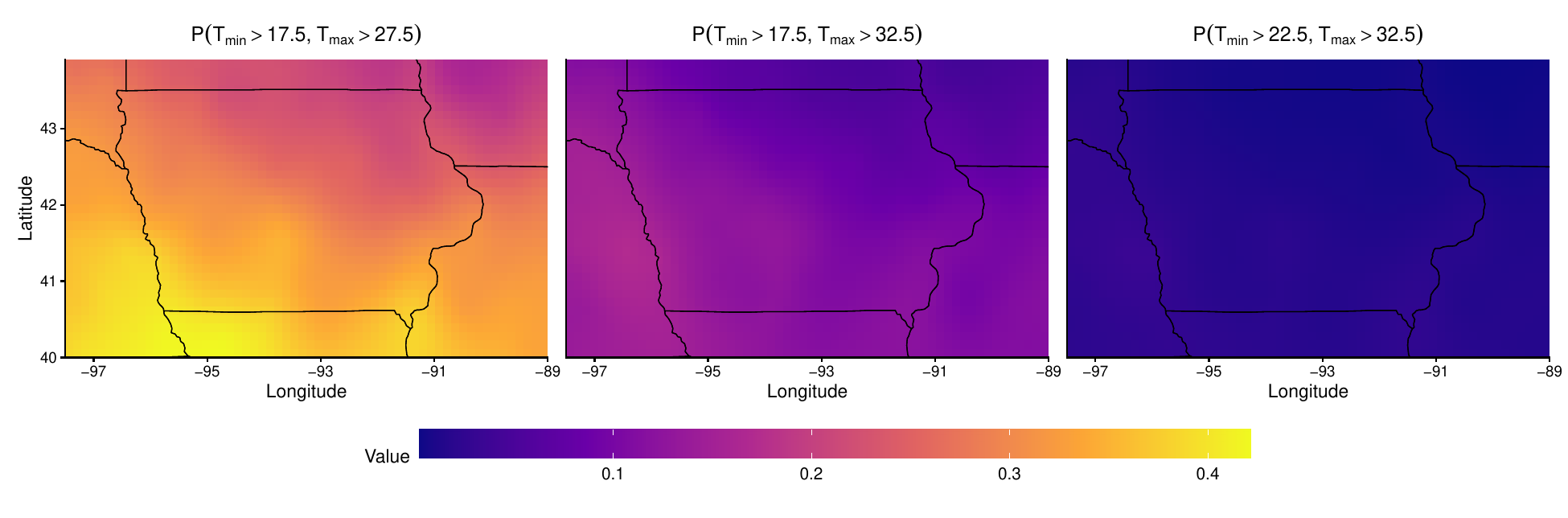}
    \caption{Predicted probability maps by the proposed method, $P(T_{\min}>17.5,\,T_{\max}>27.5)$ (left),
$P(T_{\min}>17.5,\,T_{\max}>32.5)$ (center), and
$P(T_{\min}>22.5,\,T_{\max}>32.5)$ (right).}
    \label{fig:mesh_2}
\end{figure}

\subsection{Inter-Industry Monetary Flow Networks}\label{sec:app_nwk}
Finally, we consider a spatial network dataset derived from the 2018 European 
Multi-Regional Input--Output (EMRIO) dataset \citep{huang2023european}. To better satisfy the stationarity assumption, we restrict our analysis to 61 regions in Germany, Belgium, the Netherlands, and Luxembourg. The study region is shown in Figure~\ref{fig:Germany}.

\begin{figure}
    \centering
    \includegraphics[width=0.5\linewidth]{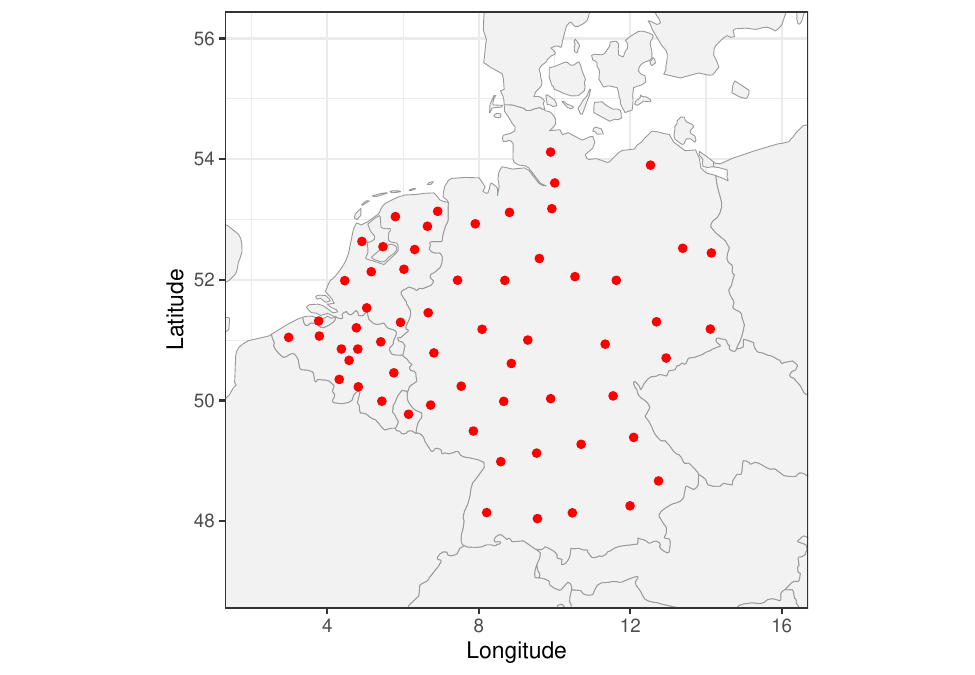}
    \caption{61 regions in the network dataset.}
    \label{fig:Germany}
\end{figure}

For each region, we construct an undirected $10\times10$ weighted network 
representing monetary flows among 10 major economic sectors. Following the 
NACE Level 1 classification used in the EMRIO dataset, the sectors are 
Primary, Industry, Construction, Distribution, ICT, Financial services, 
Real estate activities, Professional services, Public services, and Other 
services. The original monetary flows are measured 
in millions of US dollars.
Let $X_{ij}$ denote the original monetary flow from sector $i$ to sector 
$j$ within a region. For each pair of distinct sectors $i$ and $j$, we 
define the total monetary flow between them as $X_{ij}+X_{ji}$, thereby 
constructing an undirected network. Since the original monetary flows are 
highly right-skewed, we apply the transformation $\log(1+x)$ to the total 
flows. Specifically, the $(i,j)$th entry of the adjacency matrix is defined as $A_{ij}=\log\{1+X_{ij}+X_{ji}\}$, $i\neq j$, with $A_{ii}=0$. Thus, each region is represented by a symmetric weighted adjacency matrix 
with zero diagonal entries (i.e., no self-loops), whose edge weights quantify 
the log-transformed total monetary flows between economic sectors. We then represent each network 
by its graph Laplacian matrix and equip the space of graph Laplacian 
matrices with the Frobenius metric (Example~\ref{exm:network}).

The empirical and estimated variograms are plotted in Figure \ref{fig:app_dist_vario_3}. 
We estimated the variogram following the procedure described in Section \ref{sec:implement}, setting $K=11$ and using the cardinal sine correlation function \citep{chiles1999geostatistics}, $\rho(u; \phi) = \frac{\sin(u/\phi)}{u/\phi}$
to accommodate the non-monotone behavior observed in the empirical variogram.
As shown in the figure, the fitted variogram captures the overall non-monotone 
pattern of the empirical variogram, including its initial increase followed by 
a subsequent decrease.
The estimated nugget variance and sill are $\hat{\tau}^2=304.6$ and 
$\hat{\tau}^2+\hat{\sigma}^2=546.7$, respectively. 
Thus, the estimated nugget variance accounts for approximately 55.7\% of the 
sill, indicating a substantial nugget effect, while the remaining 44.3\% is 
attributable to spatially structured variation.

\begin{figure}[t]
\single
  \begin{minipage}[b]{0.49\linewidth}
    \centering
\includegraphics[width=0.9\columnwidth]{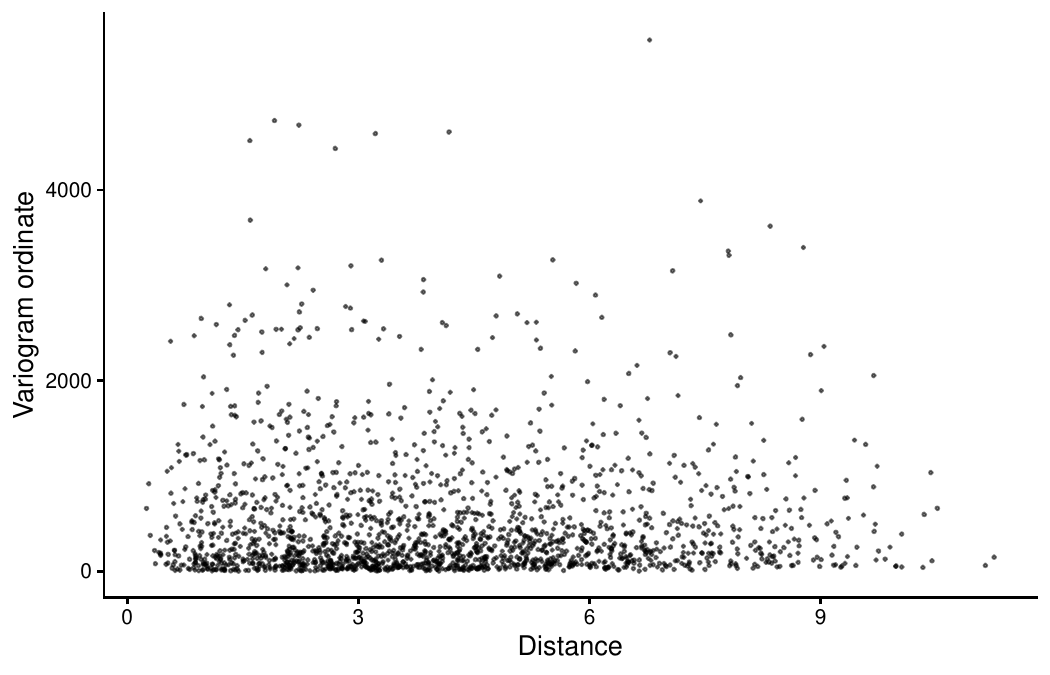}
    \subcaption{Empirical variogram}
  \end{minipage}
  \begin{minipage}[b]{0.49\linewidth}
    \centering
\includegraphics[width=0.9\columnwidth]{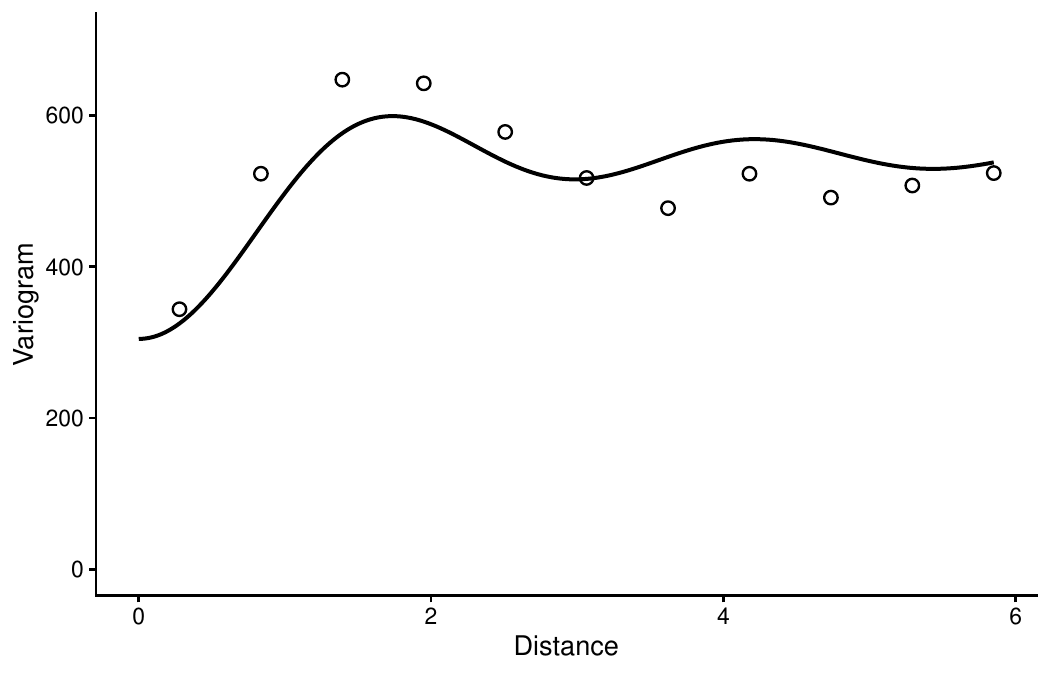}
    \subcaption{Cardinal sine variogram fit}
  \end{minipage}
  \caption{Results of variogram estimation for the monetary flow network data. Panel (a) shows the empirical variogram based on Euclidean distances between longitude--latitude coordinates (in degrees) and the corresponding variogram ordinates. Panel (b) shows the raw variogram estimates (circles) together with the fitted cardinal sine variogram model (solid line).}
  \label{fig:app_dist_vario_3}
\end{figure}

Next, using leave-one-out cross-validation, we compare the prediction performance of the proposed method with that of local and global network regression of \cite{zhou2022network}. 
For the local network regression, the bandwidth parameters are selected by cross-validation within each training fold.
Table~\ref{tab:app_dist_3} shows the results. The proposed Fr\'echet kriging achieves the smallest mean cross-validation error among the competing methods.
Compared with the local and global network regression methods, FK reduces the mean prediction error by approximately 11.2\% and 15.0\%, respectively, demonstrating improved prediction performance over both competing approaches.

\begin{table}[t]
\single
    \centering
    \begin{tabular}{ccc}
       FK & LNR&GNR \\ \hline  
     \textbf{468.7}(550.3) & 527.8(587.5)&551.5(649.7) 
    \end{tabular}
   \caption{Summary of the leave-one-out cross-validation errors for the
network data, based on squared Frobenius distance over 61 held-out
regions. Values are mean site-level prediction errors, with standard
deviations of the site-level errors in parentheses.
FK denotes the proposed Fr\'echet kriging method, while LNR and GNR denote
the local and global network regression methods, respectively
\citep{zhou2022network}.}
    \label{tab:app_dist_3}
\end{table}

We further selected two regions whose cross-validation errors under the 
proposed method are closest to the first and third quartiles of the errors 
across all regions. Figure \ref{fig:one_shot_nwk} shows the heatmaps of the observed networks and those predicted by the proposed method. Although the discrepancy is somewhat larger for Region B, 
the displayed adjacency patterns are visually similar.

\begin{figure}[t]
\single
    \centering
    \includegraphics[width=0.99\linewidth]{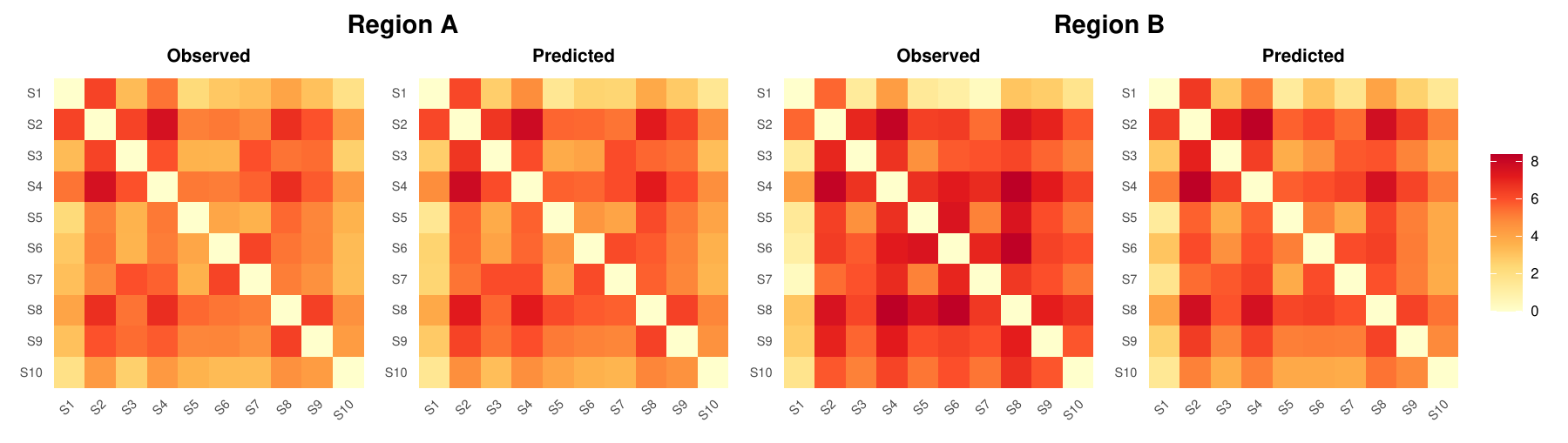}
    \caption{Heatmaps of the observed networks and those predicted by the 
proposed method in two regions. Regions A and B are located at 
(longitude, latitude) $=(5.80, 53.05)$ and $=(4.37, 50.85 )$, respectively. 
The squared Frobenius distances between the observed and predicted graph 
Laplacians are 73.6 and 604.8 for Regions A and B, respectively.}
    \label{fig:one_shot_nwk}
\end{figure}

\section{Concluding Remarks}\label{sec:conc_rmk}

In this paper, we proposed Fr\'echet kriging, a unified framework for
spatial prediction of metric space-valued data. The proposed method
provides an intrinsic extension of ordinary kriging that relies only on
distances between responses, without requiring an embedding into a
linear space. We established theoretical properties of the proposed
predictor and demonstrated its practical utility through simulation
studies and real data applications involving various types of metric
space-valued responses.
The current consistency theory concerns a fixed working variogram with
stable kriging weights and vanishing observation error. Establishing
corresponding guarantees for predictors based on data-dependent
estimated variograms remains an important direction for future work.

Another important direction for future research is to extend
Fr\'echet kriging to accommodate spatially varying mean structures,
analogous to universal kriging \citep{cressie2015statistics}. Such an
extension requires an appropriate formulation of the trend and residual
spatial dependence for metric space-valued responses, where subtraction
and other linear operations are generally unavailable.
A further promising direction is an extension to spatio-temporal data
\citep{cressie2011statistics}. By characterizing dependence jointly
over spatial and temporal lags, Fr\'echet kriging could be extended to
predict metric space-valued processes evolving over both space and time.

\single
\bibliographystyle{apalike}
\bibliography{ref}
\double

 \newpage
 \begin{center}
 {\large\bf SUPPLEMENTARY MATERIAL}
 \end{center}
\setcounter{section}{0}
 \renewcommand{\thesection}{S.\arabic{section}}
 \setcounter{lem}{0}
\renewcommand{\thelem}{S.\arabic{lem}}

\renewcommand{\theequation}{S\arabic{equation}}
\setcounter{equation}{0}

\section{Additional Details on the Remarks}

\subsection{Details on Remark \ref{rmk:manifold}}
\label{sec:kriging_manifold}
We first provide an alternative representation of the Fr\'echet 
kriging predictor under Assumptions~\ref{ass:isometric} and 
\ref{ass:convex}.  Define the weighted average of the embedded responses 
in the Hilbert space $\mathcal H$ and its metric projection onto $\Psi(\Omega)$ by
\begin{equation}
    \check{Z}(x_0)
=
\sum_{i=1}^n \lambda_i(x_0) \Psi(Y_i)
\quad \text{and} \quad 
\tilde{Z}(x_0)
=
\argmin_{z \in \Psi(\Omega)}
\|z-\check{Z}(x_0)\|_{\mathcal H}.
\end{equation}
By Assumption~\ref{ass:convex}, $\tilde{Z}(x_0)$ exists and is unique.

\begin{lem}\label{lem:alt_rep}
Under Assumptions~\ref{ass:isometric} and \ref{ass:convex}, the 
Fr\'echet kriging predictor \eqref{eq:fk} admits the representation
\begin{equation}
    \tilde{S}(x_0)
    =
    \Psi^{-1}(\tilde{Z}(x_0)).
\end{equation}
\end{lem}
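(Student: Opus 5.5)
The plan is to move the weighted Fr\'echet objective into the Hilbert space $\mathcal H$ through the isometric embedding $\Psi$ from Assumption~\ref{ass:isometric}. There, the constraint $\sum_i\lambda_i(x_0)=1$ turns the objective into a squared distance to $\check Z(x_0)$ plus a constant. Write $z_i=\Psi(Y_i)$ and $\lambda_i=\lambda_i(x_0)$. For $\omega\in\Omega$, set $z=\Psi(\omega)$, so that $d^2(Y_i,\omega)=\|z_i-z\|_{\mathcal H}^2$.

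First I expand the objective:
\begin{equation}
\sum_{i=1}^n\lambda_i\|z_i-z\|_{\mathcal H}^2
=\sum_{i=1}^n\lambda_i\|z_i\|_{\mathcal H}^2-2\Big\langle \sum_{i=1}^n\lambda_i z_i, z\Big\rangle_{\mathcal H}+\Big(\sum_{i=1}^n\lambda_i\Big)\|z\|_{\mathcal H}^2 .
\end{equation}
Using $\sum_i\lambda_i=1$, this equals $\|z-\check Z(x_0)\|_{\mathcal H}^2+C$. Here $C=\sum_i\lambda_i\|z_i\|_{\mathcal H}^2-\|\check Z(x_0)\|_{\mathcal H}^2$ does not depend on $z$. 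This identity holds for arbitrary real weights, so negative weights cause no difficulty. The sum-to-one property of \eqref{eq:weights_vario} follows from the last row of the bordered system, which imposes $\mathbf 1_n^\top\lambda=1$.

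Second, $\Psi$ is an isometry, hence injective, and it maps $\Omega$ bijectively onto $\Psi(\Omega)$. Minimizing the objective over $\omega\in\Omega$ is therefore equivalent to minimizing $\|z-\check Z(x_0)\|_{\mathcal H}$ over $z\in\Psi(\Omega)$. By Assumption~\ref{ass:convex}, $\Psi(\Omega)$ is nonempty, closed and convex. The Hilbert projection theorem \citep[Chapter 3]{deutsch2001best} then gives a unique minimizer, namely $\tilde Z(x_0)$. Consequently the objective in \eqref{eq:fk} has the unique minimizer $\Psi^{-1}(\tilde Z(x_0))$. This proves both the representation $\tilde S(x_0)=\Psi^{-1}(\tilde Z(x_0))$ and the existence and uniqueness claimed earlier in the paper. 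The same argument applies verbatim to $\hat\lambda(x_0)$ in \eqref{eq:fk_vario_est}, and to any $\lambda\in\Lambda_n$ in \eqref{eq:general_predictor}.

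The only delicate point, and the step I expect to be the main obstacle, is making sure that nothing requires nonnegative weights. With some negative weights the objective is not a convex combination of distances, so the standard Fr\'echet mean existence arguments do not apply. The algebraic completion of the square sidesteps this entirely: it needs only $\sum_i\lambda_i=1$ and the inner-product structure of $\mathcal H$. I will also note explicitly that $\check Z(x_0)$ lies in $\mathcal H$ but not necessarily in $\Psi(\Omega)$, which is exactly why the projection step is needed.
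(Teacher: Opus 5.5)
Your proposal is correct and follows essentially the same route as the paper's proof. Like the paper, you complete the square in $\mathcal H$ using $\sum_i\lambda_i(x_0)=1$; your constant $C=\sum_i\lambda_i\|z_i\|_{\mathcal H}^2-\|\check Z(x_0)\|_{\mathcal H}^2$ is exactly the paper's $\sum_i\lambda_i\|\Psi(Y_i)-\check Z(x_0)\|_{\mathcal H}^2$. You then pass to the metric projection onto $\Psi(\Omega)$ through the injective isometry $\Psi$, and your explicit remarks on negative weights and on existence and uniqueness only make explicit what the paper's argument uses implicitly.
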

The proof is deferred to Section \ref{sec:proof_lem1}. 
Lemma \ref{lem:alt_rep} shows that, whenever the metric space admits an isometric embedding onto a closed convex subset of a Hilbert space, Fr\'echet kriging is equivalent to kriging in the ambient Hilbert space, followed by projection onto the embedded closed convex subset and application of the inverse embedding.

We next use Lemma~\ref{lem:alt_rep} to establish a connection between 
Fr\'echet kriging and the kriging approach for manifold-valued responses 
proposed by \citet{pigoli2016kriging}. Let $\Omega$ be a Riemannian 
manifold. For a reference point $\mu\in\Omega$, let
$\mathrm{Log}_{\mu}:\Omega\to T_{\mu}$ and
$\mathrm{Exp}_{\mu}:T_{\mu}\to\Omega$ denote the logarithmic and 
exponential maps, respectively, where $T_{\mu}$ is the tangent space 
at $\mu$. A projection-corrected version of the kriging predictor of
\citet{pigoli2016kriging} is defined by
\begin{equation}
    \tilde{S}_M(x_0)
    =
    \mathrm{Exp}_{\mu}(\tilde{Z}_M(x_0)),
    \quad
    \tilde{Z}_M(x_0)
    =
    \argmin_{z\in\mathrm{Log}_{\mu}(\Omega)}
    \left\|
    \sum_{i=1}^n
    \lambda_i^M(x_0)\mathrm{Log}_{\mu}(Y_i)-z
    \right\|_{T_{\mu}}.
\end{equation}
Here, $\|\cdot\|_{T_{\mu}}$ denotes the norm on the tangent space 
$T_{\mu}$, and the weights $(\lambda_i^M(x_0))_{i=1}^n$ are obtained 
as the solution to
\begin{equation}
    \min_{\lambda\in\mathbb{R}^n}
    E\left[
    \left\|
    \sum_{i=1}^n\lambda_i\mathrm{Log}_{\mu}(Y_i)
    -
    \mathrm{Log}_{\mu}(S(x_0))
    \right\|_{T_{\mu}}^2
    \right]
    \quad \text{subject to} \quad
    \sum_{i=1}^n\lambda_i=1.
\end{equation}
Note that the original predictor proposed by
\citet{pigoli2016kriging} does not include a metric projection onto
$\mathrm{Log}_{\mu}(\Omega)$. Since the exponential map is the inverse
of the logarithmic map on $\mathrm{Log}_{\mu}(\Omega)$, it is natural
to project the tangent-space predictor onto
$\mathrm{Log}_{\mu}(\Omega)$ before mapping it back to $\Omega$.

Suppose that the logarithmic map preserves distances, that is,
\begin{equation}
    d(x,y)
    =
    \|\mathrm{Log}_{\mu}(x)-\mathrm{Log}_{\mu}(y)\|_{T_{\mu}}
\end{equation}
for all $x,y\in\Omega$, and that $\mathrm{Log}_{\mu}(\Omega)$ is closed 
and convex in $T_{\mu}$. 
This is the case, for example, when $\Omega$ is the space of SPD matrices equipped with the Log-Euclidean metric \citep[Section~2]{pigoli2016kriging}, the space of SPSD matrices equipped with the square-root metric \citep[Section~2]{pigoli2016kriging}, or the Wasserstein space of univariate distributions with an atomless reference measure $\mu$ \citep{panaretos2020invitation}.
Then, we may take 
$\Psi=\mathrm{Log}_{\mu}$ in Assumption~\ref{ass:isometric}, with 
$\Psi^{-1}=\mathrm{Exp}_{\mu}$. Therefore, by 
Lemma~\ref{lem:alt_rep}, the Fr\'echet kriging predictor can be 
represented as
\begin{equation}
    \tilde{S}(x_0)
    =
    \mathrm{Exp}_{\mu}(\tilde{Z}(x_0)),
    \quad
    \tilde{Z}(x_0)
    =
    \argmin_{z\in\mathrm{Log}_{\mu}(\Omega)}
    \left\|
    \sum_{i=1}^n
    \lambda_i(x_0)\mathrm{Log}_{\mu}(Y_i)-z
    \right\|_{T_{\mu}}.
\end{equation}
Moreover, since the logarithmic map preserves distances, the criterion 
used to determine the weights $\lambda_M(x_0)$ coincides with that used 
to determine the Fr\'echet kriging weights $\lambda(x_0)$. 
Hence, $\lambda_M(x_0)=\lambda(x_0)$, and thus
$\tilde{S}_M(x_0)=\tilde{S}(x_0)$. Therefore, under these conditions, the Fr\'echet kriging predictor coincides with
the projection-corrected kriging predictor defined above.

\subsection{Details on Remark~\ref{rmk:dist}}
\label{sec:rmk_dist}

Let $(\Omega,d)$ be the $2$-Wasserstein space of univariate distributions
and let $\mathcal{H}=L^2[0,1]$ be the Hilbert space of square-integrable
functions on $[0,1]$. Then Assumptions~\ref{ass:isometric} and
\ref{ass:convex} hold with the map
$
\Psi(\mu)=Q_\mu.
$
Hence, by Lemma~\ref{lem:alt_rep}, the quantile function
$\tilde{Z}(x_0)$ of the Fr\'echet kriging predictor $\tilde{S}(x_0)$ is
given by
\begin{equation}
\tilde{Z}(x_0)
=
\argmin_{z\in\Psi(\Omega)}
\|z-\check{Z}(x_0)\|_{\mathcal H},
\qquad
\check{Z}(x_0)
=
\sum_{i=1}^n
\lambda_i(x_0)Q_{Y_i}.
\end{equation}

As Equation~(9) in \citet{gossard2025kriging} indicates, their kriging
weights $(\overline{\lambda}_i(x_0))_{i=1}^n$ coincide with our kriging
weights $(\lambda_i(x_0))_{i=1}^n$. Therefore, both predictors first use the
same weights to construct a weighted average of the quantile functions
as an initial predictor. They differ in the subsequent modification of
this initial predictor: our approach uses the metric projection onto the
closed convex set $\Psi(\Omega)$, whereas \citet{gossard2025kriging}
use a sorting procedure.

\section{Verification of the Assumptions for the Simulation Settings}
\subsection{Univariate Distributions}\label{sec:verify_univdist}
The quantile 
function of $S(x)=N\{\mu(x),\sigma^2(x)\}$ is given by
\[
    F_{S(x)}^{-1}(u)
    =
    \mu(x)+\sigma(x)\Phi^{-1}(u),
    \qquad u\in(0,1).
\]
Since the 2-Wasserstein metric is the $L^2$ distance between quantile 
functions, the Fr\'echet mean $\mu_S(x)$ of $S(x)$ has the quantile function
\begin{equation}
    F_{\mu_S(x)}^{-1}(u)
    =
    E[\mu(x)]
    +
    E[\sigma(x)]\Phi^{-1}(u)
    =
    E[\sigma(x)]\Phi^{-1}(u),
\end{equation}
where we used $E[\mu(x)]=0$. Since the marginal distribution of $Z(x)$ does not depend on $x$, the
distribution of $\sigma(x)=\exp\{Z(x)\}+1$ also does not depend on $x$. Therefore, 
$\mu_S(x)$ is constant in $x$. The Fr\'echet variance of $S(x)$ is
\begin{align}
    \sigma_S^2(x)
    &=
    E\left[
    d_W^2\{S(x),\mu_S(x)\}
    \right] 
    =
    E\int_0^1
    \left[
    \mu(x)
    +
    \{\sigma(x)-E[\sigma(x)]\}\Phi^{-1}(u)
    \right]^2du \\
    &=
    \operatorname{Var}(\mu(x))
    +
    \operatorname{Var}(\sigma(x)),
\end{align}
because $\int_0^1 \Phi^{-1}(u)\,du=0$, $\int_0^1 \{\Phi^{-1}(u)\}^2du=1$.
Since both $\operatorname{Var}(\mu(x))$ and 
$\operatorname{Var}(\sigma(x))$ are constant in $x$, 
$\sigma_S^2(x)$ is also constant.
The variogram of $S$ is given by
\begin{align}
    V_S(x,y)
    =
    \frac{1}{2}
    E\left[
    d_W^2\{S(x),S(y)\}
    \right] 
    =
    \frac{1}{2}
    E\left[
    \{\mu(x)-\mu(y)\}^2
    +
    \{\sigma(x)-\sigma(y)\}^2
    \right],
\end{align}
for $x \neq y$.
Since $(\mu(x),\mu(y))$ is jointly Gaussian, its distribution is completely determined by its mean vector and covariance matrix. In our setting, the mean vector is constant, while the covariance matrix depends on the correlation
$
\mathrm{Corr}\{\mu(x),\mu(y)\}=\exp(-\|x-y\|^2/25),
$
which is a function only of the distance $\|x-y\|$. Therefore, the joint distribution of $(\mu(x),\mu(y))$ depends only on $\|x-y\|$.
Similarly, 
the joint distribution of $(Z(x),Z(y))$ depends only on $\|x-y\|$. Since 
$(\sigma(x),\sigma(y))$ is a measurable function of $(Z(x),Z(y))$, its 
joint distribution also depends only on $\|x-y\|$. It follows 
that both
$
    E[\{\mu(x)-\mu(y)\}^2]
$
and
$
    E[\{\sigma(x)-\sigma(y)\}^2]
$
depend only on $\|x-y\|$. Hence, $V_S(x,y)$ is a function only of the 
separation distance $\|x-y\|$.

For the observation process $Y$, its quantile function is
\[
    F_{Y(x)}^{-1}(u)
    =
    \mu(x)+\varepsilon_1(x)
    +
    \{\sigma(x)+\varepsilon_2(x)\}\Phi^{-1}(u).
\]
Since $E\{\varepsilon_k(x)\}=0$ for $k=1,2$, the Fr\'echet mean of 
$Y(x)$ has quantile function
\begin{align}
    F_{\mu_Y(x)}^{-1}(u)
    &=
    E\{\mu(x)+\varepsilon_1(x)\}
    +
    E\{\sigma(x)+\varepsilon_2(x)\}\Phi^{-1}(u) 
    =
    E\{\sigma(x)\}\Phi^{-1}(u)
    =
    F_{\mu_S(x)}^{-1}(u).
\end{align}
Thus, $\mu_Y(x)=\mu_S(x)$ and is constant in $x$.
Moreover, since
$
    \operatorname{Var}\{\varepsilon_k(x)\}
    =
    \frac{a^2}{3}, k=1, 2, 
$
and the error terms are independent of the latent process, we have
\begin{align}
    \sigma_Y^2(x)
    =
    \operatorname{Var}\{\mu(x)+\varepsilon_1(x)\}
    +
    \operatorname{Var}\{\sigma(x)+\varepsilon_2(x)\} 
    =
    \sigma_S^2(x)+\frac{2a^2}{3}.
\end{align}
Hence, the Fr\'echet variance of $Y(x)$ is also constant in $x$.
Finally, for $x \neq y$,
\begin{align}
    V_Y(x,y)
    &=
    \frac{1}{2}
    E\left[
    \{\mu(x)+\varepsilon_1(x)-\mu(y)-\varepsilon_1(y)\}^2
    \right] 
    +
    \frac{1}{2}
    E\left[
    \{\sigma(x)+\varepsilon_2(x)-\sigma(y)-\varepsilon_2(y)\}^2
    \right] \\
    &=
    V_S(x,y)
    +
    \frac{2a^2}{3}.
\end{align}
Therefore, both $S$ and $Y$ have constant Fr\'echet means and 
variances, and their variograms depend only on the distance $\|x-y\|$. 
Hence, Assumptions~\ref{ass:const_mean} and 
\ref{ass:second_stationary} are satisfied.

\subsection{SPD Matrices}\label{sec:verify_SPD}
The Fr\'echet mean and variance of $S(x)$ are given by 
\begin{align}
    \mu_{S}(x) 
    &= U\operatorname{diag}\left(
\{E[\lambda_1(x)^{1/2}]\}^2,
\{E[\lambda_2(x)^{1/2}]\}^2,
\{E[\lambda_3(x)^{1/2}]\}^2
\right)U', \\
    \sigma^2_{S}(x) &= \sum_{k=1}^3 \text{Var}(\lambda_k(x)^{1/2}).
\end{align}
Since the distribution of $\lambda_k(x)$ does not depend on $x$ for each $k$, both $\mu_S(x)$ and $\sigma_S^2(x)$ are constant. 
The variogram of $S$ is given by 
\begin{equation}
V_S(x, y) = 
    \frac{1}{2}E[d\{S(x), S(y)\}^2]
    =\frac{1}{2}
    \sum_{k=1}^3 E[\{\lambda_k(x)^{1/2} -\lambda_k(y)^{1/2} \}^2],
\end{equation}
for $x \neq y$.
Since $(Z_k(x),Z_k(y))$ is jointly Gaussian, its distribution is completely determined by its mean vector and covariance matrix. In our setting, the mean vector is constant, while the covariance matrix depends on the correlation
$
\mathrm{Corr}\{Z_k(x),Z_k(y)\}=\exp(-\|x-y\|^2/25),
$
which is a function only of the separation distance $\|x-y\|$. Therefore, the joint distribution of $(Z_k(x),Z_k(y))$ depends only on $\|x-y\|$. Since
$
\{
\lambda_k(x)^{1/2}
-
\lambda_k(y)^{1/2}
\}^2
$
is a measurable function of $(Z_k(x),Z_k(y))$, it follows that
$
E[\{
\lambda_k(x)^{1/2}
-
\lambda_k(y)^{1/2}\}^2]
$
also depends only on $\|x-y\|$. 

Moreover, we have 
\begin{align}
\mu_Y(x)
&=
U\operatorname{diag}\Bigl(\!
    \{E[\lambda_1(x)^{1/2}\!+\varepsilon_1(x)]\}^2,
    \{E[\lambda_2(x)^{1/2}\!+\varepsilon_2(x)]\}^2,
    %\nonumber\\
%&\hspace{5.5cm}
    \{E[\lambda_3(x)^{1/2}\!+\varepsilon_3(x)]\}^2
\Bigr)U'
\nonumber\\
&=
\mu_S(x),
\\[0.5em]
\sigma_Y^2(x)
&=
\sum_{k=1}^3
\operatorname{Var}\{\lambda_k(x)^{1/2}+\varepsilon_k(x)\}
%\nonumber\\
=
\sigma_S^2(x)+a^2,
\\[0.5em]
V_Y(x,y)
&=
\frac{1}{2}
\sum_{k=1}^3
E\Bigl[
\bigl\{
\lambda_k(x)^{1/2}+\varepsilon_k(x)
-\lambda_k(y)^{1/2}-\varepsilon_k(y)
\bigr\}^2
\Bigr]
%\nonumber\\
=
V_S(x,y)+a^2.
\end{align}
for $x \neq y$.
This means Assumptions \ref{ass:const_mean} and \ref{ass:second_stationary} are satisfied.

\section{Proofs}

\subsection{Proof of Lemma \ref{lem:alt_rep}}\label{sec:proof_lem1}
\begin{proof}
For $x_0\in\mathbb{R}^p$, define the objective function
\begin{equation}
    M_n(x_0,\omega)
    =
    \sum_{i=1}^n
    \lambda_i(x_0)d^2(Y_i,\omega),
    \qquad \omega\in\Omega.
\end{equation}
Under Assumption~\ref{ass:isometric}, and using
$\sum_{i=1}^n\lambda_i(x_0)=1$, the objective function can be written as
\begin{align}
    M_n(x_0,\omega)
    &=
    \sum_{i=1}^n
    \lambda_i(x_0)
    \|\Psi(Y_i)-\Psi(\omega)\|_{\mathcal H}^2 \\
    &=
    \sum_{i=1}^n
    \lambda_i(x_0)
    \|\Psi(Y_i)-\check Z(x_0)\|_{\mathcal H}^2
    +
    \|\check Z(x_0)-\Psi(\omega)\|_{\mathcal H}^2.
\end{align}
Since the first term does not depend on $\omega$, it follows that
\begin{equation}
    \tilde S(x_0)
    =
    \argmin_{\omega\in\Omega}
    \|\check Z(x_0)-\Psi(\omega)\|_{\mathcal H}^2 
    =
    \Psi^{-1}\left\{
    \argmin_{z\in\Psi(\Omega)}
    \|\check Z(x_0)-z\|_{\mathcal H}^2
    \right\} 
    =
    \Psi^{-1}\{\tilde Z(x_0)\}.
\end{equation}
This completes the proof.
\end{proof}

\subsection{Proof of Proposition~\ref{prp:opt}}
\begin{proof}
For each $\lambda\in\Lambda_n$, by an argument similar to that in 
Lemma~\ref{lem:alt_rep}, the weighted Fr\'echet mean predictor 
$\tilde{S}(x_0;\lambda)$ in \eqref{eq:general_predictor} can be represented as
\begin{equation}
    \tilde{S}(x_0;\lambda)
    =
    \Psi^{-1}\{\tilde{Z}(x_0;\lambda)\},
\end{equation}
where
\begin{equation}
    \check{Z}(x_0;\lambda)
    =
    \sum_{i=1}^n \lambda_i \Psi(Y_i),
    \qquad
    \tilde{Z}(x_0;\lambda)
    =
    \argmin_{z\in\Psi(\Omega)}
    \|z-\check{Z}(x_0;\lambda)\|_{\mathcal H}.
\end{equation}
Together with the isometric property of $\Psi$, this implies that
\begin{equation}
    R(x_0;\lambda)
    =
    E\left[
    \|\tilde{Z}(x_0;\lambda)-\Psi\{S(x_0)\}\|_{\mathcal H}^2
    \right].
    \label{eq:risk_representation}
\end{equation}
Since the metric projection onto the nonempty closed and convex set 
$\Psi(\Omega)$ in the Hilbert space $\mathcal H$ is 1-Lipschitz
\citep[Section~4.3]{bauschke2017convex}, and since 
$\Psi(S(x_0))\in\Psi(\Omega)$, we have
\begin{equation}
    \|\tilde{Z}(x_0;\lambda)-\Psi(S(x_0))\|_{\mathcal H}
    \le
    \|\check{Z}(x_0;\lambda)-\Psi(S(x_0))\|_{\mathcal H}.
    \label{eq:projection_inequality}
\end{equation}
Combining \eqref{eq:risk_representation} and
\eqref{eq:projection_inequality} and taking expectations yields the
conclusion of part~(i).

For part (ii), we first show that
\begin{equation}
E\!\left[\|\Psi(Y_i)-\Psi(Y(x_0))\|_{\mathcal H}^2\right]
=
E\!\left[\|\Psi(Y_i)-\Psi(S(x_0))\|_{\mathcal H}^2\right]
+\tau^2.
\label{eq:vario_equality}
\end{equation}
To see this, write
$
\varepsilon_0
=
\Psi(Y(x_0))-\Psi(S(x_0)).
$
Then
\begin{align}
E\!\left[\|\Psi(Y_i)-\Psi(Y(x_0))\|_{\mathcal H}^2\right]
={}&
E\!\left[\|\Psi(Y_i)-\Psi(S(x_0))\|_{\mathcal H}^2\right]
+E\!\left[\|\varepsilon_0\|_{\mathcal H}^2\right] \\
&-2E\!\left[
\left\langle
\Psi(Y_i)-\Psi(S(x_0)),\varepsilon_0
\right\rangle_{\mathcal H}
\right].
\label{eq:decompose_hilbert}
\end{align}
Since the perturbation map $\mathcal{P}_{x_0}$ is independent of $S(x_0)$, by the isometric property of $\Psi$ and
\eqref{eq:pertmat_var},
\begin{equation}
    E\left[\|\varepsilon_0\|_{\mathcal H}^2\right]
    =
    E\left[
        d^2\{\mathcal P_{x_0}(S(x_0)),S(x_0)\}
    \right]
    =
    E\left[E\left[
        d^2\{\mathcal P_{x_0}(S(x_0)),S(x_0)\}
     \mid S(x_0)\right]\right]
    =
    \tau^2.
    \label{eq:var_tau2}
\end{equation}
Moreover, let
$
m_0 = E[\Psi\{Y(x_0)\}\mid S(x_0)],
$
where the conditional expectation is understood in the Bochner sense in the Hilbert space $\mathcal H$.
Since $\Psi\{Y(x_0)\}\in\Psi(\Omega)$ and
$\Psi(\Omega)$ is closed and convex, we have
$m_0\in\Psi(\Omega)$ \citep[Chapter II]{diestel1977vector}. Moreover, the conditional expectation $m_0$
is the unique minimizer of
$
E\left[
\|\Psi\{Y(x_0)\}-y\|_{\mathcal H}^2
\mid S(x_0)
\right]
$
over $y\in\mathcal H$, and hence also over $y\in\Psi(\Omega)$.
By the isometric property of $\Psi$, the independence of $\mathcal{P}_{x_0}$ and $S(x_0)$, and the condition
\eqref{eq:pertmat_mean},
\[
\begin{aligned}
m_0
&=
\argmin_{y\in\Psi(\Omega)}
E\left[
\|\Psi\{Y(x_0)\}-y\|_{\mathcal H}^2
\mid S(x_0)
\right]\\
&=
\Psi\left[
\argmin_{\omega\in\Omega}
E\left[
d^2\{\mathcal{P}_{x_0}(S(x_0)),\omega\}
\mid S(x_0)
\right]
\right]\\
&=
\Psi\{S(x_0)\}.
\end{aligned}
\]
Therefore,
\[
E[\varepsilon_0\mid S(x_0)]
=
E[\Psi\{Y(x_0)\}\mid S(x_0)]-\Psi\{S(x_0)\}
=0.
\]
 Since $\mathcal{P}_{x_0}$ is
independent of the latent process $S$ and of $\mathcal{P}_{x_i}$, the preceding
centering property implies
\[
E[\varepsilon_0\mid S, \mathcal{P}_{x_i}]
=
E[\varepsilon_0\mid S]
=
E[\varepsilon_0\mid S(x_0)]
=
0.
\]
Moreover, by the tower property of conditional expectation,
\begin{align}
E\left[
\left\langle
\Psi(Y_i)-\Psi\{S(x_0)\},
\varepsilon_0
\right\rangle_{\mathcal H}
\right]
=
E\left[
\left\langle
\Psi(Y_i)-\Psi\{S(x_0)\},
E[\varepsilon_0\mid S, \mathcal{P}_{x_i}]
\right\rangle_{\mathcal H}
\right]
=0.
\label{eq:inner_0}
\end{align}
Combining \eqref{eq:decompose_hilbert}, \eqref{eq:var_tau2} and  \eqref{eq:inner_0}, we obtain \eqref{eq:vario_equality}.

Next, let $\Gamma_H$ be the $n\times n$ matrix whose $(i,j)$th element is
$
    E[
    \|\Psi(Y(x_i))-\Psi(Y(x_j))\|_{\mathcal H}^2
    ]/2,
$
 $\gamma_H(x_0)$ be the $n$-dimensional vector whose $i$th element is
$
    E\left[
    \|\Psi(Y(x_0))-\Psi(Y(x_i))\|_{\mathcal H}^2
    \right]/2,
$
and $c_H(x_0)$ be the $n$-dimensional vector whose $i$th element is
$
    E\left[
    \|\Psi(S(x_0))-\Psi(Y(x_i))\|_{\mathcal H}^2
    \right]/2.
$
The above argument implies that $c_H(x_0) = \gamma_H(x_0) - (\tau^2/2) \mathbf{1}_n$. Using $\sum_{i=1}^n \lambda_i = 1$, a direct expansion gives
\begin{equation}
    E[\|\check{Z}(x_0;\lambda)-\Psi(S(x_0))\|_{\mathcal H}^2]
    =
    -\lambda^\top \Gamma_H \lambda + 2\lambda^{\top}c_H(x_0)
    =
    -\lambda^\top \Gamma_H \lambda + 2\lambda^{\top}\gamma_H(x_0)-\tau^2.
\end{equation}
 The minimizer of the right hand side of the above expression  is given by 
\begin{equation}
    \lambda_H(x_0)
    =
    \begin{pmatrix}
        I_n & \mathbf{0}_n
    \end{pmatrix}
    \begin{pmatrix}
        \Gamma_H & \mathbf{1}_n \\
        \mathbf{1}_n^\top & 0
    \end{pmatrix}^{-1}
    \begin{pmatrix}
        \gamma_H(x_0) \\
        1
    \end{pmatrix}.
\end{equation}
By the isometric property of $\Psi$, $\Gamma_H$ and $\gamma_H(x_0)$
coincide with $\Gamma$ and $\gamma(x_0)$ in
\eqref{eq:weights_vario}, respectively. Hence,
$\lambda_H(x_0)=\lambda(x_0)$, which proves part~(ii).
\end{proof}

\subsection{Proof of Theorem \ref{thm:miss}}

For the proof of Theorem \ref{thm:miss}, the following lemma plays a key role. Its proof is deferred to
Section~\ref{sec:average}.

\begin{lem}\label{lem:average}
Under Assumption~\ref{ass:miss}(ii), (iv), and (v), we have
\begin{equation}
    T_n g
    :=
    \sum_{i=1}^n
    \lambda_{\psi,i}(x_0)g(x_i)
    \to g(x_0)
    \qquad \text{as } n\to\infty,
    \label{eq:scalar_reproduction}
\end{equation}
for every $g\in C(A)$, where $C(A)$ denotes the space of
continuous real-valued functions on $A$.
\end{lem}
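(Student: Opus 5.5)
The plan is a density argument of Banach--Steinhaus type. By Assumption~\ref{ass:miss}(v), the maps $T_n:C(A)\to\mathbb R$ are linear functionals with $|T_n g|\le L\,\|g\|_\infty$, where $L=\sup_n\sum_i|\lambda_{\psi,i}(x_0)|<\infty$. The limit functional $g\mapsto g(x_0)$ is also bounded. It therefore suffices to prove \eqref{eq:scalar_reproduction} on a subset $\mathcal D$ whose linear span is dense in $C(A)$ for the sup norm. Indeed, given $g$ and $\varepsilon>0$, pick $h$ in the span with $\|g-h\|_\infty<\varepsilon$. Then $|T_ng-g(x_0)|\le (L+1)\varepsilon+|T_nh-h(x_0)|$. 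I will allow complex-valued test functions and take real parts at the end; this is legitimate because the weights are real.

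\emph{Step 1 (the kriging variance vanishes).} Let $\sigma_n^2=\min_{\lambda\in\Lambda_n}Q(\lambda)$, where
\[
Q(\lambda)=\int_{\mathbb R^p}\Bigl|\sum_{i=1}^n\lambda_i e^{\mathrm i\xi^\top x_i}-e^{\mathrm i\xi^\top x_0}\Bigr|^2\nu(d\xi).
\]
Expanding the square and using $\sum_i\lambda_i=1$ together with Assumption~\ref{ass:miss}(iv) gives $Q(\lambda)=-\lambda^\top\Gamma_\psi\lambda+2\lambda^\top\gamma_\psi(x_0)$. Here $\Gamma_\psi$ and $\gamma_\psi(x_0)$ are the matrix and vector in \eqref{eq:weights_vario} with $\varphi_Y$ replaced by $\psi$, up to a harmless constant factor. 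The constrained minimizer is exactly $\lambda_\psi(x_0)$, by the same computation as in part (ii) of the proof of Proposition~\ref{prp:opt}. Comparing with the competitor $\lambda=e_{i^*}$, where $x_{i^*}$ is the sampling location nearest to $x_0$, gives
\[
\sigma_n^2=Q(\lambda_\psi(x_0))\le Q(e_{i^*})=2W(x_{i^*},x_0).
\]
The integrability condition $\int(1\wedge\|\xi\|^2)\,d\nu<\infty$ and dominated convergence make $W$ continuous with $W(x,x)=0$. Hence Assumption~\ref{ass:miss}(ii) yields $\sigma_n^2\to0$.

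\emph{Step 2 (convergence on a native class).} For $h\in L^2(\nu)$, set $g_h(x)=\int e^{\mathrm i\xi^\top x}h(\xi)\,\nu(d\xi)$. Then $T_ng_h-g_h(x_0)=\int\{\sum_i\lambda_{\psi,i}e^{\mathrm i\xi^\top x_i}-e^{\mathrm i\xi^\top x_0}\}h\,d\nu$. By Cauchy--Schwarz, $|T_ng_h-g_h(x_0)|\le\|h\|_{L^2(\nu)}\,\sigma_n\to0$. Constants are reproduced exactly, since $\sum_i\lambda_{\psi,i}=1$.

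\emph{Step 3 (density; the main obstacle).} Let $\overline{\mathcal D}$ be the sup-norm closure in $C(A;\mathbb C)$ of the span of the functions $g_h$. Convergence passes to $\overline{\mathcal D}$ by the $\varepsilon/3$ argument above, so it remains to show $\overline{\mathcal D}=C(A;\mathbb C)$. The ingredients are as follows.
\begin{enumerate}[label=(\alph*)]
\item Fix $\xi_0$ in the open set $U\subset\operatorname{supp}\nu$. For every ball $B\subset U$ we have $\nu(B)>0$. Taking $h=\mathbf 1_B/\nu(B)$ for shrinking balls around $\xi\in U$ gives $g_h\to e^{\mathrm i\xi^\top x}$ uniformly on the compact set $A$. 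Hence $e^{\mathrm i\xi^\top x}\in\overline{\mathcal D}$ for all $\xi\in U$.
\item Since $U$ is open, iterated difference quotients in $\xi$ at $\xi_0$ converge uniformly on $A$. They show that $x^\alpha e^{\mathrm i\xi_0^\top x}\in\overline{\mathcal D}$ for every multi-index $\alpha$, so $p(x)e^{\mathrm i\xi_0^\top x}\in\overline{\mathcal D}$ for every polynomial $p$.
\item Given $g\in C(A)$, the function $e^{-\mathrm i\xi_0^\top x}g$ is continuous. By Stone--Weierstrass it is a uniform limit of polynomials $p_k$. Then $p_ke^{\mathrm i\xi_0^\top x}\to g$ uniformly, because $|e^{\mathrm i\xi_0^\top x}|=1$.
\end{enumerate}
This gives density. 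Taking real parts then completes the proof of Lemma~\ref{lem:average}. I expect the delicate points to be the uniform-on-$A$ justification of (a) and (b), and the bookkeeping showing that $\lambda_\psi(x_0)$ is exactly the minimizer of the spectral form $Q$ in Step~1.
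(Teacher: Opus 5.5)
Your proof is correct, and your Step 1 coincides with the paper's. Both identify $\lambda_\psi(x_0)$ as the minimizer over $\Lambda_n$ of the spectral form $\int|Q_n|^2\,d\nu$, and both bound it by the nearest-neighbour competitor $2\psi(\|x_{i_n}-x_0\|)\to0$.

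After that the routes differ. The paper first proves pointwise convergence $Q_n(\xi)\to0$ for every $\xi\in\operatorname{supp}(\nu)$, by an equi-Lipschitz/contradiction argument that uses the $\ell_1$ bound and $\sup_{x\in A}\|x\|<\infty$. It then proves density of $\operatorname{span}\{e^{\mathrm i\xi^\top x}:\xi\in U\}$ in $C(A;\mathbb C)$ by duality: an annihilating measure has a real-analytic Fourier transform vanishing on the open set $U$, hence vanishing identically, so it is zero by Fourier uniqueness.

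You instead obtain convergence on the functions $g_h$ by Cauchy--Schwarz and pass to the sup-norm closure using the $\ell_1$ bound. You then build density constructively. Ball averages converge uniformly to exponentials, which is the same Lipschitz-in-frequency estimate the paper uses for equi-Lipschitzness. Iterated difference quotients produce $x^\alpha e^{\mathrm i\xi_0^\top x}$, and Stone--Weierstrass finishes. Your steps (b)--(c) are the primal counterpart of the paper's analyticity argument, since the derivatives of $\widehat\mu$ at $\xi_0$ are exactly the moments $\int(\mathrm i x)^\alpha e^{\mathrm i\xi_0^\top x}\mu(dx)$.

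What each buys: your version avoids Hahn--Banach, Riesz representation and Fourier uniqueness, relying only on Stone--Weierstrass. The paper's version is shorter and yields the cleaner intermediate fact $Q_n\to0$ on all of $\operatorname{supp}(\nu)$.

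One small repair is needed. Under Assumption~\ref{ass:miss}(iv), $\nu$ may be infinite near the origin (e.g.\ for the power variogram). So $g_h$ is well defined only for $h\in L^1(\nu)\cap L^2(\nu)$, not for every $h\in L^2(\nu)$. In (a) you should therefore use balls whose closures avoid $0$, so that $0<\nu(B)<\infty$. This is harmless: $\int(1\wedge\|\xi\|^2)\,d\nu<\infty$ makes $\nu$ finite away from the origin. For (b) it suffices to fix $\xi_0\in U\setminus\{0\}$ together with a small neighbourhood of it inside $U\setminus\{0\}$.
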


We now prove Theorem \ref{thm:miss}.

\begin{proof}
 We prove the conclusion in two steps.
 
\noindent
\textbf{(Step 1)} Define 
\begin{align*}
\Delta_n &:=\sup_{\omega\in\Omega}
\left|\sum_{i=1}^n
\lambda_{\psi,i}(x_0)
d^2\{S(x_i),\omega\}-d^2(S(x_0),\omega)\right|.
\end{align*}
In this step, we will show 
\begin{equation}\label{lim:FR1}
\lim_{n \to \infty}\Delta_n= 0\quad \text{almost surely}.
\end{equation}
For $\omega \in \Omega$, define $g_\omega(x)=d^2(S(x),\omega)$. By the triangle inequality, we have $|g_\omega(x)-g_\omega(y)| \leq 2D_\Omega d\{S(x),S(y)\}$, where $D_\Omega$ denotes the diameter of $(\Omega,d)$, which is finite by the boundedness of $(\Omega,d)$. Since $A$ is compact and the sample path $x \mapsto S(x)$ is almost surely continuous, the family $\mathcal{G}_S=\{g_\omega: \omega\in \Omega\}$ is almost surely uniformly bounded and equicontinuous. By the Arzel\`a-Ascoli theorem, we then have that $\mathcal G_S$ is relatively compact in $C(A)$ almost surely. 

Let
\[
T_ng=\sum_{i=1}^n\lambda_{\psi,i}(x_0)g(x_i),
\qquad g\in C(A).
\]
Applying Lemma \ref{lem:average}, $T_ng\to g(x_0)$ for every $g\in C(A)$.
Furthermore, the operator norm of $T_n$ is uniformly bounded with respect to $n$, that is, $\sup_n\|T_n\|=\sup_n\sum_{i=1}^n|\lambda_{\psi,i}(x_0)|<\infty$.
Consequently, the uniform boundedness of the operators $T_n$
and their pointwise convergence imply uniform convergence on
every relatively compact subset of $C(A)$. Therefore, 
\begin{align*}
\Delta_n 
&=
\sup_{g\in\mathcal G_S}
\left|T_ng-g(x_0)\right|
\to 0\quad \text{almost surely}
\end{align*}
and hence we obtain \eqref{lim:FR1}. 

%\textcolor{red}{
%\noindent
%\textbf{(Step 2)} In this step, we will show 
%\begin{equation}\label{lim:FR2}
%\mathbb E\left[
%d_{\mathrm{FR}}^2(\hat{S}_{\psi}^o(x_0),S(x_0))
%\right] \to  0\quad \text{as $n \to \infty$}.
%\end{equation}
%Define the nugget-free predictor based on the working variogram by
%\begin{equation}
%    \hat{S}_{\psi}^o(x_0)
%    =
%    \arg\min_{f \in \mathcal{P}(D)}
%    \sum_{i=1}^n
%    \lambda_{\psi,i}(x_0)
%    d_{\mathrm{FR}}^2\{S(x_i),f\}.
%\end{equation}
%Then we have
%\begin{align*}
%d_{\mathrm{FR}}^2(\hat{S}_{\psi}^o(x_0), S(x_0))
%&\leq \sum_{i=1}^n
%\lambda_{\psi,i}(x_0)
%d_{\mathrm{FR}}^2\{\hat{S}_{\psi}^o(x_0),S(x_i)\} + \Delta_n\\
%&\leq  \sum_{i=1}^n
%\lambda_{\psi,i}(x_0)
%d_{\mathrm{FR}}^2\{S(x_0),S(x_i)\} + \Delta_n
%\leq 2\Delta_n. 
%\end{align*}
%Then, \eqref{lim:FR1} yields $\lim_{n \to \infty}d_{\mathrm{FR}}^2(\hat{S}_{\psi}^o(x_0), S(x_0)) = 0$ almost surely. Since $d_{\mathrm{FR}}^2(\hat{S}_{\psi}^o(x_0),S(x_0))\le\pi^2$,
%the dominated convergence theorem yields \eqref{lim:FR2}.
%}

\noindent
\textbf{(Step 2)} In this step, we complete the proof. Define 
\[
\Delta'_n = \sup_{\omega \in \Omega}\left|\sum_{i=1}^n
\lambda_{\psi,i}(x_0)
d^2(S(x_i),\omega)-\sum_{i=1}^n
\lambda_{\psi,i}(x_0)
d^2(Y_i,\omega)\right|. 
\]
Observe that 
\begin{align}
E\left[\Delta'_n\right]
&\leq 2D_\Omega\sum_{i=1}^n
|\lambda_{\psi,i}(x_0)|E\{d(Y_i,S(x_i))\}
\leq 2D_\Omega ME\{d^2(Y_i,S(x_i))\}^{1/2} \nonumber \\
&=2D_\Omega M\tau_n \to 0\quad \text{as $n \to \infty$}, \label{lim:FR3}
\end{align}
where $M = \sup_{n \ge1}\sum_{i=1}^n |\lambda_{\psi, i}(x_0)| < \infty$. Then, \eqref{lim:FR1} and \eqref{lim:FR3} yield
\begin{align}
d_{}^2(\hat{S}_{\psi}(x_0), S(x_0))
&\leq \sum_{i=1}^n
\lambda_{\psi,i}(x_0)
d_{}^2\{Y_i,\hat{S}_{\psi}(x_0)\} + \Delta'_n + \Delta_n \nonumber \\
&\leq \sum_{i=1}^n
\lambda_{\psi,i}(x_0)
d_{}^2\{Y_i,S(x_0)\} + \Delta'_n + \Delta_n \nonumber \\
&\leq 2(\Delta'_n + \Delta_n) \stackrel{p}{\to} 0 \quad \text{as $n \to \infty$}. \label{lim:FR4}
\end{align}
For the second inequality, we used the optimality of $\hat{S}_{\psi}(x_0)$. Then, for any $\varepsilon>0$, \eqref{lim:FR4} yields
\begin{align*}
E\left[d^2\{\hat{S}_{\psi}(x_0), S(x_0)\}\right]
&\leq \varepsilon + D_\Omega^2 P\left(d^2\{\hat{S}_{\psi}(x_0), S(x_0)\}>\varepsilon\right) \to \varepsilon \quad \text{as $n \to \infty$}. 
\end{align*}
Therefore, for every $\varepsilon>0$,
$
\limsup_{n\to\infty}
E\left[
d^2\{\hat{S}_{\psi}(x_0),S(x_0)\}
\right]
\leq \varepsilon.
$
Since the expectation is nonnegative and $\varepsilon>0$ is arbitrary, letting $\varepsilon\downarrow0$ yields
$
\lim_{n\to\infty}
E\left[
d^2\{\hat{S}_{\psi}(x_0),S(x_0)\}
\right]
=0,
$
which completes the proof.
\end{proof}

\subsection{Proof of Remark \ref{rmk:without_bounded}}
\begin{proof}[Proof of Remark \ref{rmk:without_bounded}]
Define the nugget-free predictor based on the working variogram by
\begin{equation}
    \hat{S}_{\psi}^o(x_0)
    =
    \arg\min_{\omega\in\Omega}
    \sum_{i=1}^n
    \lambda_{\psi,i}(x_0)
    d^2\{S(x_i),\omega\}.
\end{equation}
We first show that
$
    E[d ^2\{\hat{S}_{\psi}^o(x_0), S(x_0)\}] \to 0.
$
Fix a sample path of $S$ that is continuous on $A$. 

Set
$
    U(x)=\Psi\{S(x)\}\in\mathcal H.
$
Since $\Psi$ is an isometry and the sample path of $S$ is continuous,
$U:A\to\mathcal H$ is continuous. Hence, since $A$ is compact,
$U(A)$ is compact. Define
\begin{equation}
    z_n
    =
    \sum_{i=1}^n
    \lambda_{\psi,i}(x_0)U(x_i).
\end{equation}
Fix $\varepsilon>0$. Since $U(A)$ is compact, there exists a
finite-dimensional subspace $\mathcal H_\varepsilon\subset\mathcal H$
with orthogonal projection
$P_\varepsilon:\mathcal H\to\mathcal H_\varepsilon$
such that
\begin{equation}
    \sup_{x\in A}
    \|U(x)-P_\varepsilon U(x)\|_{\mathcal H}
    <\varepsilon.
    \label{eq:approx_U}
\end{equation}
Since $\mathcal H_\varepsilon$ is finite-dimensional,
$P_\varepsilon U(x)$ has finitely many scalar coordinate functions,
each of which is continuous on $A$.
Applying Lemma~\ref{lem:average} to each coordinate yields
\begin{equation}
    \|P_\varepsilon z_n-P_\varepsilon U(x_0)\|_{\mathcal H}
    \to0.
    \label{eq:conv_P_e}
\end{equation}
Moreover, by \eqref{eq:approx_U}, the $\ell_1$ stability of the weights,
and the linearity of $P_\varepsilon$, we have
\begin{align}
    \|(I-P_\varepsilon)z_n
      -(I-P_\varepsilon)U(x_0)\|_{\mathcal H}
    &\le
    \sum_{i=1}^n
    |\lambda_{\psi,i}(x_0)|
    \|(I-P_\varepsilon)U(x_i)\|_{\mathcal H}
    +
    \|(I-P_\varepsilon)U(x_0)\|_{\mathcal H}
    \\
    &\le
    (M+1)\varepsilon,
\end{align}
where $M = \sup_{n \ge1}\sum_{i=1}^n |\lambda_{\psi, i}(x_0)| < \infty$. Combining this with \eqref{eq:conv_P_e}, we obtain
\begin{equation}
    \limsup_{n\to\infty}
    \|z_n-U(x_0)\|_{\mathcal H}
    \le
    (M+1)\varepsilon.
\end{equation}
Since $\varepsilon>0$ is arbitrary, it follows that $z_n\to U(x_0)$ as $n \to \infty$. %\label{eq:zn_conv}

By an argument identical to that used in Lemma~\ref{lem:alt_rep}, and the fact that the metric projection onto a nonempty closed convex subset
of a Hilbert space is $1$-Lipschitz, we obtain
\begin{equation}
    d\{\hat S_{\psi}^o(x_0),S(x_0)\}
    \le
    \|z_n-U(x_0)\|_{\mathcal H}
    \to0.
\end{equation}
The above argument holds for every continuous sample path, and such sample paths occur with probability one. Hence, we have $d\{\hat S_{\psi}^o(x_0),S(x_0)\} \to 0$ a.s.
Moreover, since the kriging weights sum to one and $\Psi$ is an isometry,
\begin{align}
    d\{\hat S_{\psi}^o(x_0),S(x_0)\}
    &\le
    \|z_n-U(x_0)\|_{\mathcal H} 
    =
    \left\|
    \sum_{i=1}^n
    \lambda_{\psi,i}(x_0)
    \{U(x_i)-U(x_0)\}
    \right\|_{\mathcal H} \\
    &\le
    M\sup_{x\in A}d\{S(x),S(x_0)\}.
\end{align}
Hence,
\begin{equation}
    d^2\{\hat S_{\psi}^o(x_0),S(x_0)\}
    \le
    M^2\sup_{x\in A}d^2\{S(x),S(x_0)\}.
\end{equation}
The right-hand side is integrable by the moment condition in
Remark~\ref{rmk:without_bounded}. 
Therefore, the dominated convergence theorem yields
$
    E[
    d^2\{\hat S_{\psi}^o(x_0),S(x_0)\}
    ]
    \to0.
$

Next, we show that $E[d^2\{\hat{S}_{\psi}^o(x_0), \hat{S}_{\psi}(x_0)\}] \to 0$.
By an argument identical to that used in Lemma~\ref{lem:alt_rep},
and the $1$-Lipschitz property of the metric projection,
\begin{align}
    d\{\hat{S}_{\psi}^o(x_0),\hat{S}_{\psi}(x_0)\}
    &\le
    \left\|
    \sum_{i=1}^n
    \lambda_{\psi,i}(x_0)
    \left[
    \Psi\{S(x_i)\}-\Psi(Y_i)
    \right]
    \right\|_{\mathcal H} \\
    &\le
    \sum_{i=1}^n
    |\lambda_{\psi,i}(x_0)|
    d\{S(x_i),Y_i\},
\end{align}
where the second inequality follows from the fact that $\Psi$ is an isometry. 
Therefore, noting that $E\{d^2(S(x_i),Y_i)\}=\tau_n^2$,
Minkowski's inequality yields
\begin{align}
    \left[
    E\left\{
    d^2\{\hat{S}_{\psi}^o(x_0),\hat{S}_{\psi}(x_0)\}
    \right\}
    \right]^{1/2}
    &\le
    \sum_{i=1}^n
    |\lambda_{\psi,i}(x_0)|
    \left[
    E\{d^2(S(x_i),Y_i)\}
    \right]^{1/2} \le
    M\tau_n.
\end{align}
Hence, $E\left[
    d^2\{\hat{S}_{\psi}^o(x_0),\hat{S}_{\psi}(x_0)\}
    \right]
    \le
    M^2\tau_n^2
    \to0$. Finally, by Minkowski's inequality,
\begin{align}
    \left[
    E\!\left\{
    d^2\{\hat{S}_{\psi}(x_0),S(x_0)\}\!
    \right\}
    \right]^{1/2}
    &\!\!\le \!
    \left[
    E\!\left\{
    d^2\{\hat{S}_{\psi}(x_0),\hat{S}_{\psi}^o(x_0)\}\!
    \right\}
    \right]^{1/2}\!\!
    +\!
    \left[
    E\!\left\{
    d^2\{\hat{S}_{\psi}^o(x_0),S(x_0)\}\!
    \right\}
    \right]^{1/2}
    \!\!\to0.
\end{align}
Thus, $E\left[
    d^2\{\hat{S}_{\psi}(x_0),S(x_0)\}
    \right]
    \to0$, which completes the proof.
\end{proof}

\subsection{Proof of Remark \ref{rmk:unif}}\label{sec:rmk_unif}

For the proof of Remark \ref{rmk:unif}, we use the following lemma, which is a uniform version of
Lemma~\ref{lem:average}. Its proof is deferred to Section~\ref{sec:average_2}.

\begin{lem}\label{lem:average_2}
For $x_0 \in A$ and $g\in C(A)$, define
\begin{equation}
    T_{n,x_0}g
    :=
    \sum_{i=1}^n
    \lambda_{\psi,i}(x_0)g(x_i).
\end{equation}
Under the assumptions of Theorem~\ref{thm:miss} and the additional
conditions stated in Remark~\ref{rmk:unif}, we have
\begin{equation}
    \sup_{x_0\in A}
    |T_{n,x_0}g-g(x_0)|
    \to0
    \qquad\text{as } n\to\infty
\end{equation}
for every $g\in C(A)$.
\end{lem}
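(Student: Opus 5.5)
The plan is to reduce the uniform statement to an equicontinuity argument. Lemma~\ref{lem:average} already gives pointwise convergence in $x_0$, so what must be added is control that is uniform in $x_0$. Fix $g\in C(A)$ and set
\[
M^*=\sup_{n}\sup_{x_0\in A}\sum_{i=1}^n|\lambda_{\psi,i}(x_0)|<\infty .
\]
Since $A$ is compact, $g$ is uniformly continuous on $A$. I will split $T_{n,x_0}g-g(x_0)$ into a local part, coming from sampling locations near $x_0$, and a far part. This mirrors how I expect Lemma~\ref{lem:average} itself to be proved: first establish reproduction for a dense class of test functions built from the working variogram, then extend to all of $C(A)$ by density together with $\ell_1$ stability.

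\textbf{Step 1 (dense class, uniformly).} Let $\mathcal F$ be the span of the constants and the functions $x\mapsto W(x,z)$ for $z\in A$. By the spectral condition in Assumption~\ref{ass:miss}(iv), the support of $\nu$ contains an open set, so this class is dense in $C(A)$; I take this density fact from the proof of Lemma~\ref{lem:average}.

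For $h=W(\cdot,z)$ I use the defining kriging equations for the working weights, $\Gamma_\psi\lambda_\psi+\mu\mathbf 1=\gamma_\psi(x_0)$ together with $\sum_i\lambda_{\psi,i}=1$. These should give
\[
T_{n,x_0}h-h(x_0)\to 0 \quad\text{uniformly in } x_0 .
\]
To get uniformity, I would bound this error by the kriging variance of the working model at $x_0$, that is, the power function $P_n(x_0)$. Then I would show $\sup_{x_0}P_n(x_0)\to0$. The natural route is to take as a competing weight vector the unit mass at the sampling location nearest to $x_0$. This gives
\[
P_n(x_0)\le 2\psi\Bigl(\min_i\|x_0-x_i\|\Bigr)\le 2\sup_{u\le \delta_n}\psi(u),\qquad \delta_n=\sup_{x_0\in A}\min_i\|x_0-x_i\| .
\]
The right-hand side tends to $0$ by \eqref{eq:dens_sample} and the continuity of $\psi$ at $0$, which holds because $\psi(0)=0$ and dominated convergence applies under the integrability of $1\wedge\|\xi\|^2$. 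The inequality $|T_{n,x_0}h-h(x_0)|\lesssim \sqrt{P_n(x_0)}$ times a native-space norm of $h$ is the Cauchy--Schwarz step in the reproducing-kernel space associated with $W$. Constants are reproduced exactly because the weights sum to one.

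\textbf{Step 2 (density extension).} Given $\varepsilon>0$, choose $f\in\mathcal F$ with $\|g-f\|_\infty<\varepsilon$. Then
\[
\sup_{x_0}|T_{n,x_0}g-g(x_0)|\le (M^*+1)\varepsilon+\sup_{x_0}|T_{n,x_0}f-f(x_0)| .
\]
Step 1 makes the last term vanish as $n\to\infty$, and letting $\varepsilon\downarrow0$ concludes the argument.

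The main obstacle is Step 1: justifying the power-function bound uniformly in $x_0$ for an intrinsic (conditionally negative definite) working variogram. This requires the semi-norm native space attached to $W$ via its spectral measure, and checking that $W(\cdot,z)$, or suitable differences $W(\cdot,z)-W(\cdot,z')$, lie in it with finite norm. If this proves delicate, my fallback is to use as the dense class the trigonometric functions $\cos(\xi^\top x)$ and $\sin(\xi^\top x)$ for $\xi$ in the open support set. Their reproduction error is controlled by the same power function through the spectral representation. The Stone--Weierstrass theorem, applied on the compact set $A$, then supplies density, since frequencies in an open set separate points.
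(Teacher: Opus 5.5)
Your strategy can be completed, and it differs from the paper's. As written, though, three steps do not go through.

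\textbf{Gaps in the main route.} First, the density of $\operatorname{span}\{1,W(\cdot,z):z\in A\}$ cannot be imported from Lemma~\ref{lem:average}. That proof establishes density of $\operatorname{span}_{\mathbb C}\{e^{\mathrm{i}\xi^\top x}:\xi\in U\}$ for an open $U\subset\operatorname{supp}(\nu)$, using Hahn--Banach/Riesz and real analyticity of the Fourier transform of a measure on compact $A$. It says nothing about variogram translates, so your density claim is unproved.

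Second, single translates are the wrong test functions. The kriging equations only give $T_{n,x_0}W(\cdot,x_j)-W(x_0,x_j)=-\mu$ at sampling points, where $\mu$ is the Lagrange multiplier, and you have no control on $\mu$. The Cauchy--Schwarz factor for $W(\cdot,z)$ is $\int|e^{-\mathrm{i}\xi^\top z}|^2\nu(d\xi)=\nu(\mathbb R^p)$. This is infinite for, e.g., the power variogram, which the paper lists under Assumption~\ref{ass:miss}(iv).

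\textbf{How to repair it.} Both problems are fixed along the line you hint at. Put $Q_{n,x_0}(\xi)=e^{\mathrm{i}\xi^\top x_0}-\sum_i\lambda_{\psi,i}(x_0)e^{\mathrm{i}\xi^\top x_i}$, so the power function is $\int|Q_{n,x_0}|^2d\nu$. For $h=W(\cdot,z)-W(\cdot,z')$, the spectral representation and $\sum_i\lambda_{\psi,i}(x_0)=1$ give
\[
T_{n,x_0}h-h(x_0)=\int_{\mathbb R^p}\mathrm{Re}\bigl[Q_{n,x_0}(\xi)\{e^{-\mathrm{i}\xi^\top z}-e^{-\mathrm{i}\xi^\top z'}\}\bigr]\nu(d\xi).
\]
By Cauchy--Schwarz, $|T_{n,x_0}h-h(x_0)|\le 2\{W(z,z')\sup_{u\le\delta_n}\psi(u)\}^{1/2}$ uniformly in $x_0$. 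No native-space theory is needed.

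Density of $\operatorname{span}\{1,W(\cdot,z)-W(\cdot,z'):z,z'\in A\}$ then needs its own argument. Suppose a finite signed measure $\varrho$ on $A$ annihilates this class. Then $\varrho(A)=0$ and $z\mapsto\int W(x,z)\varrho(dx)$ is constant on $A$, so $\int\!\int W\,d\varrho\,d\varrho=0$. By Fubini and $\varrho(A)=0$, this double integral equals $-\int|\widehat\varrho|^2d\nu$. Hence $\widehat\varrho=0$ $\nu$-a.e., so by continuity on $\operatorname{supp}(\nu)\supset U$, and the paper's analyticity argument gives $\varrho=0$.

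\textbf{Your fallback.} It is essentially the paper's route, but two of its steps fail as stated.
\begin{itemize}
\item The power function controls $Q_{n,x_0}$ only in $L^2(\nu)$, not at a fixed frequency. The paper obtains $\sup_{x_0}|Q_{n,x_0}(\xi)|\to0$ for $\xi\in\operatorname{supp}(\nu)$ from two facts. One is equi-Lipschitz continuity in $\xi$, with constant $C_A(M^*+1)$ and $C_A=\sup_{x\in A}\|x\|$, which uses $\ell_1$ stability. The other is that balls around support points have positive $\nu$-mass.
\item Stone--Weierstrass does not apply. Even after adjoining constants, $\operatorname{span}\{\cos(\xi^\top x),\sin(\xi^\top x):\xi\in U\}$ is not an algebra, because products create frequencies $\xi\pm\eta$ that need not lie in $U$.
\end{itemize}

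\textbf{Comparison with the paper.} The repaired main route skips the paper's equi-Lipschitz step entirely. It gives an explicit uniform rate on variogram differences, and uses $\ell_1$ stability only in the final density extension. The price is the extra spectral step in the density argument, which still rests on the same analyticity fact the paper uses.
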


We now prove Remark \ref{rmk:unif}.

\begin{proof}[Proof of Remark~\ref{rmk:unif}]
We prove the conclusion in two steps.

\noindent
\textbf{(Step 1)}
For $x_0\in A$, define
\[
\Delta_n(x_0)
=
\sup_{\omega\in\Omega}
\left|
\sum_{i=1}^n
\lambda_{\psi,i}(x_0)
d^2\{S(x_i),\omega\}
-
d^2\{S(x_0),\omega\}
\right|.
\]
In this step, we will show that
\begin{equation}
E\left[
\sup_{x_0\in A}
\Delta_n(x_0)
\right]
\to0.
\label{eq:uniform_delta_latent}
\end{equation}
Fix a sample path of $S$ that is continuous on $A$. For each
$\omega\in\Omega$, define
$
g_\omega(x)
=
d^2\{S(x),\omega\}
$
and let
$
\mathcal G_S
=
\{g_\omega:\omega\in\Omega\}.
$
Since $(\Omega,d)$ is bounded, we have
$
\sup_{\omega\in\Omega}
\|g_\omega\|_\infty
\le
D_\Omega^2,
$
where $D_\Omega$ denotes the diameter of $(\Omega,d)$, which is finite by the boundedness of $(\Omega,d)$.
Moreover, by the triangle inequality,
$
|g_\omega(x)-g_\omega(y)|
\le
2D_\Omega d\{S(x),S(y)\},
$
for all $x, y \in A$.
Since $A$ is compact and the sample path $x\mapsto S(x)$ is
continuous, the family $\mathcal G_S$ is uniformly bounded and
equicontinuous. Hence, by the Arzel\`a--Ascoli theorem,
$\mathcal G_S$ is relatively compact in $C(A)$.

Define the linear operator
$
\mathcal R_n:C(A)\to\ell^\infty(A)
$
by
$
(\mathcal R_ng)(x_0)
=
T_{n,x_0}g-g(x_0),
x_0\in A.
$
Here, $\ell^\infty(A)$ denotes the space of bounded real-valued functions on $A$, equipped with the supremum norm $\|f\|_\infty=\sup_{x\in A}|f(x)|$, and 
$T_{n, x_0}$ is the operator defined in Lemma \ref{lem:average_2}.
Applying Lemma~\ref{lem:average_2}, we have
\[
\|\mathcal R_ng\|_\infty
=
\sup_{x_0\in A}
|T_{n,x_0}g-g(x_0)|
\to0
\]
for every $g\in C(A)$.

We now show that this convergence is uniform over $\mathcal G_S$.
Let
\[
M
=
\sup_{n\geq1}
\sup_{x_0\in A}
\sum_{i=1}^n
|\lambda_{\psi,i}(x_0)|
<
\infty.
\]
The uniform $\ell_1$ stability of the weights implies that
\[
\|\mathcal R_n\|
\le
\sup_{x_0\in A}
\left\{
\sum_{i=1}^n
|\lambda_{\psi,i}(x_0)|
+1
\right\}
\le
M+1.
\]
Fix $\varepsilon>0$. Since $\mathcal G_S$ is relatively compact,
there exist $g_1,\ldots,g_N\in C(A)$ such that, for every
$g\in\mathcal G_S$,
$
\min_{1\leq j\leq N}
\|g-g_j\|_\infty
<
\varepsilon/\{2(M+1)\}.
$
Consequently,
$
\sup_{g\in\mathcal G_S}
\|\mathcal R_ng\|_\infty
\le
\varepsilon/2
+
\max_{1\leq j\leq N}
\|\mathcal R_ng_j\|_\infty.
$
Since $N$ is finite and
$\|\mathcal R_ng_j\|_\infty\to0$ for every $j$, it follows that
$
\sup_{g\in\mathcal G_S}
\|\mathcal R_ng\|_\infty
\to0.
$
Therefore,
\[
\sup_{x_0\in A}
\Delta_n(x_0)
=
\sup_{\omega\in\Omega}
\|\mathcal R_ng_\omega\|_\infty
\to0
\qquad\text{almost surely}.
\]
In addition, we have
$
0
\le
\sup_{x_0\in A}
\Delta_n(x_0)
\le
(M+1)D_\Omega^2.
$
Thus, the dominated convergence theorem yields
\eqref{eq:uniform_delta_latent}.

\noindent
\textbf{(Step 2)}
In this step, we complete the proof. For $x_0\in A$, define
\[
\Delta_n'(x_0)
=
\sup_{\omega\in\Omega}
\left|
\sum_{i=1}^n
\lambda_{\psi,i}(x_0)
d^2\{S(x_i),\omega\}
-
\sum_{i=1}^n
\lambda_{\psi,i}(x_0)
d^2(Y_i,\omega)
\right|.
\]
For any $a,b,\omega\in\Omega$,
\[
\begin{aligned}
|d^2(a,\omega)-d^2(b,\omega)|
&=
|d(a,\omega)-d(b,\omega)|
\{d(a,\omega)+d(b,\omega)\} 
\le
2D_\Omega d(a,b).
\end{aligned}
\]
It follows that
\[
\begin{aligned}
E\{\Delta_n'(x_0)\}
&\le
2D_\Omega
\sum_{i=1}^n
|\lambda_{\psi,i}(x_0)|
E[d\{S(x_i),Y_i\}] \\
&\le
2D_\Omega
\sum_{i=1}^n
|\lambda_{\psi,i}(x_0)|
\left[
E\{d^2(S(x_i),Y_i)\}
\right]^{1/2} \\
&\le
2D_\Omega M\tau_n.
\end{aligned}
\]
Therefore,
\begin{equation}
\sup_{x_0\in A}
E\{\Delta_n'(x_0)\}
\le
2D_\Omega M\tau_n
\to0.
\label{eq:uniform_delta_perturbation}
\end{equation}

For each $x_0\in A$, the optimality of
$\hat S_\psi(x_0)$ gives
\[
\begin{aligned}
d^2\{\hat S_\psi(x_0),S(x_0)\}
&\le
\sum_{i=1}^n
\lambda_{\psi,i}(x_0)
d^2\{Y_i,\hat S_\psi(x_0)\}
+
\Delta_n(x_0)
+
\Delta_n'(x_0) \\
&\le
\sum_{i=1}^n
\lambda_{\psi,i}(x_0)
d^2\{Y_i,S(x_0)\}
+
\Delta_n(x_0)
+
\Delta_n'(x_0) \\
&\le
2\{\Delta_n(x_0)+\Delta_n'(x_0)\}.
\end{aligned}
\]
Taking expectations and then the supremum over $x_0\in A$, we obtain
\[
\begin{aligned}
\sup_{x_0\in A}
E\left[
d^2\{\hat S_\psi(x_0),S(x_0)\}
\right]
&\le
2\sup_{x_0\in A}
E\{\Delta_n(x_0)\}
+
2\sup_{x_0\in A}
E\{\Delta_n'(x_0)\} \\
&\le
2E\left[
\sup_{x_0\in A}
\Delta_n(x_0)
\right]
+
2\sup_{x_0\in A}
E\{\Delta_n'(x_0)\}.
\end{aligned}
\]
The right-hand side converges to zero by
\eqref{eq:uniform_delta_latent} and
\eqref{eq:uniform_delta_perturbation}. Therefore,
\[
\sup_{x_0\in A}
E\left[
d^2\{\hat S_\psi(x_0),S(x_0)\}
\right]
\to0,
\]
which proves the result.
\end{proof}

\subsection{Proof of Lemma \ref{lem:average}}\label{sec:average}
\begin{proof}
We prove the conclusion in several steps. 

\noindent
\textbf{(Step1)} In this step, we will show 
\begin{align}
    0
    \leq
    V\{\lambda_\psi(x_0);x_0\} 
    \to0\ \text{as $n \to \infty$}.
    \label{eq:V_to_zero}
\end{align}
For $\lambda\in\Lambda_n$, define
\begin{equation}
    V(\lambda;x_0)
    =
    2\sum_{i=1}^n
    \lambda_i W(x_i,x_0)
    -
    \sum_{i=1}^n\sum_{j=1}^n
    \lambda_i\lambda_j W(x_i,x_j).
\end{equation}
The spectral representation in
Assumption~\ref{ass:miss}(iv) implies that $W$ is conditionally
negative definite. Indeed, for any finite collection of points
$z_1,\ldots,z_m\in\mathbb R^p$ and coefficients
$a_1,\ldots,a_m\in\mathbb R$
satisfying $\sum_{i=1}^m a_i=0$,
\begin{align}
    -\sum_{i=1}^m\sum_{j=1}^m
    a_i a_j W(z_i,z_j)
    &=
    \int_{\mathbb R^p}
    \left|
        \sum_{i=1}^m
        a_i e^{\mathrm{i}\xi^\top z_i}
    \right|^2
    \nu(d\xi)
    \geq0.
\end{align}
This yields the convexity of $V(\lambda;x_0)$ over $\Lambda_n$.
To see this, let $t\in[0,1]$,
$\lambda=(\lambda_1,\ldots,\lambda_n)$ and
$\mu=(\mu_1,\ldots,\mu_n)$ belong to $\Lambda_n$, and set
$\delta=\lambda-\mu$ and
$W_n=\{W(x_i,x_j)\}_{i,j=1}^n$.
Then $\sum_{i=1}^n\delta_i=0$, and hence
$\delta^\top W_n\delta\leq0$. Therefore,
\begin{align}
    &V\{t\lambda+(1-t)\mu;x_0\}
    -
    tV(\lambda;x_0)
    -
    (1-t)V(\mu;x_0) \nonumber
    =
    t(1-t)\delta^\top W_n\delta
    \leq0.
\end{align}
Moreover, conditional negative definiteness implies
$
    V(\lambda;x_0)\geq0,
$
for $\lambda\in\Lambda_n$
by applying the above inequality to the points
$x_1,\ldots,x_n,x_0$ with coefficients
$\lambda_1,\ldots,\lambda_n,-1$.
The kriging equations are the first-order optimality conditions for
minimizing $V(\lambda;x_0)$ over $\Lambda_n$, and therefore
$\lambda_\psi(x_0)$ is a minimizer of $V(\lambda;x_0)$ over
$\Lambda_n$.

Let $i_n$ be an index of a nearest sampling location to $x_0$, that is, $i_n\in
    \argmin_{1\leq i\leq n}
    \|x_i-x_0\|$ 
and let $e_{i_n}\in\Lambda_n$ denote the $i_n$th standard basis
vector in $\mathbb{R}^n$. By the optimality of $\lambda_\psi(x_0)$,
\begin{align*}
    0
    \leq
    V\{\lambda_\psi(x_0);x_0\} 
    \leq
    V(e_{i_n};x_0) 
    =
    2W(x_{i_n},x_0)
    =
    2\psi(\|x_{i_n}-x_0\|)
    \to0
\end{align*}
and hence we obtain (\ref{eq:V_to_zero}). Here, the last convergence follows from
Assumption~\ref{ass:miss}(ii) and the continuity of $\psi$ at zero.
We verify the latter property as follows.
By Assumption~\ref{ass:miss}(iv), for any $h\in\mathbb R^p$,
$
    \psi(\|h\|)
    =
    \int_{\mathbb R^p}
    \left[
        1-\cos(\xi^\top h)
    \right]
    \nu(d\xi).
$
For each fixed $\xi\in\mathbb R^p$,
$
    1-\cos(\xi^\top h)\to0
$
as $h \to 0$.
Moreover, if $\|h\|\leq1$, then
\begin{align}
    0
    \leq
    1-\cos(\xi^\top h)
    \leq
    \min\left\{
        2,\frac{1}{2}|\xi^\top h|^2
    \right\} 
    \leq
    \min\left\{
        2,\frac{1}{2}\|\xi\|^2
    \right\} 
    \leq
    2(1\wedge\|\xi\|^2).
\end{align}
The dominating function
$2(1\wedge\|\xi\|^2)$ is integrable with respect to $\nu$ by
Assumption~\ref{ass:miss}(iv). Therefore, the dominated convergence
theorem yields
$
    \psi(\|h\|)
    \to0, 
$
as $h \to 0$.
Since $\psi(0)=0$, $\psi$ is continuous at zero. 

\vspace{5pt}
\noindent
\textbf{(Step2)} In this step, we will show 
\begin{equation}
    Q_n(\xi)\to0
    \qquad
    \text{for every }
    \xi\in\operatorname{supp}(\nu).
    \label{eq:Q_pointwise}
\end{equation}
Define
\begin{equation}
    Q_n(\xi)
    =
    e^{\mathrm{i}\xi^\top x_0}
    -
    \sum_{i=1}^n
    \lambda_{\psi,i}(x_0)
    e^{\mathrm{i}\xi^\top x_i}.
\end{equation}
Since $\lambda_\psi(x_0)\in\Lambda_n$, the spectral representation
in Assumption~\ref{ass:miss}(iv) yields
\begin{equation}
    V\{\lambda_\psi(x_0);x_0\}
    =
    \int_{\mathbb R^p}
    |Q_n(\xi)|^2
    \nu(d\xi).
    \label{eq:spectral_V}
\end{equation}
Combining this with \eqref{eq:V_to_zero}, we obtain
\begin{equation}
    \int_{\mathbb R^p}
    |Q_n(\xi)|^2
    \nu(d\xi)
    \to0.
    \label{eq:Q_L2_zero}
\end{equation}
Additionally, let $M
    =
    \sup_{n\geq1}
    \sum_{i=1}^n
    |\lambda_{\psi,i}(x_0)|
    <\infty$ which is finite by Assumption~\ref{ass:miss}(v).
Since $A$ is compact and contains $x_0,x_1,\ldots,x_n$, we may define $C_A=\sup_{x\in A}\|x\|<\infty$.
Using the inequality $|e^{\mathrm{i}a}-e^{\mathrm{i}b}|\leq|a-b|$,
we have
\begin{align}
    |Q_n(\xi)-Q_n(\eta)|
    &\leq
    |(\xi-\eta)^\top x_0|
    +
    \sum_{i=1}^n
    |\lambda_{\psi,i}(x_0)|
    |(\xi-\eta)^\top x_i| \\
    &\leq
    C_A
    \left\{
        1+
        \sum_{i=1}^n
        |\lambda_{\psi,i}(x_0)|
    \right\}
    \|\xi-\eta\| \\
    &\leq
    C_A(M+1)\|\xi-\eta\|.
    \label{eq:Q_equilipschitz}
\end{align}
Thus, $\{Q_n\}_{n\geq1}$ is equi-Lipschitz.

We now show (\ref{eq:Q_pointwise}). Suppose, to the contrary, that there exist
$\xi_0\in\operatorname{supp}(\nu)$, $\varepsilon>0$, and a subsequence,
still denoted by $n$, such that
$
    |Q_n(\xi_0)|\geq\varepsilon.
$
By \eqref{eq:Q_equilipschitz}, there exists $r>0$, independent of $n$,
such that
$
    |Q_n(\xi)|
    \geq
    {\varepsilon}/{2}
$
for
$
    \xi\in B(\xi_0,r).
$
Since $\xi_0\in\operatorname{supp}(\nu)$, we have
$
    \nu\{B(\xi_0,r)\}>0.
$
Consequently,
\[
    \int_{\mathbb R^p}
    |Q_n(\xi)|^2\nu(d\xi)
    \geq
    \frac{\varepsilon^2}{4}
    \nu\{B(\xi_0,r)\},
\]
which contradicts \eqref{eq:Q_L2_zero}.
Hence, \eqref{eq:Q_pointwise} holds.

\vspace{5pt}
\noindent
\textbf{(Step3)} In this step, we complete the proof.
Let $U$ be a nonempty open subset of
$\operatorname{supp}(\nu)$, whose existence is guaranteed by
Assumption~\ref{ass:miss}(iv). Recall that $C(A)$ is the space of
continuous real-valued functions on $A$. Extend $T_n$ complex-linearly
to $C(A;\mathbb C)$, and define
\[
    \mathcal E_U
    =
    \operatorname{span}_{\mathbb C}
    \left\{
        f_\xi:\xi\in U
    \right\},
    \qquad
    f_\xi(x)=e^{\mathrm{i}\xi^\top x}.
\]
For each $\xi\in U$, \eqref{eq:Q_pointwise} implies $T_n f_\xi-f_\xi(x_0)
    =
    -Q_n(\xi)
    \to0$. 
Therefore, by linearity of $T_n$,
\begin{equation}
    T_n f\to f(x_0)
    \qquad
    \text{for every } f\in\mathcal E_U.
    \label{eq:Tn_exp}
\end{equation}

Let $C(A;\mathbb C)$ denote the space of complex-valued continuous
functions on $A$. We claim that $\mathcal E_U$ is uniformly dense in
$C(A;\mathbb C)$.
To see this, suppose that it is not dense.
By the Hahn--Banach theorem and the Riesz representation theorem,
there exists a nonzero finite complex Borel measure $\mu$ supported
on $A$ such that $\int_A
    e^{\mathrm{i}\xi^\top x}
    \mu(dx)
    =
    0$, $\xi\in U$. 
Since $A$ is compact, the Fourier transform
$
    \widehat\mu(\xi)
    =
    \int_A
    e^{\mathrm{i}\xi^\top x}
    \mu(dx)
$
is real analytic on $\mathbb R^p$.
Because it vanishes on the nonempty open set $U$, it vanishes
identically on $\mathbb R^p$.
The uniqueness theorem for Fourier transforms of finite measures then
implies $\mu=0$, a contradiction.
Thus, $\mathcal E_U$ is uniformly dense in $C(A;\mathbb C)$.

Finally, let $g\in C(A)$ be real-valued and fix $\varepsilon>0$.
Since $C(A)$ is naturally embedded in $C(A;\mathbb C)$, the density
of $\mathcal E_U$ implies that there exists $f\in\mathcal E_U$ such that
$\sup_{x\in A}|g(x)-f(x)|<\varepsilon$. Therefore,
\begin{align}
    |T_ng-g(x_0)|
    &\leq
    |T_n(g-f)|
    +
    |T_nf-f(x_0)|
    +
    |f(x_0)-g(x_0)| \\
    &\leq
    (M+1)\varepsilon
    +
    |T_nf-f(x_0)|.
\end{align}
By \eqref{eq:Tn_exp}, $\limsup_{n\to\infty}
    |T_ng-g(x_0)|
    \leq
    (M+1)\varepsilon$.
Since $\varepsilon>0$ is arbitrary, we conclude that
\[
    T_ng\to g(x_0)
    \qquad
    \text{as }n\to\infty
\]
for every $g\in C(A)$.
\end{proof}

\subsection{Proof of Lemma~\ref{lem:average_2}}
\label{sec:average_2}

\begin{proof}
We prove the conclusion in several steps.

\vspace{5pt}
\noindent
\textbf{(Step1)}
For $x_0\in A$, define
\begin{equation}
    Q_{n,x_0}(\xi)
    =
    e^{\mathrm{i}\xi^\top x_0}
    -
    \sum_{i=1}^n
    \lambda_{\psi,i}(x_0)
    e^{\mathrm{i}\xi^\top x_i}.
\end{equation}
In this step, we show that 
\begin{equation}
    \sup_{x_0\in A}
    \int_{\mathbb R^p}
    |Q_{n,x_0}(\xi)|^2
    \nu(d\xi) \to 0.
    \label{eq:sup_conv}
\end{equation}
For each $x_0\in A$, let $i_n(x_0)
    \in
    \argmin_{1\leq i\leq n}
    \|x_i-x_0\|$ and define $h_n
    =
    \sup_{x_0\in A}
    \min_{1\leq i\leq n}
    \|x_i-x_0\|$.
As in the proof of Lemma~\ref{lem:average}, the spectral
representation and the optimality of $\lambda_\psi(x_0)$ give
\begin{align}
    \int_{\mathbb R^p}
    |Q_{n,x_0}(\xi)|^2
    \nu(d\xi)
    &=
    V\{\lambda_\psi(x_0);x_0\}
    \leq
    V\{e_{i_n(x_0)};x_0\} 
    =
    2\psi\left(
        \|x_{i_n(x_0)}-x_0\|
    \right).
\end{align}
Consequently,
\begin{equation}
    \sup_{x_0\in A}
    \int_{\mathbb R^p}
    |Q_{n,x_0}(\xi)|^2
    \nu(d\xi)
    \leq
    2\omega_\psi(h_n),
    \label{eq:uniform_Q_L2}
\end{equation}
where $\omega_\psi(r)
    =
    \sup_{0\leq u\leq r}\psi(u)$. 
As shown in the proof of Lemma~\ref{lem:average},
Assumption~\ref{ass:miss}(iv) implies that $\psi$ is continuous at
zero and $\psi(0)=0$. 
Since $h_n\to0$ by
\eqref{eq:dens_sample}, it follows that
$
    \omega_\psi(h_n)\to0.
$
Combining this together with \eqref{eq:uniform_Q_L2}, we have \eqref{eq:sup_conv}.

\vspace{5pt}
\noindent
\textbf{(Step2)}
In this step,  we show that, for every
$\xi\in\operatorname{supp}(\nu)$,
\begin{equation}
    \sup_{x_0\in A}
    |Q_{n,x_0}(\xi)|
    \to0.
    \label{eq:uniform_Q_pointwise}
\end{equation}
For that purpose, 
we first show that the family
$\{Q_{n,x_0}:n\geq1,\ x_0\in A\}$ is equi-Lipschitz.
Let $M
    =
    \sup_{n\geq1}
    \sup_{x_0\in A}
    \sum_{i=1}^n
    |\lambda_{\psi,i}(x_0)|
    <\infty$ and define $C_A
    =
    \sup_{x\in A}\|x\|
    <\infty$.
For every $x_0\in A$ and $\xi,\eta\in\mathbb R^p$,
\begin{align}
    |Q_{n,x_0}(\xi)-Q_{n,x_0}(\eta)|
    &\leq
    |(\xi-\eta)^\top x_0|
    +
    \sum_{i=1}^n
    |\lambda_{\psi,i}(x_0)|
    |(\xi-\eta)^\top x_i| \\
    &\leq
    C_A(M+1)\|\xi-\eta\|.
    \label{eq:uniform_Q_equilipschitz}
\end{align}
Thus, the family
$\{Q_{n,x_0}:n\geq1,\ x_0\in A\}$ is equi-Lipschitz.

Now we prove \eqref{eq:uniform_Q_pointwise} for every $\xi \in \operatorname{supp}(\nu)$.
Suppose, to the contrary, that there exist
$\xi_0\in\operatorname{supp}(\nu)$,
$\varepsilon>0$, a subsequence $\{n_k\}_{k\geq1}$ with
$n_k\to\infty$, and points
$x_{0,n_k}\in A$ such that
$
    |Q_{n_k,x_{0,n_k}}(\xi_0)|
    \geq
    \varepsilon
$
for all $k\geq1$.
By \eqref{eq:uniform_Q_equilipschitz}, there exists $r>0$,
independent of $k$, such that
$
    |Q_{n_k,x_{0,n_k}}(\xi)|
    \geq
    {\varepsilon}/{2}
$
for
$
    \xi\in B(\xi_0,r).
$
Since $\xi_0\in\operatorname{supp}(\nu)$, we have
$
    \nu\{B(\xi_0,r)\}>0.
$
Therefore,
\[
    \sup_{x_0\in A}
    \int_{\mathbb R^p}
    |Q_{n_k,x_0}(\xi)|^2
    \nu(d\xi)
    \geq
    \int_{\mathbb R^p}
    |Q_{n_k,x_{0,n_k}}(\xi)|^2
    \nu(d\xi)
    \geq
    \frac{\varepsilon^2}{4}
    \nu\{B(\xi_0,r)\}.
\]
The right-hand side is a strictly positive constant independent of
$k$, which contradicts \eqref{eq:sup_conv}.
Hence, \eqref{eq:uniform_Q_pointwise} holds.

\vspace{5pt}
\noindent
\textbf{(Step3)}
We complete the proof in this step. Let $U$ be a nonempty open subset of
$\operatorname{supp}(\nu)$, and extend $T_{n,x_0}$ to complex-valued
functions by the same formula. As in the proof of Lemma \ref{lem:average}, define
\[
    f_\xi(x)
    =
    e^{\mathrm{i}\xi^\top x},
    \qquad
    \mathcal E_U
    =
    \operatorname{span}_{\mathbb C}
    \{f_\xi:\xi\in U\}.
\]
For every $\xi\in U$, we have
\[
    \sup_{x_0\in A}
    |T_{n,x_0}f_\xi-f_\xi(x_0)|
    =
    \sup_{x_0\in A}
    |Q_{n,x_0}(\xi)|
    \to0.
\]
Every $f\in\mathcal E_U$ can be written as a finite linear
combination
$
f=\sum_{j=1}^m c_jf_{\xi_j}
$
with $\xi_j\in U$.
Hence, by linearity and the triangle inequality, for every $f\in\mathcal E_U$,
\begin{equation}
    \sup_{x_0\in A}
    |T_{n,x_0}f-f(x_0)|
    \to0.
    \label{eq:uniform_Tn_exp}
\end{equation}

Let $g\in C(A)$ and fix $\varepsilon>0$.
As shown in the proof of Lemma~\ref{lem:average},
$\mathcal E_U$ is uniformly dense in $C(A;\mathbb C)$.
Therefore, we can
choose $f\in\mathcal E_U$ such that $\sup_{x\in A}|g(x)-f(x)|<\varepsilon$.
Then
\begin{align}
    \sup_{x_0\in A}
    |T_{n,x_0}g-g(x_0)|
    &\leq
    \sup_{x_0\in A}
    |T_{n,x_0}(g-f)| 
    +
    \sup_{x_0\in A}
    |T_{n,x_0}f-f(x_0)|
    +
    \sup_{x_0\in A}
    |f(x_0)-g(x_0)| \\
    &\leq
    (M+1)\varepsilon
    +
    \sup_{x_0\in A}
    |T_{n,x_0}f-f(x_0)|.
\end{align}
By \eqref{eq:uniform_Tn_exp}, $\limsup_{n\to\infty}
    \sup_{x_0\in A}
    |T_{n,x_0}g-g(x_0)|
    \leq
    (M+1)\varepsilon$.
Since $\varepsilon>0$ is arbitrary, we conclude that
\[
    \sup_{x_0\in A}
    |T_{n,x_0}g-g(x_0)|
    \to0.
\]
\end{proof}

\end{document}